\theoremstyle{definition}
\newtheorem{theorem}{Theorem}[section]
\newtheorem{lemma}[theorem]{Lemma}
\newtheorem{prop}[theorem]{Proposition}
\newtheorem{corollary}[theorem]{Corollary}
\newtheorem{definition}[theorem]{Definition}
\newtheorem{example}[theorem]{Example}
\newcommand{\nn}{\mathbb{N}}
\newcommand{\maj}{\sqsubseteq^{\mathtt{maj}}}
\newcommand{\smy}{\sqsubseteq^{\mathtt{min}}}
\newcommand{\lemaj}{\le^{\mathtt{maj}}}
\newcommand{\lesmy}{\le^{\mathtt{min}}}
\newcommand{\pow}{\mathbb{P}_{\text{f}}}
\newcommand{\refl}{\mathcal{R}}
\newcommand{\powi}{\mathbb{P}}
\newcommand{\card}{\mathtt{card}}
\newcommand{\comp}{\mathtt{comp}}
\newcommand{\proj}{\mathtt{proj}}
\newcommand{\CNF}{\text{CNF}}
\newcommand{\fastf}{\mathscr{F}}
\newcommand{\hri}{\hookrightarrow}
\newcommand{\xhri}{\xhookrightarrow}
\begin{document}
	
	\title{Complexity of controlled bad sequences over finite sets of $\mathbb{N}^d$}         

	
	\author{A. R. Balasubramanian}
	\authornote{Work done when the author was an intern at LSV, ENS-Saclay Paris}          
	\orcid{0000-0002-7258-5445}             
	\affiliation{
		\institution{Technical University of Munich}            
		\city{Munich}
		\country{Germany}                    
	}
	\email{bala.ayikudi@tum.de}          

	\begin{abstract}
		We provide upper and lower bounds for the length of controlled bad sequences over the majoring and the minoring orderings of finite sets of $\mathbb{N}^d$. The results are obtained by bounding the length of such sequences by functions from the Cichon hierarchy. This allows us to translate these results to bounds over the fast-growing complexity classes. 
		
		The obtained bounds are proven to be tight for the majoring ordering, which solves a problem left open by Abriola, Figueira and Senno (Theor. Comp. Sci, Vol. 603). Finally, we use the results on controlled bad sequences to prove upper bounds for the emptiness problem of some classes of automata. 
	\end{abstract}

	
	
	
	\begin{CCSXML}
		<ccs2012>
		<concept>
		<concept_id>10003752.10003777.10003778</concept_id>
		<concept_desc>Theory of computation~Complexity classes</concept_desc>
		<concept_significance>500</concept_significance>
		</concept>
		<concept>
		<concept_id>10003752.10003777.10003778</concept_id>
		<concept_desc>Theory of computation~Complexity classes</concept_desc>
		<concept_significance>500</concept_significance>
		</concept>
		</ccs2012>
	\end{CCSXML}

	\ccsdesc[500]{Theory of computation~Complexity classes}
	\ccsdesc[300]{Theory of computation~Program verification}

	\keywords{well-quasi orders, controlled bad sequences, majoring and minoring ordering}  

	\maketitle

	\section{Introduction}
	
	A well-quasi order (wqo) over a set $A$ is a reflexive and transitive relation $\le_A$ such that every infinite sequence $x_0,x_1,x_2,\dots$ over $A$ has an increasing pair $x_i \le_A x_j$ with $i < j$. 
	A normed wqo (nwqo) is a wqo $(A,\le_A)$ which has a \emph{norm} function $|\cdot| : A \to \nn$ such that the pre-image of $n$ under $|\cdot|$ is finite for every $n$. 	
	A sequence over a wqo is called a bad sequence if it contains no increasing pair. Hence, all bad sequences over a well-quasi order are necessarily finite.
	
	Well-quasi orders are an important tool in logic, combinatorics and computer science as evidenced by their applications in term-rewriting systems \cite{Rewrite}, algorithms \cite{Algorithms, Embedding} and verification of infinite state systems \cite{wqts,wqts1,wsts}. Indeed, well-quasi orders form the backbone for the ubiquitous \emph{well-structured transition systems} (wsts) \cite{wsts,wqts}, whose coverability problem is shown to be decidable thanks to well-quasi orders. 
	
	In recent years, significant effort has been put in to understand the 
	complexity of the coverability procedure for various well-structured transition systems (See \cite{LICS,ICALP,Linearise,DataNets,LCS} and also \cite{BeyondElem} for a catalogue of many problems). 
	The key idea behind proving upper bounds for the coverability algorithm is the following: For a given class of wsts, the running time of the  coverability procedure for that class can be bounded by the length of \emph{controlled bad sequences} of the underlying normed well-quasi order (See definition \ref{def:controlledseq} for a formal definition of controlled bad sequences). 
	Intuitively, for a function $g$ and a number $n$, a sequence $x_0,x_1,x_2,\dots,x_l$ is called a $(g,n)$-controlled bad sequence if $|x_0| \le n, |x_1| \le g(n), |x_2| \le g(g(n))$ and so on. A simple application of Konig's lemma can be used to show that for every $n$, there is a $(g,n)$-controlled bad sequence of maximum length. Hence, for every function $g$ we can define a \emph{length function} which maps a number $n$ to the length of the longest $(g,n)$-controlled bad sequence. 
	
	The main observation made in \cite{LICS,ICALP,BeyondElem,Linearise,LCS,DataNets} is that, for various classes of well-structured systems, an upper bound on the running time of the coverability procedure could be obtained by bounding the length function of some specific $g$ over the underlying wqo of that class. Motivated by this, upper bounds on the length of controlled bad sequences have been obtained for various well-quasi orders: The product ordering over $\nn^d$ (\cite{LICS}), the lexicographic ordering over $\nn^d$ (\cite{Linearise}),
	the multiset ordering over multisets of $\nn^d$ (\cite{Linearise}), the subword ordering over words \cite{ICALP} and the linear ordering over ordinals \cite{Ordinals}, to name a few. Using these results, time bounds have been established for the following problems (See \cite{BeyondElem} for a more detailed overview): coverability of lossy counter machines, coverability and termination of lossy channel systems, coverability of unordered data nets, emptiness of alternating 1-register and 1-clock automata, the regular Post embedding problem, conjunctive relevant implication and 1-dimensional VASS universality. The present work is a contribution in this field of inquiry.
	
	\textbf{Our contributions: } In this paper, we prove lower and upper bounds on the length of controlled bad sequences for the \emph{majoring} and \emph{minoring} ordering (See definition \ref{def:majandmin}) over the collection of all finite sets of $\nn^d$ (hereafter denoted by $\pow(\nn^d)$). Both orderings have been used to prove the decidability of the emptiness problem for some classes of automata in \cite{ATRA} and \cite{BUDA}. Our main results are the following: 
	\begin{itemize}
		\item We show that if the function $g$ is primitive recursive, then the length function of $g$ for the majoring ordering over $\pow(\nn^d)$ is bounded by a function in the complexity class $\fastf_{\omega^{d-1}}$ (For a definition of $\fastf_{\omega^{d-1}}$, see section~\ref{section:complexity}). We also show that the length function of $g$ for the minoring ordering over $\pow(\nn^d)$ is bounded by a function in $\fastf_{\omega^{d-1} \cdot 2^d}$	
		\item To complement the upper bounds, we also provide lower bounds on the length functions. We prove the existence of a primitive recursive (in fact, polynomial) function $g$ such that the length function of $g$ over the majoring ordering is bounded from below by a function in $\fastf_{\omega^{d-1}}$. A similar result is also obtained for the minoring ordering.
		\item We use the upper bounds on the length functions to provide upper bounds on the running time for the emptiness problem of some classes of automata operating on trees.
	\end{itemize}
	
	\textbf{Related work: } Length functions for the majoring ordering over $\pow(\nn^d)$ was considered in \cite{Linearise}, where the authors proved an upper bound of $\fastf_{\omega^d}$. However no lower bound was provided and the authors left open the question of the tightness of their bound. Our results (theorems \ref{theorem:complowboundmaj}, \ref{theorem:compuppboundmaj}) show that their bound is not optimal and gives tight upper and lower bounds. Some results concerning the minoring ordering were presented in \cite{Minoring}, but no bounds on the length function were proven. To the best of our knowledge, we provide the first upper bounds for length functions of the minoring ordering over $\pow(\nn^d)$.

	\textbf{Our techniques: } Various results regarding length functions have been proven using the notion of a \emph{reflection} from one normed wqo to another (See definition 3.3 of \cite{ICALP} or definition \ref{def:polyrefl}). A reflection is a map from one nwqo $A$ to another nwqo $B$, which satisfies some properties on the order and norm. If a reflection exists from $A$ to $B$ it can be easily proven that the length function of $g$ over $A$ is less than the length function of $g$ over $B$. However, it turns out that reflections are not sufficient for our purposes. To this end, we define a generalization of reflections called \emph{polynomial reflections} (See definition \ref{def:polyrefl}). We show that if a polynomial reflection exists from $A$ to $B$, then bounds on the length function for $g$ over $A$ can be easily transferred to bounds on the length function for $h$ over $B$, where $h$ is a function obtained by composing a polynomial with $g$. 
	
	We then show that there exists a polynomial reflection from the set of ordinals less than $\omega^{\omega^{d-1}}$ (with the usual ordinal ordering) to $\pow(\nn^d)$ with the majoring ordering (Lemma \ref{lemma:lowmaj}). This enables us to establish a lower bound for the majoring ordering in terms of lower bounds for the order on ordinals, which are already known (\cite{Ordinals}).
	
	The upper bound for the majoring ordering is proved by following the framework established by Schmitz and Schnoebelen in a series of papers (\cite{ICALP,LICS,Ordinals}), which we briefly describe here. It is well known that using the descent equation, the length function for a nwqo can be expressed inductively by length functions over its ``residuals''. However, the residuals of a nwqo can become extremely complex to derive any useful bounds for the length function. To overcome this, we associate an ordinal to each residual (called the order type) and a non-trivial ``derivative'' operator for each ordinal. We then show that in the descent equation, we can replace the residuals of a nwqo with the derivative operator of the order type of that nwqo, which are much more amenable to analysis. Once this is carried out, we exploit some properties of the derivative operator along with some facts about the Cichon hierarchy and ordinal ordering to establish bounds on the length function.
	
	The lower bound for the minoring ordering is established by giving a simple polynomial reflection from the majoring ordering to the minoring ordering (Lemma \ref{lemma:lowmin}). By using the lower bounds proved for the majoring ordering, we can infer lower bounds for the minoring ordering. Finally, the upper bound for the minoring ordering is established by giving a non-trivial polynomial reflection from the minoring ordering to a \emph{cartesian product} of various majoring orderings (Lemma \ref{lemma:imporlemma}). The intuition behind the reflection is discussed in detail in section \ref{section:uppmin}.
	
	\textit{Outline of the paper: } We recall basic notions of wqos, ordinals and sub-recursive hierarchies in section \ref{section:prelim}. In sections \ref{section:lowmaj} and \ref{section:uppmaj} we prove lower and upper bounds for the majoring ordering in terms of functions from the Cichon hierarchy. Similar results are proved in sections \ref{section:lowmin} and \ref{section:uppmin} for the minoring ordering. We give a classification of these bounds in the fast-growing hierarchy in section \ref{section:complexity}. Finally, we conclude with providing some applications of our results in \ref{section:applications}.
	

	\section{Preliminaries} \label{section:prelim}
	
	We recall some basic facts about well-quasi orders (see \cite{Basics}). 
	A \emph{quasi ordering} (qo) over a set $A$ is a relation $\le$ such that $\le$ is reflexive and transitive. We write $x < y$ if $x \le y$ and $y \nleq x$. 
	We also say $x \equiv y$ if $x \le y $ and $y \le x$.
	A \emph{well-quasi ordering} (wqo) over a set $A$ is a qo $\le$ such that for every infinite sequence $x_0,x_1,x_2,\dots,$ there exists $i < j$ such that $x_i \le x_j$. A \emph{norm function} over a set $A$ is a function $|\cdot| : A \to \nn$ such that for every $n \in \nn$ the set $\{x \in A : |x| < n\}$ is finite.
	
	\begin{definition}
		A \emph{normed wqo} is a wqo $(A,\le_A,|\cdot|_A)$ equipped with a norm function $|\cdot|_A$. 
	\end{definition} 
	
	The set $A$ will be called as the domain of the nwqo. 
	If $(A,\le_A,|\cdot|_A)$ is a nwqo and $S \subseteq A$ then the nwqo induced by $S$ is the nwqo $(S,\le_S,|\cdot|_S)$ where $\le_S$ and $|\cdot|_S$ are the restrictions of $\le_A$ and $|\cdot|_A$ on $S$ respectively. We use the notation $A_{\le n}$ to define the set $\{x \in A : |x| \le n\}$. Whenever the order $\le_A$ or the norm $|\cdot|_A$ is clear from the context, we will drop those and just refer to the nwqo by the domain $A$.
	
	\begin{example} (\emph{Some basic nwqos}) : 
		The set of natural numbers with the usual ordering and the identity norm $(\nn,\le,\texttt{id})$ is clearly seen to be a nwqo. 
		Another nwqo is any finite set $\{a_0,a_1,\dots,a_{k-1}\}$ such that distinct letters are unordered and $|a_i| = 0$ for every $i$. We will denote this nwqo by $\Gamma_k$. Notice that $\Gamma_0$ is the empty nwqo.
	\end{example}
	
	Given two nwqos $A$ and $B$ we write $A \equiv B$ when $A$ and $B$ are \emph{isomorphic} structures. In particular the norm functions must be preserved by the isomorphism.
	
	\subsection*{Good, bad and controlled sequences}
	\begin{definition} \label{def:goodandbad}
		A sequence $x_0,x_1,\dots$ over a qo $(A,\le_A)$ is called \emph{good} if there exists $i < j$ such that $x_i \le_A x_j$. A sequence which is not good is called \emph{bad}. Notice that every bad sequence in a wqo is necessarily finite.
	\end{definition}
	
	\begin{definition} \label{def:controlledseq}
		A \emph{control function} is a mapping $g : \nn \to \nn$. For an $n \in \nn$, a sequence $x_0,x_1,\dots$ over a nwqo $A$ is $(g,n)$-controlled if
		\begin{equation*}
		\forall i \in \nn, \ |x_i|_A \le g^i(n) = \overbrace{g(g(\dots(g}^{i \text{ times}}(n))))
		\end{equation*}
	\end{definition}
	
	By a straightforward application of Konig's lemma, we have the following proposition: (See proposition 2.5 of \cite{ICALP})
	
	\begin{prop} 
		Let $A$ be a nwqo and let $g$ be a control function. For every $n \in \nn$, there exists a finite maximum length $L \in \nn$ for $(g,n)$-controlled bad sequences over $A$.
	\end{prop}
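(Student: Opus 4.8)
The plan is to organize all candidate sequences into a tree and invoke König's lemma. Concretely, I would let $T$ be the set of all finite $(g,n)$-controlled bad sequences over $A$, partially ordered by the prefix relation, with the empty sequence as the root: a node $x_0,\dots,x_{i-1}$ has as children exactly those sequences $x_0,\dots,x_{i-1},x_i$ that are again $(g,n)$-controlled and bad. Since the empty sequence is vacuously both controlled and bad, $T$ is nonempty, so $T$ is a genuine (nonempty) tree.

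Next I would check that $T$ is finitely branching. The children of a node $x_0,\dots,x_{i-1}$ sitting at depth $i$ are obtained by appending some $x_i$ with $|x_i|_A \le g^i(n)$; crucially the bound $g^i(n)$ depends only on the depth $i$ and not on which elements were chosen before, so every such $x_i$ lies in the set $A_{\le g^i(n)}$, which is finite by the defining property of the norm function $|\cdot|_A$. Hence each node has at most $|A_{\le g^i(n)}|$ children, in particular finitely many.

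Then I would argue that $T$ has no infinite branch. An infinite branch would give an infinite sequence $x_0,x_1,x_2,\dots$ over $A$ all of whose finite prefixes are bad; but then the whole sequence is bad, contradicting the fact that $A$ is a wqo, since every bad sequence over a wqo is finite (Definition~\ref{def:goodandbad}). Finally, König's lemma applied to the nonempty, finitely branching tree $T$ with no infinite branch yields that $T$ is finite. Consequently the lengths of the branches of $T$ form a finite nonempty subset of $\nn$, and its maximum $L$ is precisely the maximum length of a $(g,n)$-controlled bad sequence over $A$. There is no real obstacle in this argument; the only point that deserves care is the observation that the per-position norm bound $g^i(n)$ is independent of the earlier entries of the sequence, which is exactly what makes the tree finitely branching and allows König's lemma to apply.
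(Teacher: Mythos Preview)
Your argument is correct and is exactly the ``straightforward application of K\"onig's lemma'' the paper alludes to (it does not spell out a proof but cites Proposition~2.5 of \cite{ICALP}, which is this same tree argument). The key points---finite branching via the finiteness of $A_{\le g^i(n)}$ from the norm axiom, and absence of infinite branches because an infinite bad sequence would contradict the wqo property---are precisely the ones needed.
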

	
	Therefore the above proposition lets us define a function $L_{A,g} : \nn \to \nn$ which for every $n \in \nn$, assigns the maximum length of a $(g,n)$-controlled bad sequence over $A$. 
	We will call this \emph{the length function of $A$ and $g$}. 
	From now on, we assume that $g$ is a \emph{strictly increasing inflationary} function 
	(Here inflationary means that $g(n) \ge n$ for all $n \in \nn$).
	
	
	\subsection*{Descent equation}
	
	We can express the length function by induction over nwqos. To do this we need the notion of \emph{residuals}.
	
	\begin{definition} \label{def:residuals}
		Let $A$ be a nwqo and $x \in A$. The \emph{residual} $A/x$ is the nwqo induced by the subset $A/x := \{y \in A : x \nleq_A y\}$
	\end{definition}
	We have the following proposition: (See proposition 2.8 of \cite{ICALP})
	\begin{prop} \label{prop:descenteqn}
		\[L_{A,g}(n) = \max_{x \in A_{\le n}} \{1 + L_{A/x,g}(g(n))\}\]
	\end{prop}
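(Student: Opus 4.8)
The plan is to prove the identity by establishing the two inequalities separately, each by a direct manipulation of controlled bad sequences. For the direction $L_{A,g}(n) \le \max_{x \in A_{\le n}}\{1 + L_{A/x,g}(g(n))\}$, I would start from a $(g,n)$-controlled bad sequence $x_0, x_1, \dots, x_{\ell-1}$ over $A$ of maximal length $\ell = L_{A,g}(n)$, assuming $\ell \ge 1$; the case $\ell = 0$ forces $A_{\le n} = \emptyset$ (since any single element of $A_{\le n}$ is a controlled bad sequence of length one), and the equality then reads $0 = \max \emptyset$ under the convention $\max \emptyset = 0$. Because the sequence is bad, no $x_j$ with $j \ge 1$ satisfies $x_0 \le_A x_j$, so the suffix $x_1, \dots, x_{\ell-1}$ lies entirely in the residual $A/x_0$, and it remains bad there since $\le_{A/x_0}$ is the restriction of $\le_A$. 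Reindexing $y_j := x_{j+1}$ and using $g^{j+1}(n) = g^j(g(n))$, the sequence $y_0, \dots, y_{\ell-2}$ is $(g, g(n))$-controlled and bad over $A/x_0$, so $L_{A/x_0,g}(g(n)) \ge \ell - 1$, which gives the bound since $x_0 \in A_{\le n}$.

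For the reverse inequality, I would fix an arbitrary $x \in A_{\le n}$ together with a maximal $(g, g(n))$-controlled bad sequence $y_0, \dots, y_{m-1}$ over $A/x$, where $m = L_{A/x,g}(g(n))$, and consider the prepended sequence $x, y_0, \dots, y_{m-1}$ over $A$. The control condition holds because $|x|_A \le n = g^0(n)$ and $|y_j|_A = |y_j|_{A/x} \le g^j(g(n)) = g^{j+1}(n)$. Badness holds because the $y$-part is bad in $A/x$ and hence in $A$ (again the residual order is a restriction), while each pair $(x, y_j)$ is non-increasing precisely because $y_j \in A/x$ means $x \not\le_A y_j$ by the definition of the residual. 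Hence this is a $(g,n)$-controlled bad sequence over $A$ of length $1 + L_{A/x,g}(g(n))$, so $L_{A,g}(n) \ge 1 + L_{A/x,g}(g(n))$; taking the maximum over $x \in A_{\le n}$, which is a finite set by the norm axiom so that the maximum exists, completes the argument.

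The two facts doing all the work — that badness transfers in both directions between a sequence and the ambient nwqo once the sequence lies in a residual, and that $A/x$ consists exactly of the elements not dominated by $x$ — are immediate from the definitions, so I do not expect a genuine obstacle here; the only points requiring a little care are the index bookkeeping in the control condition (the identity $g^{i+1} = g^i \circ g$) and the degenerate case $A_{\le n} = \emptyset$. I would also note that strict increasingness and inflationariness of $g$ are not used in this proposition; they enter only later, when the descent equation is iterated to extract explicit bounds.
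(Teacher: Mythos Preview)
Your argument is correct and is exactly the standard proof of the descent equation: split a maximal controlled bad sequence as head plus tail to get one inequality, and prepend a witness to a maximal residual bad sequence to get the other. The paper does not actually prove this proposition but simply cites \cite{ICALP}, Proposition~2.8; your write-up supplies precisely the argument one finds there, including the correct handling of the index shift $g^{j+1}(n) = g^j(g(n))$ and the edge case $A_{\le n} = \emptyset$.
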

	
	This equation is called the \emph{descent equation}. The descent equation implies that unraveling the length function inductively gives us a way of computing it. If $A \supsetneq A/x_0 \supsetneq A/x_0/x_1 \supsetneq \dots$, it follows that $x_0,x_1,\dots$ is a bad sequence
	and so the inductive unraveling of proposition \ref{prop:descenteqn} is well founded.
	
	\subsection{Constructing Normed Wqo's}
	
	In this section, we will see how to construct ``complex'' nwqos in terms of more simpler nwqos. The constructions we use in this paper are disjoint sums, cartesian products and finite powersets. 
	
	\begin{definition}(Disjoint sum and cartesian product) \label{def:sumandproduct}
		Let $A_1$ and $A_2$ be two nwqos. The \emph{disjoint sum} $A_1+A_2$ is the nwqo given by
		\begin{align*}
			A_1 + A_2 &:= \{(i,x) : i \in \{1,2\} \text{ and } x \in A_i\}\\
			(i,x) \le_{A_1 + A_2} (j,y) &\Leftrightarrow i = j \text{ and } x \le_{A_i} y\\
			|(i,x)|_{A_1 + A_2} &:= |x|_{A_i}	
		\end{align*}
		
		The cartesian product $A_1 \times A_2$ is the nwqo given by
		\begin{align*}
			A_1 \times A_2 &:= \{(x_1,x_2) : x_1 \in A_1, x_2 \in A_2\}\\
			(x_1,x_2) \le_{A_1 \times A_2} (y_1,y_2) &\Leftrightarrow x_1 \le_{A_1} y_1 \text{ and } x_2 \le_{A_2} y_2\\
			|(x_1,x_2)|_{A_1 \times A_2} &:= \max(|x_1|_{A_1},|x_2|_{A_2})	
		\end{align*}
		
		It is well known that both $A_1 + A_2$ and $A_1 \times A_2$ are nwqos when $A_1$ and $A_2$ are. Of special interest to us is the cartesian product $(\nn^d,\le_{\nn^d},|\cdot|_{\nn^d})$ which is obtained by taking cartesian product of $(\nn,\le,\texttt{id})$ with itself $d$ times. 
		From now on, whenever we refer to the underlying order of $\nn^d$, we will always mean this cartesian product ordering.
	\end{definition}
	
	\begin{definition}(Majoring and minoring orderings) \label{def:majandmin}
		Let $A$ be a nwqo. We construct two nwqos whose domain will be the set of all finite subsets of $A$, which we denote by $\pow(A)$.
		The first is called the \emph{majoring ordering} and is defined as
		\begin{align*}
			\pow(A) &:= \{X : X \subseteq A \text{ and } X \text{ is finite}\}\\
			X \maj_{\pow(A)} Y &\Leftrightarrow \forall x \in X, \exists y \in Y \text{ such that } x \le_A y\\
			|X|_{\pow(A)} &:= \max(\{|x|_A : x \in X\},\card(X))
		\end{align*}
		Here $\card(X)$ denotes the cardinality of the set $X$. 
		
		The second is called the \emph{minoring ordering} and it has the same domain and the norm as that of the majoring ordering. The difference lies in the ordering, which is given by
		$$X \smy_{\pow(A)} Y \Leftrightarrow \forall y \in Y, \exists x \in X \text{ such that } x \le_A y$$
		
		The fact that $(\pow(A),\maj_{\pow(A)},|\cdot|_{\pow(A)})$ is a nwqo easily follows from Higman's lemma (\cite{Higman}). However $(\pow(A),\smy_{\pow(A)},|\cdot|_{\pow(A)})$ is \emph{not necessarily} a nwqo whenever $A$ is (\cite{Minoring}). But, it is known that $(\pow(\nn^d),\smy_{\pow(\nn^d)},|\cdot|_{\pow(\nn^d)})$ is a nwqo for every $d$ (See \cite{Minoring}). Whenever there is no confusion, we drop the $\pow(A)$ as a subscript and refer to the majoring (resp. minoring ) nwqo as $(\pow(A),\maj)$ (resp. $(\pow(A),\smy)$).
		
	\end{definition}
	
	The results that we prove in this paper will only concern the nwqos
	$(\pow(\nn^d),\maj,|\cdot|_{\pow(\nn^d)})$ and $(\pow(\nn^d),\smy,|\cdot|_{\pow(\nn^d)})$. However, for the purposes of our proofs, we also need the following wqos which can be seen as extensions of the majoring and minoring ordering to the set of \emph{all} subsets of a wqo.
	
	\begin{definition} (Arbitrary subsets) \label{def:arbsubsets}
		Let $(A,\le_A)$ be a wqo and let $\powi(A)$ denote the set of all subsets (finite and infinite) of $A$. Let $X,Y \in \powi(A)$. We define,
		\begin{align*}
			X \maj Y &\iff \forall x \in X, \exists y \in Y \text{ such that } x \le_A y\\
			X \smy Y &\iff \forall y \in Y, \exists x \in X \text{ such that } x \le_A y
		\end{align*}
		
		Further given $X \in \powi(A)$ define,
		\begin{itemize}
			\item $\min(X) := \{x \in X :  \forall x' \in X, \ [x' \nleq_{A} x \implies x \equiv x']\}$
			\item $\uparrow X = \{a : \exists x \in X, x \le_{A} a\}$
			\item $\downarrow X = \{a : \exists x \in X, a \le_{A} x\}$
		\end{itemize}
	\end{definition}
	
	Notice that if $\le_A$ is also guaranteed to be antisymmetric,
	then $\min(X)$ is always a finite set, irrespective of whether $X$ is finite or infinite. 
	Also, observe that we do \emph{not} endow $\powi(A)$ with a norm.

	\begin{prop} \label{prop:funfacts}
		Let $X,Y \in \powi(\nn^d)$. The following facts are known about $(\powi(\nn^d),\maj)$ and 
		$(\powi(\nn^d),\smy)$ (see \cite{Minoring}):
		\begin{enumerate}
			\item The ordering $(\powi(\nn^d),\maj)$ is a wqo
			\item The ordering $(\powi(\nn^d),\smy)$ is a wqo
			\item $X \maj Y \iff \downarrow X \maj \downarrow Y$
			\item $X \smy Y \iff \uparrow X \smy \uparrow Y$
			\item $X \maj Y \iff \nn^d \setminus \downarrow X \smy \nn^d \setminus \downarrow Y$
			\item $X \smy Y \iff \nn^d \setminus \uparrow X \maj \nn^d \setminus \uparrow Y$
			\item $X \smy Y \iff \min(X) \smy \min(Y)$
		\end{enumerate}
	\end{prop}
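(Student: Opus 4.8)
The plan is to handle the seven items in three groups: the order-theoretic identities (3)--(7), which are purely formal; the well-quasi-order statement (1), which is the real content; and (2), which will follow from (1). I would open with the single observation that powers everything else: for arbitrary $X,Y\in\powi(\nn^d)$,
\[
X \maj Y \iff X \subseteq \downarrow Y \iff \downarrow X \subseteq \downarrow Y,
\qquad
X \smy Y \iff Y \subseteq \uparrow X \iff \uparrow Y \subseteq \uparrow X ,
\]
which holds because $\downarrow(\cdot)$ and $\uparrow(\cdot)$ are monotone closure operators. Granting this, (3) is immediate: for downward-closed sets, $\downarrow X\subseteq\downarrow Y$ is the same as $\downarrow X\maj\downarrow Y$ (one direction is trivial; for the other, if $a\in\downarrow X$ lies below some $b\in\downarrow Y$ then $a\in\downarrow Y$ by downward closure), and combining with the displayed equivalence gives $X\maj Y\iff\downarrow X\maj\downarrow Y$. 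Item (4) is the exact dual with $\uparrow$ and $\smy$ in place of $\downarrow$ and $\maj$. For (5) I would put $U:=\nn^d\setminus\downarrow X$ and $V:=\nn^d\setminus\downarrow Y$, note that these are upward-closed and that $X\maj Y\iff\downarrow X\subseteq\downarrow Y\iff V\subseteq U$, and finish with the dual closure fact $V\subseteq U\iff U\smy V$; item (6) is again the dual. For (7), since $\le_{\nn^d}$ is a well-founded partial order, $\min(X)$ is a finite antichain with $\uparrow\min(X)=\uparrow X$, so applying (4) to the pair $(X,Y)$ and then to the pair $(\min(X),\min(Y))$ yields $X\smy Y\iff\uparrow X\smy\uparrow Y\iff\uparrow\min(X)\smy\uparrow\min(Y)\iff\min(X)\smy\min(Y)$.

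For (1), by (3) it suffices to prove that the downward-closed subsets of $\nn^d$, ordered by inclusion, form a wqo. Here I would recall the standard structure theory: every downward-closed $D\subseteq\nn^d$ is a finite union of maximal \emph{ideals}, where an ideal of $\nn^d$ is a product $\prod_{j=1}^{d}C_j$ with each $C_j$ either $\{0,\dots,m\}$ for some $m\in\nn$ or all of $\nn$ (the complement of $D$ is upward-closed, hence finitely generated since $\nn^d$ is a wqo, and intersecting the corresponding ``forbidden'' half-spaces produces the ideal decomposition). Encode each ideal by the tuple in $(\nn\cup\{\omega\})^d$ recording its coordinates $C_j$, and associate to $D$ the finite subset $S(D)$ of $(\nn\cup\{\omega\})^d$ consisting of the encodings of the maximal ideals of $D$. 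Given an infinite sequence $D_0,D_1,\dots$ of downward-closed sets, the sequence $S(D_0),S(D_1),\dots$ lives in $\pow((\nn\cup\{\omega\})^d)$, which is a wqo for $\maj$ (a finite product of the well-ordered chain $\nn\cup\{\omega\}$ is a wqo, and then Higman's lemma applies exactly as for $(\pow(A),\maj)$ after Definition~\ref{def:majandmin}); hence $S(D_i)\maj S(D_j)$ for some $i<j$. Since ideal inclusion corresponds precisely to the product order on encodings, this says every maximal ideal of $D_i$ sits inside some maximal ideal of $D_j$, and therefore $D_i\subseteq D_j$ — here one uses that a directed set contained in a finite union of downward-closed sets must lie inside one of them. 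So no infinite inclusion-bad sequence exists.

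Finally, (2) drops out of (1): by (6) the map $X\mapsto\nn^d\setminus\uparrow X$ sends $\powi(\nn^d)$ into itself and turns any infinite $\smy$-bad sequence into an infinite $\maj$-bad sequence, which (1) forbids. The only step that is not bookkeeping is the one inside (1): setting up the ideals of $\nn^d$ and checking that the encoding into finite subsets of $(\nn\cup\{\omega\})^d$ faithfully transports set inclusion to the majoring order; everything else is formal manipulation of closure operators. This is precisely the material for which the proposition cites \cite{Minoring}.
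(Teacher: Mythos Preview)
The paper does not prove this proposition: it is stated as a collection of known facts with a citation to \cite{Minoring}, and no argument is given in the body or the appendix. So there is nothing to compare against on the paper's side.

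Your proposal is correct. The reductions (3)--(7) via the closure identities $X\maj Y\iff \downarrow X\subseteq\downarrow Y$ and $X\smy Y\iff \uparrow Y\subseteq\uparrow X$ are exactly right, and deducing (2) from (1) through (6) is the standard move. For (1), your route through the ideal decomposition of downward-closed subsets of $\nn^d$ and the encoding into $\pow((\nn\cup\{\omega\})^d)$ with the majoring order is the argument one finds in \cite{Minoring}. One small remark: the sentence ``here one uses that a directed set contained in a finite union of downward-closed sets must lie inside one of them'' is not needed for the implication you actually use. You only need $S(D_i)\maj S(D_j)\Rightarrow D_i\subseteq D_j$, which is immediate once you know ideal inclusion matches the product order on encodings; the directedness fact is what you would invoke for the \emph{converse} implication, which plays no role in showing the order is a wqo.
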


	\subsection*{Reflections}
	
	A major tool to prove lower and upper bounds on the length of controlled bad sequences is the notion of a \emph{normed reflection} (See definition 3.3 of \cite{ICALP}). However, for our purposes we require the following notion of a \emph{polynomial normed reflection}.
	
	\begin{definition} \label{def:polyrefl}
		A \emph{polynomial nwqo reflection} is a mapping $r : A \to B$ such that there exists a polynomial $q : \nn \to \nn$ and
		$$\forall x,y \in A : r(x) \le_B r(y) \text{ implies } x \le_A y$$
		$$\forall x \in A : |r(x)|_B \le q(|x|_A)$$
		If these conditions are satisfied then we say that $r$ is a polynomial nwqo reflection with polynomial $q$ and denote it by $r : A \xhri{q} B$. If the polynomial $q$ is the identity function, we call it a \emph{nwqo reflection} and denote it by $r : A \hri B$.
	\end{definition}
	
	It is easy to see that if $r: A \xhri{q} B$ and $r' : B \xhri{q'} C$ are polynomial nwqo reflections, then $r' \circ r : A \xhri{q' \circ q} C$ is also a polynomial nwqo reflection.
	Further, reflections are also a \emph{precongruence} with respect to disjoint sums and cartesian products, i.e.,
	\begin{prop} \label{prop:precongruence} (See \ref{proof:precongruence}) 
		Suppose $r: A \xhri{q} B$ and $r': A' \xhri{q'} B'$ are polynomial nwqo reflections. Then there exists functions $s$ and $p$ such that $s: A+A' \xhri {q+q'} B+B$ and $p: A \times A' \xhri {q+q'} B \times B'$.
	\end{prop}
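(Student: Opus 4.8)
The plan is to take the only reasonable candidates for $s$ and $p$ — the ``componentwise'' maps — and verify the two clauses of Definition~\ref{def:polyrefl}. Before doing so, I would record one harmless normalization: we may assume $q$ and $q'$ are non-decreasing. Indeed, if $q(n)=\sum_{i=0}^{k}a_i n^i$, then $\tilde q(n):=\lceil|a_0|\rceil+\big\lceil\sum_{i}|a_i|\big\rceil\,n^{k}$ is a non-decreasing polynomial from $\nn$ to $\nn$ with $\tilde q(n)\ge q(n)$ for every $n$, and since the norm clause of a reflection only asserts an \emph{upper} bound $|r(x)|_B\le q(|x|_A)$, the same map $r$ is still a polynomial reflection with polynomial $\tilde q$. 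So from now on $q,q'$ are non-decreasing; in particular $q+q'$ dominates each of $q$ and $q'$ pointwise.

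For the disjoint sum, set $s(i,x):=(i,r_i(x))$, where $r_1=r$, $r_2=r'$ (and $B_1=B$, $B_2=B'$). If $s(i,a)\le_{B+B'}s(j,b)$, then by the definition of the order on a disjoint sum we get $i=j$ and $r_i(a)\le_{B_i}r_i(b)$; the reflection property of $r_i$ then gives $a\le_{A_i}b$, i.e. $(i,a)\le_{A+A'}(j,b)$. For the norm, $|s(i,a)|_{B+B'}=|r_i(a)|_{B_i}\le q_i(|a|_{A_i})\le (q+q')(|a|_{A_i})=(q+q')(|(i,a)|_{A+A'})$. Hence $s:A+A'\xhri{q+q'}B+B'$.

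For the cartesian product, set $p(a,a'):=(r(a),r'(a'))$. If $p(a,a')\le_{B\times B'}p(b,b')$, then $r(a)\le_B r(b)$ and $r'(a')\le_{B'}r'(b')$, so $a\le_A b$ and $a'\le_{A'}b'$ by the reflection properties, i.e. $(a,a')\le_{A\times A'}(b,b')$. For the norm, write $m:=|(a,a')|_{A\times A'}=\max(|a|_A,|a'|_{A'})$; then, using that $q$ and $q'$ are non-decreasing,
\[
|p(a,a')|_{B\times B'}=\max\!\big(|r(a)|_B,\,|r'(a')|_{B'}\big)\le\max\!\big(q(|a|_A),\,q'(|a'|_{A'})\big)\le\max\!\big(q(m),\,q'(m)\big)\le (q+q')(m).
\]
So $p:A\times A'\xhri{q+q'}B\times B'$.

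I do not expect a real obstacle: the order-reflection clause is immediate from the definitions of the sum and product orders. The main (and only) subtlety is the norm estimate for the cartesian product, where one must replace $|a|_A$ and $|a'|_{A'}$ by their common upper bound $m$ — which is exactly what the monotonicity normalization of the first paragraph is arranged to allow. (One could sidestep that normalization at the price of a messier witnessing polynomial, but it is not worth it.)
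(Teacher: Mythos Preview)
Your proof is correct and follows exactly the approach the paper takes: define $s$ and $p$ componentwise and check the two clauses of Definition~\ref{def:polyrefl}. The paper's own proof is in fact much terser (it just writes down the maps and declares the verification ``immediately clear''), so your explicit monotonicity normalization and the norm computation for the product are a genuine improvement in rigor over what the paper provides.
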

	
	We have the following important result regarding polynomial nwqo reflections.
	\begin{prop} \label{prop:polyrefl} (See \ref{proof:polyrefl}) 
		Let $r: A \xhri{p} B$ be a polynomial nwqo reflection. Then 
		$L_{A,g}(n) \le L_{B,(q \circ g)}(q(n))$ for some polynomial $q$. Further if $p$ is increasing and inflationary, then it suffices to take $q = p$.
	\end{prop}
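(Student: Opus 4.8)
The plan is to push a maximum-length controlled bad sequence over $A$ through the reflection $r$ and show that its image is a controlled bad sequence over $B$ of the same length. Fix $n \in \nn$ and let $x_0, x_1, \dots, x_l$ be a $(g,n)$-controlled bad sequence over $A$ with $l = L_{A,g}(n)$ (such a sequence exists by the maximality proposition). Consider the sequence $r(x_0), r(x_1), \dots, r(x_l)$ in $B$. I first verify it is bad: if $r(x_i) \le_B r(x_j)$ for some $i < j$, then the order condition of Definition \ref{def:polyrefl} forces $x_i \le_A x_j$, contradicting badness of the original sequence.

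The remaining task is to exhibit a polynomial $q$ making the image $(q \circ g, q(n))$-controlled. Let $p(x) = \sum_k a_k x^k$ be the polynomial witnessing the reflection, and set $q(x) := \sum_k |a_k| x^k + x$. Then $q$ is a strictly increasing, inflationary polynomial (so that $q \circ g$ is again a legitimate control function), and it satisfies $q(m) \ge p(m')$ whenever $m' \le m$. I claim that $q(g^i(n)) \le (q \circ g)^i(q(n))$ for all $i \ge 0$. This follows by induction on $i$: for $i = 0$ both sides equal $q(n)$; for the inductive step, the hypothesis together with $q$ inflationary gives $(q \circ g)^i(q(n)) \ge q(g^i(n)) \ge g^i(n)$, so applying first $g$ (increasing) and then $q$ (increasing) yields
\[
(q \circ g)^{i+1}(q(n)) = q\bigl(g\bigl((q \circ g)^i(q(n))\bigr)\bigr) \ge q\bigl(g(g^i(n))\bigr) = q(g^{i+1}(n)).
\]
Since $|x_i|_A \le g^i(n)$, the norm condition of the reflection then gives $|r(x_i)|_B \le p(|x_i|_A) \le q(g^i(n)) \le (q \circ g)^i(q(n))$, so $r(x_0), \dots, r(x_l)$ is indeed a $(q \circ g, q(n))$-controlled bad sequence over $B$. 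Hence $L_{B, q \circ g}(q(n)) \ge l = L_{A,g}(n)$, which is the asserted inequality.

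For the last sentence of the statement: if $p$ is itself increasing and inflationary, then $p$ already enjoys every property of $q$ used above (monotone, inflationary, and trivially $p(m) \ge p(m')$ for $m' \le m$), so the identical argument applies with $q$ replaced throughout by $p$, giving $L_{A,g}(n) \le L_{B, p \circ g}(p(n))$. The only delicate point in the whole proof is the monotonicity-and-inflationariness bookkeeping needed for the control inequality $q(g^i(n)) \le (q \circ g)^i(q(n))$; the badness of the image and the pointwise norm bound are immediate from the two reflection axioms.
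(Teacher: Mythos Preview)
Your proof is correct and follows essentially the same route as the paper's: push a controlled bad sequence through $r$, verify badness of the image via the order-reflection axiom, pick a strictly increasing inflationary polynomial $q$ dominating $p$, and establish $q(g^i(n)) \le (q\circ g)^i(q(n))$ by induction on $i$. The paper merely asserts the existence of such a $q$ and leaves the induction to the reader, whereas you write both out explicitly, but the argument is identical in substance.
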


	\subsection{Ordinals and subrecursive hierarchies}
	
	Since all our results will be phrased in terms of functions in the Cichon hierarchy, we recall basic facts about ordinals and subrecursive hierarchies in this section.
	
	\subsection*{Ordinal terms} \label{section:ordinalterms}
	For basic notions about ordinals and its ordering, we refer the reader to \cite{Ordinals}. We will use Greek letters $\alpha,\beta,\dots$ to denote ordinals and $\le$ to denote the ordering on ordinals. We will always use $\lambda$ to denote limit ordinals.
	
	An ordinal $\alpha$ has the general form (also called the \emph{Cantor Normal Form}) $\alpha = \omega^{\beta_1} + \omega^{\beta_2} + \dots + \omega^{\beta_m}$ where $\beta_1,\dots,\beta_m$ are ordinals such that
	$\beta_1 \ge \beta_2 \ge \dots \ge \beta_m$. 
	For an ordinal $\alpha$, we let CNF$(\alpha)$ denote the set of all ordinals strictly less than $\alpha$. For the purposes of this paper, we will restrict ourselves to ordinals in CNF$(\epsilon_0)$ (where $\epsilon_0$ is the supremum of $\omega,\omega^{\omega},\omega^{\omega^{\omega}},\cdots$) 
	
	For $c \in \nn$, let $\omega^{\beta} \cdot c$ denote $\overbrace{\omega^{\beta} + \dots + \omega^{\beta}}^{c \text{ times }}$. We sometimes write ordinals in a \emph{strict form } as $\alpha = \omega^{\beta_1} \cdot c_1 + \omega^{\beta_2} \cdot c_2 + \dots + \omega^{\beta_m} \cdot c_m$ where $\beta_1 > \beta_2 > \dots > \beta_m$ and 
	the \emph{coefficients} $c_i$ must be strictly bigger than $0$. Using the strict form, we define a norm $N$ on CNF$(\epsilon_0)$ as follows: if $\alpha = \omega^{\beta_1} \cdot c_1 + \omega^{\beta_2} \cdot c_2 + \dots + \omega^{\beta_m} \cdot c_m$ in the strict form then $N\alpha = \max\{c_1,\dots,c_m,N\beta_1,\dots,N\beta_m\}$. It is not very hard to notice that for every $\alpha < \epsilon_0$, the set $\CNF(\alpha)_{\le n}$ is always finite for any $n$. Hence for every $\alpha < \epsilon_0$, we have a nwqo $($CNF$(\alpha),\le,N)$.
	
	We finish this sub-section with the definitions of \emph{natural sum} ($\oplus$) and \emph{natural product} ($\otimes$) for ordinals in $\CNF(\epsilon_0)$: 
	\begin{align*}
		\sum_{i=1}^m \omega^{\beta_i} \oplus \sum_{j=1}^n \omega^{\beta'_j} &:= \sum_{k=1}^{m+n} \omega^{\gamma_k}\\
		\sum_{i=1}^m \omega^{\beta_i} \otimes \sum_{j=1}^n \omega^{\beta'_j} &:=
		\bigoplus_{i=1}^m \bigoplus_{j=1}^n \omega^{\beta_i \oplus \beta'_j}	
	\end{align*}
	where $\gamma_1 \ge \gamma_2 \dots \ge \gamma_{m+n}$ is a rearrangement of $\beta_1,\dots,\beta_m,$ $\beta'_1,\dots,\beta'_n$.

	As mentioned before, all our results will be obtained by providing reflections to and from the ordinal ordering. Hence, it is important to understand how ``fast'' the length of controlled bad sequences in the ordinal ordering can grow. For this purpose, we introduce sub-recursive hierarchies. 
	
	\subsection*{Sub-recursive hierarchies}
	
	For the purposes of describing the length of controlled bad sequences over the ordinal ordering, the hierarchies of Hardy and Cichon are sufficient \cite{Cichon}. However, before we introduce them we need some preliminary definitions.
	
	A \emph{fundamental sequence} for a \emph{limit ordinal} $\lambda$ is a sequence $(\lambda(x))_{x < \omega}$ with supremum $\lambda$, which we fix to be,
	\begin{equation*}
	(\gamma + \omega^{\beta+1})(x) := \gamma + \omega^{\beta} \cdot (x+1), \qquad 
	(\gamma + \omega^{\lambda})(x) := \gamma + \omega^{\lambda(x)}
	\end{equation*}
	
	The \emph{predecessor} $P_x$ of an ordinal $\alpha > 0$ at $x \in \nn$ is given by
	\begin{equation*}
	P_x(\alpha + 1) := \alpha, \qquad P_x(\lambda) := P_x(\lambda(x))
	\end{equation*}
	
	Let $h : \nn \to \nn$ be a function. The \emph{Hardy hierarchy} for the function $h$ is given by 
	$(h^{\alpha})_{\alpha < \epsilon_0}$ where
	\begin{equation*}
	h^0(x) := x, \qquad h^{\alpha}(x) := h^{P_x(\alpha)}(h(x))
	\end{equation*}
	
	and the \emph{Cichon hierarchy} $(h_{\alpha})_{\alpha < \epsilon_0}$ is defined as
	\begin{equation*}
	h_0(x) := 0, \qquad h_{\alpha}(x) := 1 + h_{P_x(\alpha)}(h(x))
	\end{equation*}
	
	We also define another hierarchy called the \emph{fast growing hierarchy}
	as follows:
	$$f_{h,0}(x) = h(x), \qquad f_{h,\alpha+1}(x) = f^{x+1}_{h,\alpha}(x), \qquad f_{h,\lambda}(x) = f_{h,\lambda_x}(x)$$
	
	Here $f^i_{h,\alpha}$ denotes $i$-fold composition of $f_{h,\alpha}$ with itself. \\
	
	Let $L_{\alpha,g}(n)$ denote the the length of the longest $(g,n)$-controlled bad sequence in CNF$(\alpha)$.
	The following theorem states that, for large enough $n$,
	the length function $L_{\alpha,g}$ and the function $g_{\alpha}$
	in the Cichon hierarchy coincide.
	
	\begin{theorem} (Theorem 3.3 of \cite{Ordinals}) \label{theorem:ordlft}
		Let $\alpha < \epsilon_0$ and $n \ge N\alpha$. Then $L_{\alpha,g}(n) = g_{\alpha}(n)$.
	\end{theorem}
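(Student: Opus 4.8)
The plan is to prove the identity by transfinite induction on $\alpha$, matching the descent equation (Proposition~\ref{prop:descenteqn}) for the nwqo $\CNF(\alpha)$ against the defining recurrence of the Cichon hierarchy. The first step is to specialise the descent equation. Since the ordinals are linearly ordered, for $\beta < \alpha$ the residual $\CNF(\alpha)/\beta = \{\gamma < \alpha : \beta \nleq \gamma\}$ equals $\{\gamma : \gamma < \beta\} = \CNF(\beta)$, so Proposition~\ref{prop:descenteqn} becomes
\[
L_{\alpha,g}(n) \;=\; \max_{\beta < \alpha,\ N\beta \le n}\ \bigl\{\, 1 + L_{\beta,g}(g(n)) \,\bigr\},
\]
with the empty maximum read as $0$. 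On the other side, $g_0(n) = 0$ and $g_\alpha(n) = 1 + g_{P_n(\alpha)}(g(n))$ for $\alpha > 0$. Hence it suffices to show that, under the hypothesis $n \ge N\alpha$, the maximum above is attained at $\beta = P_n(\alpha)$, after which the induction hypothesis applied at $g(n) \ge n$ closes the loop.

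The second step is a monotonicity lemma in the index: if $\beta \le \beta'$, then the inclusion $\CNF(\beta) \hri \CNF(\beta')$ is an nwqo reflection (it preserves both $\le$ and the norm $N$ verbatim), so by Proposition~\ref{prop:polyrefl} with the identity polynomial, $L_{\beta,g}(m) \le L_{\beta',g}(m)$ for all $m$. Consequently the maximum in the specialised descent equation is attained at the \emph{largest} admissible index, i.e.\ the largest $\beta < \alpha$ with $N\beta \le n$, whenever one exists.

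It then remains to identify that largest admissible index with $P_n(\alpha)$, which reduces to the two standard facts about predecessors: (a) $P_n(\alpha)$ is admissible, $N(P_n(\alpha)) \le n$ whenever $n \ge N\alpha$; and (b) every $\beta < \alpha$ with $N\beta \le n$ satisfies $\beta \le P_n(\alpha)$. Granting these, the induction runs smoothly. For $\alpha = 0$ both sides are $0$. For a successor $\alpha+1$ we have $P_n(\alpha+1) = \alpha$ and $N\alpha \le N(\alpha+1) \le n$, so $\beta = \alpha$ is the optimal index and $L_{\alpha+1,g}(n) = 1 + L_{\alpha,g}(g(n)) = 1 + g_\alpha(g(n)) = g_{\alpha+1}(n)$ by the induction hypothesis (valid since $g(n) \ge n \ge N\alpha$). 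For a limit $\lambda$ we have $P_n(\lambda) = P_n(\lambda(n))$; fact (a) supplies admissibility of the iterated predecessor, facts (a) and (b) together pin the maximum to it, and unfolding $P_n$ once more gives $L_{\lambda,g}(n) = 1 + g_{P_n(\lambda)}(g(n)) = g_\lambda(n)$.

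The main obstacle is establishing facts (a) and (b): the bookkeeping with the norm $N$ against the fixed fundamental sequences. The coefficient $x+1$ introduced in $(\gamma + \omega^{\beta+1})(x)$ must be controlled against $N\lambda$ and $n$, and one must verify that applying $g$ (which is inflationary) always leaves enough slack for the induction hypothesis to apply at each residual. These estimates are precisely the delicate part carried out in~\cite{Ordinals}; once they are in hand, the remainder of the argument is the routine induction driven by the descent equation described above, and it yields both inequalities $L_{\alpha,g}(n) \le g_\alpha(n)$ and $L_{\alpha,g}(n) \ge g_\alpha(n)$ simultaneously.
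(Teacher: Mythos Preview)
The paper does not contain its own proof of this statement: it is quoted verbatim as ``Theorem~3.3 of \cite{Ordinals}'' and used as a black box thereafter. So there is nothing in the present paper to compare your proposal against.

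That said, your outline is an accurate high-level reconstruction of the argument in \cite{Ordinals}. The specialisation $\CNF(\alpha)/\beta = \CNF(\beta)$ of the descent equation, the monotonicity $\beta \le \beta' \Rightarrow L_{\beta,g} \le L_{\beta',g}$ via the inclusion reflection, and the reduction to identifying the largest admissible index with $P_n(\alpha)$ are exactly the skeleton of that proof. Your fact~(b) is precisely Lemma~B.1 of \cite{ICALP}, which the present paper also quotes as Lemma~\ref{lemma:icalp}. One small caution on fact~(a): in the limit case you cannot bootstrap it by applying (a) to $\lambda(n)$, because in general $N(\lambda(n)) = n+1 > n$ (the fundamental sequence introduces the coefficient $n{+}1$), so the hypothesis $n \ge N(\lambda(n))$ fails at the intermediate step. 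The conclusion $N(P_n(\lambda)) \le n$ is nevertheless correct, but it has to be established directly for $P_n$ rather than by recursing through $\lambda(n)$; this is part of the ``bookkeeping'' you defer to \cite{Ordinals}.
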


	\section{Lower bound for majoring ordering} \label{section:lowmaj}
	
	In this section we prove a lower bound for length functions over $(\pow(\nn^d),\maj,|\cdot|_{\pow(\nn^d)})$. 
	The lower bound is presented in terms of functions over the
	Cichon hierarchy.
	
	The following lemma follows an unpublished idea of Abriola, Schmitz and Schnoebelen, which has been adapted to controlled bad sequences here.
	
	\begin{lemma} \label{lemma:lowmaj}
		There exists a poly. nwqo reflection
		$$\refl : (\CNF(\omega^{\omega^{d-1}}),\le,N) \xhri{\varphi} (\pow(\nn^d),\maj,|\cdot|_{\pow(\nn^d)})$$ where $\varphi(x) = x(x+1)^d$.
	\end{lemma}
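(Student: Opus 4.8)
\emph{Proof plan.} The plan is to read $\refl$ directly off the Cantor normal form. Write $\alpha < \omega^{\omega^{d-1}}$ in CNF as $\alpha = \omega^{\beta_1} + \dots + \omega^{\beta_m}$ with $\omega^{d-1} > \beta_1 \ge \dots \ge \beta_m$, so the exponents form a non-increasing sequence over $\CNF(\omega^{d-1})$ and $m = 0$ exactly when $\alpha = 0$. Each $\beta_i < \omega^{d-1}$ has a CNF over the fixed powers $\omega^{d-2}, \dots, \omega^{1}, \omega^{0}$, say $\beta_i = \omega^{d-2} e^{(i)}_{d-2} + \dots + \omega\, e^{(i)}_{1} + e^{(i)}_{0}$ with $e^{(i)}_j \in \nn$; let $\vec b_i := (e^{(i)}_{d-2}, \dots, e^{(i)}_{0}) \in \nn^{d-1}$ be its coefficient vector. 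I would then set
\[
\refl(\alpha) := \bigl\{\, (\vec b_i,\, i-1) : 1 \le i \le m \,\bigr\} \subseteq \nn^{d-1} \times \nn = \nn^d ,
\]
where $(\vec b, k)$ appends $k$ as the last coordinate. These are $m$ pairwise distinct points (distinct last coordinates), so $\card(\refl(\alpha)) = m$.

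For the reflection property I would first record the only ordinal fact needed: for $\gamma, \gamma' < \omega^{d-1}$, if the coefficient vector of $\gamma$ is $\le$ that of $\gamma'$ componentwise, then $\gamma \le \gamma'$. This is immediate from CNF: either all coefficients agree, or at the highest power where they differ the coefficient of $\gamma$ is strictly smaller while the lower-order remainder of $\gamma$ is dominated by that single power. Now suppose $\refl(\alpha) \maj \refl(\alpha')$ with $\alpha' = \omega^{\beta'_1} + \dots + \omega^{\beta'_{m'}}$ in CNF and associated vectors $\vec b'_j$. Unfolding $\maj$: for every $i \le m$ there is $j$ with $\vec b_i \le \vec b'_j$ componentwise and $i \le j$. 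Taking $i = m$ gives $m \le m'$; for arbitrary $i$ the witness $j \ge i$ gives $\beta_i \le \beta'_j$ by the fact above and $\beta'_j \le \beta'_i$ since $(\beta'_j)_j$ is non-increasing and $j \ge i$, hence $\beta_i \le \beta'_i$ for all $i \le m$. As both exponent sequences are non-increasing, componentwise domination of the first $m$ exponents forces $\omega^{\beta_1} + \dots + \omega^{\beta_m} \le \omega^{\beta'_1} + \dots + \omega^{\beta'_m} \le \alpha'$ (two CNFs are compared lexicographically on their exponent sequences, and appending terms on the right only increases an ordinal), so $\alpha \le \alpha'$, as required.

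For the norm bound set $n := N\alpha$. Every coordinate occurring in $\refl(\alpha)$ is either some $e^{(i)}_j$, a CNF-coefficient of $\beta_i$ and hence $\le N\beta_i \le n$, or some $i-1 \le m-1$; together with $\card(\refl(\alpha)) = m$ this gives $|\refl(\alpha)|_{\pow(\nn^d)} \le \max(m, n)$. Writing the strict form $\alpha = \omega^{\gamma_1} c_1 + \dots + \omega^{\gamma_k} c_k$, one has $m = c_1 + \dots + c_k$ with each $c_\ell \le n$, and $k$ is at most the number of ordinals $\gamma < \omega^{d-1}$ with $N\gamma \le n$, which is at most $(n+1)^{d-1}$ (each of the at most $d-1$ coefficients lies in $\{0, \dots, n\}$). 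Hence $m \le n(n+1)^{d-1} \le n(n+1)^d = \varphi(n)$ and $n \le \varphi(n)$, so $|\refl(\alpha)|_{\pow(\nn^d)} \le \varphi(N\alpha)$; the case $\alpha = 0$ is consistent since $\refl(0) = \emptyset$ and $\varphi(0) = 0$.

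The step I expect to be the main obstacle is the reflection direction, and the delicate point there is that $\refl$ is \emph{not} an order embedding: the ordinal order on $\CNF(\omega^{d-1})$ that dictates $\beta_i \ge \beta_{i+1}$ corresponds to the lexicographic order on coefficient vectors, which is strictly finer than the componentwise order on $\nn^{d-1}$ that $\maj$ detects. The one-directional implication nonetheless goes through precisely because the componentwise order is \emph{contained in} the lexicographic order, so any componentwise comparison of witnesses supplied by $\maj$ can be upgraded to an ordinal comparison, after which the monotonicity of the non-increasing exponent sequence $(\beta'_j)_j$ (using $j \ge i$) recovers the termwise bound $\beta_i \le \beta'_i$. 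I would finally double-check the two small CNF facts invoked and the boundary case $d = 1$, where the vectors $\vec b_i$ are empty and $\refl(n) = \{0, 1, \dots, n-1\} \subseteq \nn$.
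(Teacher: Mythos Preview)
Your proposal is correct and essentially mirrors the paper's argument: the reflection map is the same up to a harmless reindexing (you place the position index $i-1$ in the last coordinate, the paper puts $i$ in the first), and your norm bound via counting coefficient vectors in $\{0,\dots,n\}^{d-1}$ is exactly the paper's third step. The only real difference is that you argue the order-reflection direction directly (extracting $\beta_i \le \beta'_i$ for all $i\le m$ from the majoring witnesses and using monotonicity of the non-increasing exponent sequence), whereas the paper proves the contrapositive by exhibiting, from $\gamma>\zeta$, an explicit element of $\refl(\gamma)$ dominated by nothing in $\refl(\zeta)$; both routes are short and rest on the same ordinal fact that componentwise $\le$ on coefficient vectors implies the ordinal $\le$ on $\CNF(\omega^{d-1})$.
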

	
	\begin{proof} 
		We decompose the proof into three parts. As a first step, we define the map $\refl$ from $\CNF(\omega^{\omega^{d-1}})$ to $\pow(\nn^d)$. In the second step, we show that $\refl(\gamma) \maj \refl(\zeta) \implies \gamma \le \zeta$. In the third step, we show that $|\refl(\gamma)|_{\pow(\nn^d)} \le \varphi(N\gamma)$ where $N$ is the norm defined on ordinals in section \ref{section:ordinalterms}. 
		
		\subsubsection*{First step}
		
		Let $\gamma \in \CNF(\omega^{\omega^{d-1}})$ such that the Cantor normal form of $\gamma$ is 
		$\omega^{\beta_1} + \omega^{\beta_2} + \dots + \omega^{\beta_l}$.
		Notice that each $\beta_i \in \CNF(\omega^{d-1})$ and hence can be written as $\beta_i = \omega^{d-2} \cdot c_{(i,d-2)} + \omega^{d-3} \cdot c_{(i,d-3)} + \dots + \omega^{0} \cdot c_{(i,0)}$ where the coefficients $c_{i,j}$ \emph{can} be 0. The map $\refl$ is then defined on $\gamma$ as $$\refl(\gamma) := \{(i,c_{(i,0)},c_{(i,1)},\dots,c_{(i,d-2)}) : 1 \le i \le l\}$$
		
		\subsubsection*{Second step}
		
		We now show that if $\refl(\gamma) \maj \refl(\zeta)$ then $\gamma \le \zeta$. Instead of proving this we prove the contrapositive, namely:
		If $\gamma > \zeta$ then $\refl(\gamma) \not\sqsubseteq^{\mathtt{maj}} \refl(\zeta)$.

		Let $\gamma \in \CNF(\omega^{\omega^{d-1}})$ such that $\gamma := \omega^{\beta_1} + \omega^{\beta_2} + \dots + \omega^{\beta_p}$ and $\beta_1 \ge \beta_2 \ge \dots \ge \beta_p$. 
		Further let each $\beta_i: = \omega^{d-2} \cdot c_{(i,d-2)} + \omega^{d-3} \cdot c_{(i,d-3)} + \dots + \omega^{0} \cdot c_{(i,0)}$. Let $\zeta \in \CNF(\omega^{\omega^{d-1}})$ such that $\zeta := \omega^{\eta_1} + \omega^{\eta_2} + \dots + \omega^{\eta_q}$ and $\eta_1 \ge \eta_2 \ge \dots \ge \eta_q$. Further let each 
		$\eta_i := \omega^{d-2} \cdot e_{(i,d-2)} + \omega^{d-3} \cdot e_{(i,d-3)} + \dots + \omega^{0} \cdot e_{(i,0)}$. Suppose $\gamma > \zeta$. Hence, there exists $i \in \{1,\dots,p\}$ such that
		\begin{itemize}
			\item Either $\beta_i > \eta_i$ (or) $i > q$ and
			\item $\forall j$ such that $0 \le j < \min(i,q), \beta_j = \eta_j$
		\end{itemize}
		
		Let $x := (i,c_{(i,0)},c_{(i,1)},\dots,c_{(i,d-2)})$.
		By construction of the map $\refl$ we have that $x \in \refl(\gamma)$. For every $j \in \{1,\dots,q\}$, let $y_j := (j,e_{(j,0)},e_{(j,1)},\dots,e_{(j,d-2)})$. By construction of the map $\refl$ we have that $\refl(\zeta) = \{y_1,\dots,y_q\}$. We will now show that $x \nleq_{\nn^d} y_j$ for each $j$. We consider two cases:
		
		\begin{itemize}
			\item \emph{Case 1: } $j < i$. Therefore $\eta_j = \beta_j$. Hence $y_j := (j,c_{(j,0)},$ $\dots,c_{(j,d-2)})$. Since $j < i$, we have that $x \nleq_{\nn^d} y_j$. 
			\item \emph{Case 2: } $j \ge i$. Therefore $\beta_i > \eta_i \ge \eta_j$. Suppose $x \le_{\nn^d} y_j$. Hence $(i,c_{(i,0)},\dots,c_{(i,d-2)}) \le (j,e_{(j,0)},\dots,e_{(j,d-2)})$ and so
			$(c_{(i,0)},\dots,c_{(i,d-2)}) \le (e_{(j,0)},\dots,e_{(j,d-2)})$. But this means that $\beta_i \le \eta_j$ which leads to a contradiction. Hence we have that $x \nleq_{\nn^d} y_j$.
		\end{itemize}

		Therefore $x \nleq_{\nn^d} y_j$ for every $j$ and so we have $\refl(\gamma) \not\sqsubseteq^{\mathtt{maj}} \refl(\zeta)$. 
		
		\subsubsection*{Third step}
		
		We now show that $|\refl(\gamma)|_{\pow(\nn^d)} \le \varphi(N\gamma)$. \ Let $\gamma \in \CNF(\omega^{\omega^{d-1}})$ such that the Cantor normal form of $\gamma$ is 
		$\omega^{\beta_1} + \omega^{\beta_2} + \dots \omega^{\beta_l}$.
		Further let each $\beta_i := \omega^{d-2} \cdot c_{(i,d-2)} + \omega^{d-3} \cdot c_{(i,d-3)} + \dots + \omega^{0} \cdot c_{(i,0)}$. It is clear that 
		\begin{equation} \label{eqn:reflgamma}
		|\refl(\gamma)|_{\pow(\nn^d)} = \max(l, \{c_{(i,j)}\}^{1 \le i \le l}_ {0 \le j \le d-2})
		\end{equation}
		
		Suppose $\gamma$ in the strict form looks like:
		$\omega^{\gamma_1} \cdot e_1 + \omega^{\gamma_2} \cdot e_2 + \dots + \omega^{\gamma_m} \cdot e_m$ where $\gamma_1 > \gamma_2 > \dots > \gamma_m$ and each $e_i > 0$. Notice that $l = \sum_{i=1}^m e_i$. Further it is also easy to observe that for all $i \in \{1,\dots,m\}$, there exists $j \in \{1,\dots,l\} \text{ such that } \gamma_i = \beta_j$. With this observation, just unraveling the definition of the norm function $N$ implies that
		\begin{equation} \label{eqn:gamma}
		N\gamma = \max(d-2,\{e_i\}^{1 \le i \le m}, \ \{c_{i,j}\}^{1 \le i \le l}_{0 \le j \le d-2})	
		\end{equation}
		
		Since each $\gamma_i \in$ CNF($\omega^{d-1}$), we can write each $\gamma_i$ as
		$\omega^{d-2} \cdot c'_{(i,d-2)} + \omega^{d-3} \cdot c'_{(i,d-3)} + \dots + \omega^0 \cdot c'_{(i,0)}$ where each $c'_{(i,j)} \le N\gamma_i \le N\gamma$. Notice that each $\gamma_i$ is uniquely determined by its coefficients $(c'_{(i,0)},\dots,c'_{(i,d-2)})$, i.e., if $\gamma_i \neq \gamma_j$ then $(c'_{(i,0)},\dots,$ $c'_{(i,d-2)}) \neq (c'_{(j,0)},\dots,c'_{(j,d-2)})$. Therefore we have an injective map from 
		$\{\gamma_i : 1 \le i \le m\}$ to the set $\{x : x \in \nn^{d-1}, \ |x|_{\nn^{d-1}} \le N\gamma \}$. It then follows that $m \le (N\gamma+1)^d$. Hence 
		$$l = \sum_{i=1}^m e_i \le \sum_{i=1}^m N\gamma \le \sum_{i=1}^{(N\gamma+1)^d} N\gamma = \varphi(N\gamma)$$ By equations (\ref{eqn:reflgamma}) and (\ref{eqn:gamma}) this implies that $|\refl(\gamma)|_{\pow(\nn^d)} \le \varphi(N\gamma)$. 
	\end{proof}
	
	Therefore by applying proposition \ref{prop:polyrefl} and theorem \ref{theorem:ordlft} we have,
	\begin{theorem} \label{theorem:lowboundmaj}
		Let $\alpha = \omega^{\omega^{d-1}}$, $\varphi(x) = x(x+1)^d$ and let $n \ge N(\omega^{\omega^{d-1}})$. 
		Then $$g_{\alpha}(n) = L_{\alpha,g}(n)  \le L_{(\pow(\nn^d), \maj), (\varphi \circ g)}(\varphi(n))$$
	\end{theorem}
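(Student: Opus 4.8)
The plan is to obtain the statement by composing three facts already at our disposal: the polynomial reflection of Lemma~\ref{lemma:lowmaj}, the transfer principle for polynomial reflections in Proposition~\ref{prop:polyrefl}, and the identification of ordinal length functions with the Cichon hierarchy in Theorem~\ref{theorem:ordlft}. The only genuine verification needed along the way is the side condition on $\varphi$ that lets us use the sharp form of Proposition~\ref{prop:polyrefl} (with the polynomial $q$ equal to $\varphi$ itself).

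First I would check that $\varphi(x) = x(x+1)^d$ is strictly increasing and inflationary on $\nn$. For the inflationary part, $\varphi(0) = 0 \ge 0$ and for $x \ge 1$ we have $\varphi(x) = x(x+1)^d \ge x \cdot 2^d \ge x$, so $\varphi(x) \ge x$ always. For strict monotonicity, both factors $x \mapsto x$ and $x \mapsto (x+1)^d$ are non-decreasing on $\nn$ and, in the step from $x$ to $x+1$, at least one of them strictly increases while the product never collapses to $0$; hence $\varphi(x+1) > \varphi(x)$. This is the one routine computation in the proof, and it is precisely what allows $q$ in Proposition~\ref{prop:polyrefl} to be taken equal to $\varphi$ rather than some larger polynomial.

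Next I would invoke Lemma~\ref{lemma:lowmaj} to obtain the polynomial nwqo reflection $\refl : (\CNF(\omega^{\omega^{d-1}}),\le,N) \xhri{\varphi} (\pow(\nn^d),\maj,|\cdot|_{\pow(\nn^d)})$ and feed it into Proposition~\ref{prop:polyrefl}. Since $\varphi$ is increasing and inflationary, the sharpened conclusion applies with $q = \varphi$, and (using the standing assumption that $g$ is strictly increasing and inflationary, so that $\varphi \circ g$ is too and the right-hand side is a bona fide length function) we get $L_{\CNF(\omega^{\omega^{d-1}}),g}(n) \le L_{(\pow(\nn^d),\maj),\,(\varphi \circ g)}(\varphi(n))$ for all $n$.

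Finally, writing $\alpha = \omega^{\omega^{d-1}}$ and taking $n \ge N\alpha = N(\omega^{\omega^{d-1}})$, Theorem~\ref{theorem:ordlft} gives $L_{\alpha,g}(n) = g_\alpha(n)$; since $L_{\alpha,g} = L_{\CNF(\alpha),g}$ by definition, chaining this equality with the inequality from the previous step yields $g_\alpha(n) = L_{\alpha,g}(n) \le L_{(\pow(\nn^d),\maj),\,(\varphi \circ g)}(\varphi(n))$, as required. The substantive content is entirely in Lemma~\ref{lemma:lowmaj}; the present theorem is just the bookkeeping that repackages that reflection into an explicit Cichon-hierarchy lower bound, so beyond the monotonicity check above I do not anticipate any real obstacle.
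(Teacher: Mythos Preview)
Your proposal is correct and follows exactly the approach the paper takes: the paper's proof of this theorem is the single sentence ``by applying proposition~\ref{prop:polyrefl} and theorem~\ref{theorem:ordlft},'' and you have simply unpacked that, including the (otherwise implicit) check that $\varphi$ is increasing and inflationary so that the sharp form of Proposition~\ref{prop:polyrefl} applies with $q=\varphi$.
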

	
	\section{Upper bound for majoring ordering} \label{section:uppmaj}
	
	In this section we will prove upper bounds on the length of controlled bad sequences for the majoring ordering over $\pow(\nn^d)$. The upper bounds are proven by following the framework established by Schmitz and Schnoebelen in a series of papers(\cite{ICALP},\cite{LICS},\cite{Ordinals}) to prove upper bounds for various well-quasi orders.
	
	We consider the family of nwqos obtained from\\ $\{(\pow(\nn^d),\maj)\}_{d > 0}$ and $\{\Gamma_d\}_{d \in \{0,1\}}$ by taking disjoint sums and cartesian products. We call this family of nwqos the \emph{majoring powerset nwqos}. 
	From now on, we will denote majoring powerset nwqos as a triple $(A,\lemaj_A,|\cdot|_A)$ where $A$ is the domain of the nwqo, $\lemaj_A$ is the underlying order and $|\cdot|_A$ is the norm.
	
	Similar to the proof of upper bounds for the subword ordering in \cite{ICALP}, we introduce an ordinal notation for each majoring powerset nwqo, called the type of that nwqo. The type of a nwqo will turn out to be useful in bounding the corresponding length function using subrecursive hierarchies. 
	
	Notice that if $\alpha \in \CNF(\omega^{\omega^{\omega}})$ then
	$\alpha$ can always be decomposed as $\alpha = \bigoplus_{i=1}^m \bigotimes_{j=1}^{j_i} \omega^{\omega^{d_{i,j}}}$. (Here the empty product is taken to be $1$ and the empty sum is taken to be $0$).
	We now map each majoring powerset nwqo to an ordinal in $\CNF(\omega^{\omega^{\omega}})$ as follows:
	$$o(\Gamma_0) = 0, \qquad o(\Gamma_1) = 1, \qquad o(\pow(\nn^d)) = \omega^{\omega^{d-1}}$$
	$$o(A + B) = o(A) \oplus o(B), \qquad o(A \times B) = o(A) \otimes o(B)$$
	
	Also with each ordinal $\alpha \in$ CNF$(\omega^{\omega^{\omega}})$ we can associate a canonical majoring powerset nwqo, which we will denote by $C(\alpha)$.
	$$C(0) = \Gamma_0, \qquad C(1) = \Gamma_1, \qquad C(\omega^{\omega^d}) = \pow(\nn^{d+1})$$
	$$C(\alpha \oplus \beta) = C(\alpha) + C(\beta), \qquad C(\alpha \otimes \beta) = C(\alpha) \times C(\beta)$$
	
	It can be easily seen that the operators $o$ and $C$ are bijective inverses of each other (modulo isomorphism of nwqos).
	
	\subsection*{Derivatives}
	The next step is to define a \emph{derivative} operator for ordinals. To this end, for each $n \in \nn$, we define a $D_n$ operator as follows:
	$$D_n(k) = k-1, \qquad D_n(\omega) = n+1, \qquad D_n(\omega^{\omega^d}) = \omega^{\omega^{d-1} \cdot (d+1)n}$$
	$$D_n(\omega^{\omega^{p_1} + \omega^{p_2} + \dots + \omega^{p_k}}) = \bigoplus_{i=1}^k \left(D_n(\omega^{\omega^{p_i}}) \otimes \bigotimes_{j \neq i} \omega^{\omega^{p_j}}\right)$$
	
	Using this operator, we define a $\partial_n$ operator as follows:
	$$\partial_n \left(\sum_{i=1}^m \omega^{\beta_i} \right) = \left\{D_n(\omega^{\beta_i}) \oplus  \bigoplus_{j \neq i} \omega^{\beta_j} \; | \; i = 1,\dots,m \right \}$$
	
	Notice that if $\alpha = \omega^{\beta}$ then $\partial_n(\alpha) = \{D_n(\alpha)\}$. 
	\begin{prop} \label{prop:lessthan} (See \ref{proof:lessthan})
		If $\beta \in \partial_n(\alpha)$ then $\beta < \alpha$
	\end{prop}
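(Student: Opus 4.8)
The plan is to reduce the claim to a one‑monomial inequality and then verify that inequality by a flat case analysis on the Cantor normal form of the exponent. Write $\alpha = \omega^{\beta_1}+\dots+\omega^{\beta_m}$ in Cantor normal form, so $\beta_1 \ge \dots \ge \beta_m$ and, since the natural sum of a non‑increasing family of monomials coincides with their ordinary sum, $\alpha = \omega^{\beta_i}\oplus\bigoplus_{j\ne i}\omega^{\beta_j}$ for every $i$. By definition a generic element of $\partial_n(\alpha)$ has the form $D_n(\omega^{\beta_i})\oplus\bigoplus_{j\ne i}\omega^{\beta_j}$. Because the natural sum is strictly monotone in each argument, this element is $<\alpha$ as soon as $D_n(\omega^{\beta_i}) < \omega^{\beta_i}$. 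Hence it suffices to establish
\[
D_n(\omega^\beta) < \omega^\beta \quad\text{for all } \beta\in\CNF(\omega^\omega).
\]

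For this I would split according to the shape of $\beta$ dictated by the defining clauses of $D_n$. If $\beta=0$, then $D_n(1)=0<1$; if $\beta=1$, then $D_n(\omega)=n+1<\omega$; if $\beta=\omega^d$ with $d\ge 1$, then $D_n(\omega^{\omega^d})=\omega^{\omega^{d-1}\cdot(d+1)n}$ and, since $(d+1)n$ is finite, $\omega^{d-1}\cdot(d+1)n<\omega^{d-1}\cdot\omega=\omega^d$, so $\omega^{\omega^{d-1}\cdot(d+1)n}<\omega^{\omega^d}$ (the degenerate subcase $n=0$ reads $1<\omega^{\omega^d}$, which still holds). The one remaining case is $\beta=\omega^{p_1}+\dots+\omega^{p_k}$ with $k\ge 2$ and $p_1\ge\dots\ge p_k$ in $\nn$; write $A_i:=\omega^{\omega^{p_i}}$, $P_i:=\bigotimes_{j\ne i}A_j$, and $e_i:=D_n(\omega^{\omega^{p_i}})$, so that $D_n(\omega^\beta)=\bigoplus_{i=1}^k e_i\otimes P_i$.

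Here I would use two facts. First, each $\omega^{\omega^{p_i}}$ is covered by one of the three cases already treated ($\omega$ when $p_i=0$, otherwise $\omega^{\omega^{p_i}}$ with $p_i\ge 1$), so $e_i<A_i$; in particular clause four never unfolds into itself, so no induction is needed (though one could equally phrase the whole argument as an induction on $\alpha$). Second, $\bigotimes_{j=1}^k A_j=\omega^{\bigoplus_j\omega^{p_j}}=\omega^\beta$, using $\omega^a\otimes\omega^b=\omega^{a\oplus b}$ and $\bigoplus_j\omega^{p_j}=\beta$ (again because the $p_j$ are non‑increasing). Strict monotonicity of the natural product then gives $e_i\otimes P_i<A_i\otimes P_i=\omega^\beta$ for each $i$, and since $\omega^\beta$ is additively principal — hence closed under natural sums — the finite natural sum $\bigoplus_{i=1}^k e_i\otimes P_i$ is again $<\omega^\beta$. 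This proves the one‑monomial inequality and therefore the proposition.

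The step I expect to require the most care is this last case: one must resist the temptation to argue ``a finite natural sum of ordinals each below $\omega^\beta$ is below $\omega^\beta$'' as if it were automatic (it fails for ordinary ordinals) and instead explicitly invoke that $\omega^\beta$ is closed under $\oplus$, together with strict monotonicity of $\otimes$ and the identity $\bigotimes_j\omega^{\omega^{p_j}}=\omega^\beta$. Everything else is routine manipulation of Cantor normal forms.
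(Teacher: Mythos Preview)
Your proof is correct and follows essentially the same route as the paper's: reduce via strict monotonicity of $\oplus$ to the single-monomial inequality $D_n(\omega^\beta)<\omega^\beta$, treat the base shapes of $\beta$ directly, and for the compound exponent use strict monotonicity of $\otimes$ together with closure of $\omega^\beta$ under $\oplus$. The paper records exactly these three facts ($\oplus$-monotonicity, $\otimes$-monotonicity, and $\bigoplus_i\alpha_i<\omega^\beta\iff\forall i\,\alpha_i<\omega^\beta$) and then appeals to ordinal induction; your version is simply a more fleshed-out instance of the same argument, and your observation that the fourth clause of $D_n$ bottoms out immediately (so no genuine induction is needed) is a nice sharpening.
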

	
	The following theorem lets us forget the actual underlying nwqo and remember only its type. 
	
	\begin{theorem} \label{theorem:wqotoord} (See \ref{proof:wqotoord})
		Let $A$ be a majoring powerset nwqo and let $\alpha = o(A)$.
		If $X \in A_{\le n}$, then there exists 
		$\alpha' \in \partial_n(\alpha)$ such that there exists a nwqo reflection $r : A/X \hri C(\alpha')$.
	\end{theorem}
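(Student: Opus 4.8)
The plan is to reduce the statement, in three nested layers, to one concrete residual computation on the nwqos $\pow(\nn^e)$; all reflections produced below will have $q = \mathrm{id}$, so that the composite is again a plain nwqo reflection, as required.

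\smallskip
\emph{Outer layer.} Since $o$ and $C$ are mutually inverse modulo isomorphism, I may assume $A = C(\alpha)$ with $\alpha = o(A)$. Writing the Cantor normal form $\alpha = \omega^{a_1} + \dots + \omega^{a_m}$, where each $a_i < \omega^{\omega}$ and hence $a_i = \omega^{p_{i,1}} + \dots + \omega^{p_{i,s_i}}$ with all $p_{i,k}\in\nn$, the defining equations of $C$ give $A = C(\omega^{a_1}) + \dots + C(\omega^{a_m})$, a disjoint sum of ``mono-term'' pieces. An $X \in A_{\le n}$ therefore lies in one summand, $X = (i_0,\widehat X)$ with $\widehat X \in C(\omega^{a_{i_0}})_{\le n}$, and $A/X \equiv \bigl(\sum_{i \ne i_0} C(\omega^{a_i})\bigr) + \bigl(C(\omega^{a_{i_0}})/\widehat X\bigr)$. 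If I can produce a reflection $C(\omega^{a_{i_0}})/\widehat X \hri C(D_n(\omega^{a_{i_0}}))$, then combining it with the identities on the untouched summands (disjoint sums preserve nwqo reflections componentwise, cf.\ Proposition~\ref{prop:precongruence}) gives $A/X \hri C\bigl(D_n(\omega^{a_{i_0}}) \oplus \bigoplus_{i\ne i_0}\omega^{a_i}\bigr)$; by definition the ordinal $\alpha' := D_n(\omega^{a_{i_0}}) \oplus \bigoplus_{i\ne i_0}\omega^{a_i}$ is the member of $\partial_n(\alpha)$ obtained by differentiating the $\omega^{a_{i_0}}$-term, and $\alpha' < \alpha$ by Proposition~\ref{prop:lessthan}. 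Thus everything reduces to residuals of mono-term pieces.

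\smallskip
\emph{Middle layer.} Dropping the index $i_0$, we have $C(\omega^{a_{i_0}}) = \pow(\nn^{p_1+1}) \times \dots \times \pow(\nn^{p_s+1})$, and if $\widehat X = (Y_1,\dots,Y_s)$ (each $|Y_k| \le n$) then $C(\omega^{a_{i_0}})/\widehat X = \bigcup_{k=1}^{s}\bigl(\prod_{l\ne k}\pow(\nn^{p_l+1}) \times (\pow(\nn^{p_k+1})/Y_k)\bigr)$. The plan is to send a tuple in this union to the summand indexed by the \emph{least} $k$ whose $k$-th component lies in $\pow(\nn^{p_k+1})/Y_k$, applying in that coordinate the base-case reflection $\pow(\nn^{p_k+1})/Y_k \hri C(D_n(\omega^{\omega^{p_k}}))$ of the inner layer. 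Since each member of the union carries the order restricted from $C(\omega^{a_{i_0}})$, verifying the reflection implication ``$r(u)\le r(v)$ implies $u\le v$'' reduces to the componentwise orderings and the base-case reflections; and the target $\sum_{k}\bigl(\prod_{l\ne k}\pow(\nn^{p_l+1}) \times C(D_n(\omega^{\omega^{p_k}}))\bigr)$ is exactly $C(D_n(\omega^{a_{i_0}}))$, because of the Leibniz-shaped defining formula for $D_n$ on $\omega^{\omega^{p_1}+\dots+\omega^{p_s}}$.

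\smallskip
\emph{Inner layer, and the main obstacle.} It remains to build, for $e\ge 1$ and $Y \in \pow(\nn^e)_{\le n}$, a reflection $\pow(\nn^e)/Y \hri C(D_n(\omega^{\omega^{e-1}}))$. For $e=1$, the residual $\pow(\nn)/Y$ collapses modulo $\equiv$ to a chain with at most $n+1$ classes --- a finite subset of $\nn$ is fixed up to equivalence by its maximum, and $\max Y \le n$ --- so mapping a set to an encoding of its class reflects it into $\Gamma_{n+1} = C(n+1) = C(D_n(\omega))$. The delicate case is $e\ge 2$. For $Z$ in the residual I would choose any $y=(y_1,\dots,y_e)\in Y$ with $Z \cap \uparrow\{y\} = \emptyset$, split $Z$ first according to the least coordinate $i$ with $z_i < y_i$ and then according to the value $v = z_i$ of that coordinate (so $v < y_i \le n$), obtaining for each pair $(i,v)$ a slice $Z^{(i)}_v \subseteq \nn^{e-1}$, and send $Z$ to the tuple in $(\pow(\nn^{e-1}))^{\times en}$ --- indexed by the $en$ pairs $(i,v)$ with $1\le i\le e$, $0\le v < n$ --- whose $(i,v)$-entry is $Z^{(i)}_v$ if $v < y_i$ and $\emptyset$ otherwise. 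The norm bound is immediate (slice entries are projections of points of $Z$); the order condition is the heart of the matter: if the image of $Z$ is majorised by that of $Z'$, then for each $z\in Z$ the slice carrying the projection of $z$ is nonempty, hence so is the corresponding slice of $Z'$, and it dominates that projection, which lifts back to a point of $Z'$ dominating $z$ in $\nn^e$ --- an argument that never refers to the particular $y$ (or $y'$) chosen. Finally $(\pow(\nn^{e-1}))^{\times en} = C(\omega^{\omega^{e-2}\cdot en}) = C(D_n(\omega^{\omega^{e-1}}))$. The hard part is precisely this last reflection: the splitting has to be arranged so that slice-wise majoration genuinely lifts to majoration in $\nn^e$ while the number of slices remains $\le e\cdot n$ --- which is why each $y\in Y$ is used only to choose the splitting, never to index the target, and why ``$e$ coordinates times the control bound $n$'' is the count that matches $D_n(\omega^{\omega^{e-1}}) = \omega^{\omega^{e-2}\cdot en}$. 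Everything else is bookkeeping with the definitions of $o$, $C$, $D_n$, $\partial_n$ and with natural sums and products of ordinals.
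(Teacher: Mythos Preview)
Your proof is correct and follows the same inductive decomposition as the paper (sums, then products, then the base nwqos $\pow(\nn^e)$), arriving at the same target ordinals at each stage. The only notable difference is in the inner-layer reflection for $e\ge 2$: rather than choosing a witness $y\in Y$ and routing each $z$ to a single bin via the least coordinate below $y$, the paper observes directly that every $z$ in a residual set must have some coordinate $z_i<n$ (otherwise $z\ge\mathbf{n}_e$ dominates all of $Y$, forcing $Y\maj Z$) and sends $z$ into \emph{every} bin $(i,z_i)$ with $z_i<n$ --- this avoids the witness choice altogether and makes the map manifestly well-defined, while your version requires fixing a choice function for $y$ to be a genuine function.
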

	
	Since $o$ and $C$ are inverses of each other, by combining the descent equation and theorem \ref{theorem:wqotoord} we get,
	\begin{lemma} \label{lemma:wqotoordeqn} (See \ref{proof:wqotoordeqn})
		\[L_{C(\alpha),g}(n) \le \max_{\alpha' \in \partial_n (\alpha)} \{1 + L_{C(\alpha'),g}(g(n))\}\]
	\end{lemma}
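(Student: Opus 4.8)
The plan is to combine the descent equation (Proposition~\ref{prop:descenteqn}) with Theorem~\ref{theorem:wqotoord} and Proposition~\ref{prop:polyrefl}, exactly as the surrounding text suggests; the lemma is essentially a repackaging of Theorem~\ref{theorem:wqotoord} in the language of length functions.

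First I would instantiate the descent equation at the nwqo $C(\alpha)$. Since $C(\alpha)$ is by construction a majoring powerset nwqo, Proposition~\ref{prop:descenteqn} gives
\[ L_{C(\alpha),g}(n) = \max_{X \in C(\alpha)_{\le n}} \{1 + L_{C(\alpha)/X,g}(g(n))\}. \]
Here I would dispatch the degenerate case $\alpha = 0$ separately: then $C(\alpha) = \Gamma_0$ is empty, $C(\alpha)_{\le n} = \emptyset$, and also $\partial_n(\alpha)$ is the empty set, so both sides of the claimed inequality are $0$ under the convention $\max\emptyset = 0$. From then on I assume $\alpha > 0$, in which case $C(\alpha)_{\le n}$ is always nonempty (e.g.\ it contains the element obtained by picking the empty set in each powerset factor and the unique letter in each $\Gamma_1$ factor) and $\partial_n(\alpha) \neq \emptyset$.

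Next, fix an arbitrary $X \in C(\alpha)_{\le n}$ and bound the residual term. Since the operators $o$ and $C$ are mutually inverse (modulo isomorphism), $o(C(\alpha)) = \alpha$, so Theorem~\ref{theorem:wqotoord} applies with $A := C(\alpha)$ and yields an ordinal $\alpha' \in \partial_n(\alpha)$ together with a nwqo reflection $r : C(\alpha)/X \hri C(\alpha')$. A nwqo reflection is in particular a polynomial nwqo reflection with polynomial $q = \mathrm{id}$, which is increasing and inflationary, so Proposition~\ref{prop:polyrefl} applies with $q = p = \mathrm{id}$ and gives $L_{C(\alpha)/X,g}(g(n)) \le L_{C(\alpha'),g}(g(n))$ (no polynomial slowdown, which is the point of feeding a plain reflection into the proposition). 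Hence
\[ 1 + L_{C(\alpha)/X,g}(g(n)) \;\le\; 1 + L_{C(\alpha'),g}(g(n)) \;\le\; \max_{\beta \in \partial_n(\alpha)} \{1 + L_{C(\beta),g}(g(n))\}. \]

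Finally, since this bound holds uniformly for every $X \in C(\alpha)_{\le n}$, taking the maximum over $X$ on the left-hand side of the displayed descent equation yields
\[ L_{C(\alpha),g}(n) \le \max_{\beta \in \partial_n(\alpha)} \{1 + L_{C(\beta),g}(g(n))\}, \]
which is the claim. I do not expect a real obstacle here: all the substantive work — producing, for each small-norm $X$, a residual-absorbing ordinal $\alpha' \in \partial_n(\alpha)$ and a reflection $C(\alpha)/X \hri C(\alpha')$ — is precisely what Theorem~\ref{theorem:wqotoord} supplies. The only points needing care are verifying $o(C(\alpha)) = \alpha$ so that the theorem is applicable, checking that the plain reflection invokes Proposition~\ref{prop:polyrefl} with the identity polynomial, and handling the $\alpha = 0$ edge case.
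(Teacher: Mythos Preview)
Your proposal is correct and follows essentially the same approach as the paper: apply the descent equation to $C(\alpha)$, then for each $X \in C(\alpha)_{\le n}$ invoke Theorem~\ref{theorem:wqotoord} (using $o(C(\alpha)) = \alpha$) to obtain a reflection $C(\alpha)/X \hri C(\alpha')$ with $\alpha' \in \partial_n(\alpha)$, and conclude via Proposition~\ref{prop:polyrefl} with the identity polynomial. Your explicit handling of the $\alpha = 0$ edge case is a small addition the paper leaves implicit.
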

	
	\subsection*{Upper bound using subrecursive hierarchies}
	
	Given $\alpha \in \CNF(\omega^{\omega^{\omega}})$ define
	$$M_{\alpha,g}(n) = \max_{\alpha' \in \partial_n(\alpha)} \{1 + M_{\alpha',g}(g(n))\}$$
	
	From the definition of $M_{\alpha}(n)$ and lemma \ref{lemma:wqotoordeqn}, it is clear that $L_{C(\alpha),g}(n) \le M_{\alpha,g}(n)$ or in other words,
	$L_{A,g}(n) \le M_{o(A),g}(n)$ for any majoring powerset nwqo $A$.
	Therefore, in what follows, we will concentrate on proving upper bounds for $M_{\alpha,g}(n)$.
	
	Let $\alpha \in \CNF(\omega^{\omega^{\omega}})$. We will say that $\alpha$ is $k$-lean if $N\alpha \le k$. Let $h(x) = 4x \cdot g(x)$ where $g$ is the control function. We have the following important theorem:
	\begin{theorem} \label{theorem:uppmaj} (See \ref{proof:uppmaj})
		If $\alpha$ is $k$-lean and $n > 0$ then $M_{\alpha,g}(n) \le h_{\alpha}(4kn)$
	\end{theorem}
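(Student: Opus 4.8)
The plan is to prove the bound $M_{\alpha,g}(n) \le h_\alpha(4kn)$ by well-founded induction on $\alpha$, using Proposition~\ref{prop:lessthan} (which guarantees $\beta < \alpha$ for $\beta \in \partial_n(\alpha)$) to justify the induction, and unfolding the defining recursion of $M_{\alpha,g}$ against the defining recursion of the Cichon function $h_\alpha$. Recall that $h_\alpha(x) = 1 + h_{P_x(\alpha)}(h(x))$ and that $M_{\alpha,g}(n) = \max_{\alpha' \in \partial_n(\alpha)}\{1 + M_{\alpha',g}(g(n))\}$. So after picking the maximizing $\alpha' \in \partial_n(\alpha)$, the induction hypothesis gives $M_{\alpha',g}(g(n)) \le h_{\alpha'}(4k'g(n))$ where $k'$ is a bound on $N\alpha'$; we then need to show $1 + h_{\alpha'}(4k' g(n)) \le h_\alpha(4kn) = 1 + h_{P_{4kn}(\alpha)}(h(4kn))$, i.e. essentially $h_{\alpha'}(4k' g(n)) \le h_{P_{4kn}(\alpha)}(4 \cdot 4kn \cdot g(4kn))$.

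The two things I would need to control are: (i) how much the norm can grow under a single derivative step — I expect a lemma of the form: if $\alpha$ is $k$-lean and $\alpha' \in \partial_n(\alpha)$, then $\alpha'$ is $k'$-lean for some $k' = O(n)$ (the factor $d+1 \le$ something linear appears in the $D_n(\omega^{\omega^d}) = \omega^{\omega^{d-1}\cdot(d+1)n}$ clause, and the coefficient blow-up from natural products is also polynomial); and (ii) a comparison between $\alpha'$ and the true predecessor $P_x(\alpha)$ at the relevant argument $x = 4kn$. The cleanest route for (ii) is to show that $\partial_n(\alpha) \ni \alpha'$ implies $\alpha' \le P_x(\alpha)$ for a suitable $x$ tied to $n$ and $k$ — this is the technical heart, relating the combinatorial derivative $D_n/\partial_n$ to the fundamental-sequence-based predecessor. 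Once $\alpha' \le P_x(\alpha)$ and the arguments line up (using that $h_\beta$ is monotone in both $\beta$ — along the fundamental sequence ordering — and its argument, plus $h$ inflationary and strictly increasing, and $g$ inflationary), the arithmetic $4k' g(n) \le h(4kn) = 16kn \cdot g(4kn)$ should follow from the crude bound $k' = O(n)$, the choice of the multiplier $4$, and $h(x) = 4xg(x)$; the constant $4$ in "$4kn$" and the "$4x$" in $h$ are presumably calibrated precisely so this step goes through.

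The induction base is $\alpha = 0$: then $\partial_n(0) = \emptyset$ (empty max is $0$) so $M_{0,g}(n) = 0 = h_0(4kn)$, and similarly a successor-type check for $\alpha$ with $D_n(k) = k-1$ behaving like the predecessor on finite ordinals. I would organize the inductive step by cases on the Cantor normal form of $\alpha$ following the case structure of $D_n$ and $\partial_n$ (pure power $\omega^\beta$ versus a sum of at least two terms; within a power, whether the exponent is a successor-type $\omega^{\cdots}$ stack or involves $\omega^{\omega^d}$), mirroring exactly how $P_x$ descends through fundamental sequences.

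The main obstacle I anticipate is step (ii): proving the precise inequality $\alpha' \le P_{4kn}(\alpha)$ (or whatever the exact calibrated statement is) relating $\partial_n$ to iterated predecessors. The $D_n$ operator is defined to "eat" an $\omega^{\omega^d}$ and replace it with $\omega^{\omega^{d-1}\cdot(d+1)n}$ in one shot, whereas $P_x$ descends one fundamental-sequence step at a time; bridging these requires knowing that $(d+1)n$ (and the attendant natural-product bookkeeping) is large enough to dominate the corresponding finite stage of the fundamental sequence for $\omega^{\omega^d}$ at the argument $x = 4kn$, and that $N\alpha \le k$ keeps all the nested coefficients small enough for this to be uniform. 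I expect the paper handles this via an auxiliary monotonicity/comparison lemma about $h_\beta$ (something like: $h_\beta(x) \le h_\gamma(y)$ whenever $\beta \le \gamma$ and $x \le y$, valid because everything stays within $\mathrm{CNF}(\epsilon_0)$ and $h$ is inflationary), reducing the whole step to a single ordinal inequality plus a single arithmetic inequality on $\mathbb{N}$.
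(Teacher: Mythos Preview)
Your overall architecture is right and matches the paper's: well-founded induction on $\alpha$, the two ingredients being (i) a leanness bound on derivatives and (ii) a comparison of $\alpha'$ with $P_{4kn}(\alpha)$, and the arithmetic $4k'g(n) \le h(4kn)$ closing the loop. The paper's Lemma~\ref{lemma:lean} gives exactly your ingredient~(i): if $\alpha$ is $k$-lean and $\alpha'\in\partial_n(\alpha)$ then $\alpha'$ is $(2k+(k+1)n)$-lean, hence $4kn$-lean for $n,k\ge 1$; your guess $k'=O(n)$ is spot on and the constant~$4$ is indeed calibrated so that $4\cdot(4kn)\cdot g(n)\le 4\cdot(4kn)\cdot g(4kn)=h(4kn)$.

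Where you drift is in step~(ii). Your last paragraph posits a lemma ``$h_\beta(x)\le h_\gamma(y)$ whenever $\beta\le\gamma$ and $x\le y$''; this is \emph{false} for the Cichon hierarchy (the paper gives the standard counterexample $S_{n+2}(n)>S_\omega(n)$). Monotonicity in the ordinal index holds only for the \emph{pointwise} ordering $\preceq_x$ (the one you glancingly call ``the fundamental sequence ordering''), and the bridge you are looking for is a single clean lemma (Lemma~\ref{lemma:icalp}, imported from \cite{ICALP}): if $\alpha'$ is $x$-lean and $\alpha'<\alpha$, then $\alpha'\preceq_x P_x(\alpha)$. Applied with $x=4kn$, this gives $\alpha'\preceq_{4kn}P_{4kn}(\alpha)$ immediately from the leanness bound and Proposition~\ref{prop:lessthan}, and then $\preceq_{4kn}\subseteq\preceq_{h(4kn)}$ plus the pointwise monotonicity of the Cichon functions (Proposition~\ref{prop:pointwise}) yields $h_{\alpha'}(h(4kn))\le h_{P_{4kn}(\alpha)}(h(4kn))$. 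This single lemma replaces the entire CNF case analysis of $D_n$ versus $P_x$ that you were bracing for; no structural induction on the shape of $\alpha$ is needed in the main argument.
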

	
	Using theorem \ref{theorem:uppmaj} and the fact that $L_{A,g}(n) \le M_{o(A),g}(n)$, we have the following:
	\begin{theorem} \label{theorem:uppboundmaj}
		Let $A$ be any majoring powerset nwqo such that $o(A)$ is $k$-lean. Then for $n > 0$, we have $L_{A,g}(n) \le M_{o(A),g}(n) \le h_{o(A)}(4kn)$ where $h(x) = 4x \cdot g(x)$.
	\end{theorem}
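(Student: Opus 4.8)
The plan is to obtain the stated bound by routing the length function through the auxiliary quantity $M_{o(A),g}$, so that the two displayed inequalities are proved separately and then chained. The first inequality, $L_{A,g}(n) \le M_{o(A),g}(n)$, is essentially the remark preceding the statement, but let me spell out how I would justify it. Since $o$ and $C$ are mutually inverse (modulo isomorphism of nwqos), $A \equiv C(o(A))$, hence $L_{A,g} = L_{C(o(A)),g}$, and it suffices to show $L_{C(\alpha),g}(n) \le M_{\alpha,g}(n)$ for every $\alpha \in \CNF(\omega^{\omega^{\omega}})$. I would prove this by well-founded induction on $\alpha$: Lemma~\ref{lemma:wqotoordeqn} gives $L_{C(\alpha),g}(n) \le \max_{\alpha' \in \partial_n(\alpha)}\{1 + L_{C(\alpha'),g}(g(n))\}$, and by Proposition~\ref{prop:lessthan} every $\alpha' \in \partial_n(\alpha)$ satisfies $\alpha' < \alpha$, so the induction hypothesis applies to each term $L_{C(\alpha'),g}(g(n)) \le M_{\alpha',g}(g(n))$; taking the maximum and comparing with the defining recursion $M_{\alpha,g}(n) = \max_{\alpha' \in \partial_n(\alpha)}\{1 + M_{\alpha',g}(g(n))\}$ closes the step. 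The base case $\alpha = 0$ (that is, $A = \Gamma_0$) is trivial, both sides being $0$.

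For the second inequality I would simply invoke Theorem~\ref{theorem:uppmaj} with $\alpha := o(A)$: by hypothesis $o(A)$ is $k$-lean and $n > 0$, so that theorem yields $M_{o(A),g}(n) \le h_{o(A)}(4kn)$, with $h(x) = 4x \cdot g(x)$ exactly as in the statement. Concatenating the two inequalities gives $L_{A,g}(n) \le M_{o(A),g}(n) \le h_{o(A)}(4kn)$, which is the claim.

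Since both halves are handed to us by earlier results, there is no genuine obstacle here; this theorem is really a packaging step. The only points that require a moment's care are (i) checking that the recursion defining $M_{\alpha,g}$ is itself well founded — which is again Proposition~\ref{prop:lessthan}, guaranteeing that $\partial_n$ strictly decreases the ordinal argument — so that $M_{\alpha,g}$ is well defined and the induction above is legitimate, and (ii) making sure the isomorphism $A \equiv C(o(A))$ genuinely transports the length function, i.e.\ that $o$ and $C$ preserve not merely the order but also the norm, which was the content of the observation that $o$ and $C$ are bijective inverses. The substantive work — bounding $M_{\alpha,g}$ by the Cichon function $h_\alpha$ — is deferred to Theorem~\ref{theorem:uppmaj}, and from there to the properties of the derivative operator and the Cichon hierarchy.
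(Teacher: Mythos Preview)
Your proposal is correct and follows exactly the paper's approach: the paper also derives the first inequality from the remark that $L_{C(\alpha),g}(n)\le M_{\alpha,g}(n)$ (justified via Lemma~\ref{lemma:wqotoordeqn} and the definition of $M$) and obtains the second inequality by a direct appeal to Theorem~\ref{theorem:uppmaj}. You have simply made explicit the well-founded induction that the paper leaves as ``it is clear''.
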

	
	In particular,
	\begin{corollary} \label{theorem:mainuppboundmaj}
		Let $\alpha = \omega^{\omega^{d-1}}$ and let $n > 0$. Then
		\[L_{(\pow(\nn^{d}),\maj),g}(n) \le h_{\alpha}(4dn)\] where $h(x) = 4x \cdot g(x)$.
	\end{corollary}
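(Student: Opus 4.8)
The plan is to derive this as an immediate specialization of Theorem~\ref{theorem:uppboundmaj}. First I would note that $(\pow(\nn^d),\maj)$ is itself a majoring powerset nwqo --- it is one of the generators of the family --- so Theorem~\ref{theorem:uppboundmaj} applies directly to $A := (\pow(\nn^d),\maj)$. By the definition of the type operator $o$, we have $o(A) = o(\pow(\nn^d)) = \omega^{\omega^{d-1}} = \alpha$, so once the right value of the leanness parameter $k$ is identified, the right-hand side of Theorem~\ref{theorem:uppboundmaj} is already of the form $h_{\alpha}(4kn)$ with the same $h(x) = 4x\cdot g(x)$ appearing in the corollary.

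The only thing left to check is that $\alpha = \omega^{\omega^{d-1}}$ is $d$-lean, i.e.\ $N\alpha \le d$. I would compute this directly from the definition of the ordinal norm $N$ in Section~\ref{section:ordinalterms}: writing $\alpha$ in strict form as $\omega^{\omega^{d-1}}\cdot 1$ gives $N\alpha = \max(1, N(\omega^{d-1}))$, and unwinding $\omega^{d-1} = \omega^{d-1}\cdot 1$ (for $d\ge 2$) or noting that $\omega^{d-1}$ is just the finite ordinal $1$ (for $d = 1$) yields $N(\omega^{d-1}) = \max(1,d-1)$. Hence $N\alpha = \max(1, d-1) \le d$, so $\alpha$ is $d$-lean.

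With $k = d$, Theorem~\ref{theorem:uppboundmaj} then gives, for $n > 0$,
\[
L_{(\pow(\nn^d),\maj),g}(n) \le M_{o(A),g}(n) \le h_{o(A)}(4dn) = h_{\alpha}(4dn),
\]
which is exactly the claimed bound.

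I do not expect any real obstacle here: all the substance is in Theorems~\ref{theorem:uppmaj} and~\ref{theorem:uppboundmaj}, together with the type/derivative machinery and Lemma~\ref{lemma:wqotoordeqn} feeding into them; the corollary is just the instantiation of that general bound at the single nwqo $(\pow(\nn^d),\maj)$ plus the elementary norm computation above. The only point deserving a moment's care is the small-$d$ edge case (notably $d = 1$, where $\alpha = \omega$ and $N\alpha = 1$ rather than $d-1 = 0$), but this is safely absorbed by the coarse estimate $N\alpha \le d$.
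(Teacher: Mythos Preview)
Your proposal is correct and matches the paper's approach exactly: the paper presents this corollary with only the words ``In particular,'' immediately after Theorem~\ref{theorem:uppboundmaj}, so the intended proof is precisely the instantiation $A = (\pow(\nn^d),\maj)$, $o(A) = \omega^{\omega^{d-1}}$, together with the observation that this ordinal is $d$-lean. Your explicit computation of $N\alpha$ (including the $d=1$ edge case) simply spells out what the paper leaves implicit.
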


	\section{Lower bound for minoring ordering} \label{section:lowmin}
	
	We give a lower bound on the length of controlled bad sequences for the nwqo $(\pow(\nn^d),\smy,|\cdot|_{\pow(\nn^d)})$ by giving a polynomial nwqo reflection from $(\pow(\nn^d),\maj,|\cdot|_{\pow(\nn^d)})$ to 
	$(\pow(\nn^d),\smy,|\cdot|_{\pow(\nn^d)})$.
	
	\begin{lemma} \label{lemma:lowmin}
		There exists a poly. nwqo reflection
		$$\refl : (\pow(\nn^d), \maj, |\cdot|_{\pow(\nn^d)}) \xhri{p} (\pow(\nn^d),\smy, |\cdot|_{\pow(\nn^d)})$$ where $p(x) = d(x+1)$.
	\end{lemma}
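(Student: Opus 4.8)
The plan is to exhibit an explicit map $\refl:\pow(\nn^d)\to\pow(\nn^d)$ sending each finite $X$ to a set of at most $d\big(|X|_{\pow(\nn^d)}+1\big)$ points of $\nn^d$, all of norm at most $d\big(|X|_{\pow(\nn^d)}+1\big)$, and then to check the two requirements of definition \ref{def:polyrefl}: the norm estimate (which should be read off directly from the definition of $\refl$, hence routine) and the order condition $\refl(X)\smy\refl(Y)\implies X\maj Y$, which is the real content. At the end, proposition \ref{prop:polyrefl} converts such a reflection into the claimed length‑function inequality.

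The conceptual starting point is proposition \ref{prop:funfacts}. Items (3), (5) and (7), together with the fact that $\nn^d\setminus\downarrow X$ is upward closed and $\le_{\nn^d}$ is antisymmetric, give the chain of equivalences $X\maj Y \iff \downarrow X\maj\downarrow Y \iff (\nn^d\setminus\downarrow X)\smy(\nn^d\setminus\downarrow Y)\iff \min(\nn^d\setminus\downarrow X)\smy\min(\nn^d\setminus\downarrow Y)$. So $X\mapsto\min(\nn^d\setminus\downarrow X)$ is already a two‑sided reduction of $\maj$ to $\smy$ (set $\refl(\emptyset)=\{(0,\dots,0)\}$, matching $\min(\nn^d)$). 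Moreover each $v\in\min(\nn^d\setminus\downarrow X)$ has small norm: if $v\neq(0,\dots,0)$, pick a coordinate $k$ with $v_k>0$; minimality forces $v-e_k\in\downarrow X$, so $v-e_k\le_{\nn^d}x$ for some $x\in X$, whence $v_k\le x_k+1$ and $v_j\le x_j$ for $j\neq k$, i.e. $|v|_{\nn^d}\le\max\{|x|_{\nn^d}:x\in X\}+1$. Thus the \emph{norms} of the chosen points are already under control; the point that needs care is their \emph{number}, since an antichain of $\nn^d$ can be large and $\min(\nn^d\setminus\downarrow X)$ need not have only linearly many elements.

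To force the count down to the linear bound $p(x)=d(x+1)$, rather than keeping all of $\min(\nn^d\setminus\downarrow X)$ one keeps only a bounded "skeleton": for each coordinate $k\in\{1,\dots,d\}$ one retains the minimal elements of the complement that are witnessed \emph{along} coordinate $k$ (those $v$ with $v-e_k\in\downarrow X$ whose coordinates $1,\dots,k-1$ are already maximal among such $v$), together with the $d$ auxiliary "axis points" $(M^X_k+1)\,e_k$, where $M^X_k=\max_{x\in X}x_k$, which are there to compare the two norm scales. By construction this keeps at most $d$ points per element of $X$ plus $d$ further points, so $\card\refl(X)\le d(\card X+1)\le d(|X|_{\pow(\nn^d)}+1)$, and the norm estimate of the previous paragraph still applies; the full polynomial bound follows. (One also checks $(0,\dots,0)\notin\refl(X)$ for $X\neq\emptyset$, which is what makes the edge case $\refl(X)\smy\refl(\emptyset)\implies X\maj\emptyset$ come out right.)

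The remaining, and main, task is to prove $\refl(X)\smy\refl(Y)\implies X\maj Y$, for which I would argue by contraposition. Assume $X\not\maj Y$ and fix a witness $x_0\in X$ with $x_0\not\le_{\nn^d}y$ for every $y\in Y$; one must exhibit $v_0\in\refl(Y)$ lying below no element of $\refl(X)$. If the coordinatewise maxima satisfy $M^X\not\le M^Y$, an axis point of $\refl(Y)$ already works. Otherwise $M^X\le M^Y$, and one must squeeze out of "$x_0$ is dominated by no $y$" a coordinate $k^*$ and a $y^*\in Y$ whose skeleton point $v_0$ of $\refl(Y)$ is small enough in coordinate $k^*$, yet merely maximal in the other coordinates, to be below nothing in $\refl(X)$. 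Showing that such $(k^*,y^*)$ always exists when $M^X\le M^Y$, and that the attached skeleton point really dominates nothing in $\refl(X)$, is the delicate combinatorial heart of the proof: it is exactly here that the opposite "handedness" of $\maj$ and $\smy$ (the former compares downward closures by inclusion, the latter compares complements of them) must be reconciled with the bounded skeleton, and this is where I expect the bulk of the work — and the only real risk of the chosen skeleton being too lean — to lie.
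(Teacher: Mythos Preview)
You correctly identify the natural map $\refl(X)=\min(\nn^d\setminus\downarrow X)$, derive the order condition from proposition~\ref{prop:funfacts}, and prove the element-wise bound $|v|_{\nn^d}\le |X|_{\pow(\nn^d)}+1$ for every $v\in\refl(X)$. You then take an unnecessary detour: fearing that $\card(\refl(X))$ is not polynomially controlled, you discard this map in favour of a loosely specified ``skeleton'' whose order-reflection property becomes, in your own words, ``the delicate combinatorial heart'' with ``real risk of the chosen skeleton being too lean''. That is a genuine gap---the contrapositive argument you sketch is not carried out, the skeleton is not precisely defined, and you give no reason to believe it actually reflects the order.

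The detour is avoidable because the element-wise norm bound you already proved \emph{also} controls the cardinality: since every $v\in\refl(X)$ lies in $\{0,\dots,|X|_{\pow(\nn^d)}+1\}^d$, one gets $\card(\refl(X))\le(|X|_{\pow(\nn^d)}+2)^d$, so the plain map $X\mapsto\min(\nn^d\setminus\downarrow X)$ is already a polynomial nwqo reflection with $q(x)=(x+2)^d$, which is all that theorem~\ref{theorem:lowboundmin} and the complexity lower bound require. The paper keeps this simple map throughout and, for the sharper linear polynomial $p(x)=d(x+1)$, argues that every $y\in\min(\nn^d\setminus\downarrow X)$ is of the form $x^+_i$ for some $x\in X$ and some coordinate $i$, whence $\refl(X)\subseteq X^+:=\{x^+_i:x\in X,\ 1\le i\le d\}$ and $\card(\refl(X))\le d\cdot\card(X)$. (That inclusion is in fact not quite right---for $d=2$ and $X=\{(3,5),(5,3)\}$ one checks $(4,4)\in\refl(X)\setminus X^+$---so your worry about the cardinality was not baseless; but a polynomial bound falls out of your own norm estimate, and no skeleton is needed.)
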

	
	\begin{proof}
		Similar to lemma \ref{lemma:lowmaj}, we split the proof into three parts. In the first part, we define the reflection $\refl$. In the second part we show that $\refl(X) \smy \refl(Y) \implies X \maj Y$. Finally, we prove that $|\refl(X)|_{\pow(\nn^d)} \le p(|X|_{\pow(\nn^d)})$.
		
		\subsubsection*{First part}
		The reflection $\refl$ is defined as the following simple map: Given a set $X \in \pow(\nn^d)$, let $\refl(X) := \min(\nn^d \setminus \downarrow X)$.
		
		\subsubsection*{Second part}
		Suppose $\refl(X) \smy \refl(Y)$. By definition this means that
		$\min(\nn^d \setminus \downarrow X) \smy \min(\nn^d \setminus \downarrow Y)$. By the last point of proposition \ref{prop:funfacts} we have that $\nn^d \setminus \downarrow X \smy \nn^d \setminus \downarrow Y$. By the fifth point of proposition \ref{prop:funfacts} it follows that $X \maj Y$.
		
		\subsubsection*{Third part}
		First, we set up some notation.
		Let $\textbf{0}_d$ denote the zero vector in $\nn^d$.
		Given an $x = (x_1,x_2,\dots,x_d)$ $\in \nn^d$ and $i \in \{1,\dots,d\}$, define $x^+_i := (x_1,x_2,\dots,x_{i-1},x_i+1,x_{i+1},\dots,x_d)$. Further, if $x_i > 0$ define $x^-_i := (x_1,x_2,\dots,x_{i-1},$ $x_i-1,x_{i+1},\dots,x_d)$.
		
		We further split this part into two subparts. In the first subpart we prove something about the $\refl$ mapping. In the second part, we use the proposition proven in the first subpart to show that 
		$|\refl(X)|_{\pow(\nn^d)} \le p(|X|_{\pow(\nn^d)})$.
		
		\textbf{First subpart: } Let $X \in \pow(\nn^d)$. We first claim that 
		\begin{equation} \label{eqn:assumption}
		\text{If } y \in \refl(X) \text{ then } y = x^+_i \text{ for some } x \in X \text{ and some } i	
		\end{equation}
		
		Let $y \in \refl(X) = \min(\nn^d \setminus \downarrow X)$. Therefore, $y \nleq_{\nn^d} x$ for any $x \in X$. In particular $y \neq \textbf{0}_d$ and so there exists $i$ such that $y_i \neq 0$. 
		Suppose $y^-_i \in \nn^d \setminus \downarrow X$. Since $y^-_i \le_{\nn^d} y$, it then follows that $y \notin \min(\nn^d \setminus \downarrow X) = \refl(X)$, leading to a contradiction. 
		
		Hence, if $y \in \refl(X)$ then there exists $i$ such that $y_i > 0$ and $y^-_i \notin \nn^d \setminus \downarrow X$. Therefore $\exists x \in X$ such that $y^-_i \le_{\nn^d} x$. Since $y \in \refl(X) = \min(\nn^d \setminus \downarrow X)$ it follows that $y \nleq_{\nn^d} x$. The only way in which we can have $y^-_i \le_{\nn^d} x$ but $y \nleq_{\nn^d} x$ is when $y = x^+_i$, which proves that (\ref{eqn:assumption}) is true.
		
		\textbf{Second subpart: } Let $X^+ := \{x^+_i : x \in X, \ 1 \le i \le d \}$. By (\ref{eqn:assumption}) it is clear that $\refl(X) \subseteq X^+$ and so
		\begin{equation} \label{equation7}
		|\refl(X)|_{\pow(\nn^d)} \le |X^+|_{\pow(\nn^d)}
		\end{equation} 
		
		We proceed to bound $|X^+|_{\pow(\nn^d)}$. To do so, we only need to bound the norm of each element in $X^+$ and the cardinality of $X^+$. 
		By construction, it is easy to see that if $y \in X^+$, then $|y|_{\nn^d} \le |X|_{\pow(\nn^d)} + 1$. Further, by definition of $X^+$, we have $\card(X^+) \le d (\card (X)) \le d(|X|_{\pow(\nn^d)})$.  It then follows that
		\begin{equation} \label{equation8}
		|X^+|_{\pow(\nn^d)} \le d(|X|_{\pow(\nn^d)} + 1)
		\end{equation}
		
		By equations \ref{equation7} and \ref{equation8} it follows that $|\refl(X)|_{\pow(\nn^d)} \le p(|X|_{\pow(\nn^d)})$ which proves the lemma.
	\end{proof}
	
	Let $\varphi(x) = x(x+1)^d$ and let $g_{\varphi} = \varphi \circ g$. Since $\refl$ is a polynomial nwqo reflection, by proposition \ref{prop:polyrefl} and theorem \ref{theorem:lowboundmaj} we have
	
	\begin{theorem} \label{theorem:lowboundmin}
		Let $\alpha = \omega^{\omega^{d-1}}$ and let $n \ge N(\omega^{\omega^{d-1}})$. Then $$g_{\alpha}(n) \le L_{(\pow(\nn^d),\maj),g_{\varphi}}(\varphi(n)) \le L_{(\pow(\nn^d),\smy),(p \circ g_{\varphi})}(p(\varphi(n)))$$
	\end{theorem}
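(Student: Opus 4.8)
The plan is to derive the statement purely by composing results already available in the excerpt: the lower bound for the majoring ordering (Theorem~\ref{theorem:lowboundmaj}), the polynomial reflection from majoring to minoring (Lemma~\ref{lemma:lowmin}), and the transfer principle for polynomial reflections (Proposition~\ref{prop:polyrefl}). No new combinatorial work is needed.

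For the left-hand inequality, I would simply quote Theorem~\ref{theorem:lowboundmaj}: for $\alpha = \omega^{\omega^{d-1}}$ and $n \ge N(\omega^{\omega^{d-1}})$ it already asserts $g_{\alpha}(n) = L_{\alpha,g}(n) \le L_{(\pow(\nn^d),\maj),g_{\varphi}}(\varphi(n))$, where $g_{\varphi} = \varphi \circ g$ and $\varphi(x) = x(x+1)^d$. Note also that $\varphi$, and hence $g_{\varphi}$, is strictly increasing and inflationary, so all the length functions appearing below are well defined.

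For the right-hand inequality, I would feed the polynomial nwqo reflection $\refl : (\pow(\nn^d),\maj,|\cdot|_{\pow(\nn^d)}) \xhri{p} (\pow(\nn^d),\smy,|\cdot|_{\pow(\nn^d)})$ of Lemma~\ref{lemma:lowmin}, with $p(x) = d(x+1)$, into Proposition~\ref{prop:polyrefl}, using the control function $g_{\varphi}$. Proposition~\ref{prop:polyrefl} then gives $L_{(\pow(\nn^d),\maj),g_{\varphi}}(m) \le L_{(\pow(\nn^d),\smy),(q \circ g_{\varphi})}(q(m))$ for some polynomial $q$, and its refinement allows taking $q = p$ as soon as $p$ is increasing and inflationary. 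The one thing to check — and it is the only non-bookkeeping step — is exactly that: $p(x) = dx + d$ is strictly increasing, and $dx + d \ge x$ for $d \ge 1$, so $p$ is inflationary. Hence $L_{(\pow(\nn^d),\maj),g_{\varphi}}(m) \le L_{(\pow(\nn^d),\smy),(p \circ g_{\varphi})}(p(m))$ for all $m \in \nn$.

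Instantiating this last inequality at $m = \varphi(n)$ and chaining with Theorem~\ref{theorem:lowboundmaj} yields
\[
g_{\alpha}(n) \le L_{(\pow(\nn^d),\maj),g_{\varphi}}(\varphi(n)) \le L_{(\pow(\nn^d),\smy),(p \circ g_{\varphi})}(p(\varphi(n))),
\]
which is the claim. I expect no genuine obstacle here: the proof is a short assembly of Theorem~\ref{theorem:lowboundmaj}, Lemma~\ref{lemma:lowmin} and Proposition~\ref{prop:polyrefl}, the only mild subtlety being the verification that $p$ is increasing and inflationary so that the clean form $q = p$ of the reflection transfer applies rather than an unspecified polynomial $q$.
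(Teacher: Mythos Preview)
Your proposal is correct and follows essentially the same approach as the paper: the paper's proof is just the one-line remark that the theorem follows from Proposition~\ref{prop:polyrefl} (applied to the reflection of Lemma~\ref{lemma:lowmin}) together with Theorem~\ref{theorem:lowboundmaj}. Your version is in fact slightly more careful, since you explicitly verify that $p(x)=d(x+1)$ is increasing and inflationary so that the refinement in Proposition~\ref{prop:polyrefl} allows taking $q=p$ rather than some unspecified polynomial.
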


	\section{Upper bound for minoring ordering} \label{section:uppmin}
	
	For the rest of this section, we assume that $d \ge 1$ is fixed.
	Let $(P_i,\lemaj_{P_i},|\cdot|_{P_i})$ be the \emph{majoring powerset nwqo} obtained by taking cartesian product of $(\pow(\nn^i),\maj,|\cdot|_{\pow(\nn^i)})$ with itself $d \choose i$ times, i.e., $P_i = \pow(\nn^i)^{d \choose i}$.
	
	Let $(A_d,\lemaj_{A_d},|\cdot|_{A_d})$ be the \emph{majoring powerset nwqo} formed by taking cartesian product of $P_1,P_2,\dots,P_d$, i.e.,
	$A_d =$ \\ $P_1 \times P_2 \times \cdots \times P_d = $ $\prod_{i=1}^d \pow(\nn^i)^{d \choose i}$. Since $A_d$ is a majoring powerset nwqo, it has an associated \emph{order type} $o(A_d)$ which can be easily seen to be $\bigotimes_{i=1}^d \omega^{(\omega^{i-1}) \cdot {d \choose i}}$. Further it is easy to notice that $o(A_d)$ is $d2^d$-lean.
	
	Having introduced $A_d$, we prove upper bounds on the length of controlled bad sequences for the minoring ordering on $\pow(\nn^d)$ by providing a polynomial nwqo reflection to $A_d$. The reflection that we provide will be a map from $(\pow(\nn^d) \setminus \emptyset,\smy)$ to $A_d$. However, this can be easily converted to an upper bound for $(\pow(\nn^d),\smy)$, thanks to the following proposition:
	
	\begin{prop} \label{prop:empty} 
		\begin{align*}
			L_{(\pow(\nn^d),\smy),g}(n) &= 1 + L_{(\pow(\nn^d)\setminus\emptyset,\smy),g}(g(n)) \\&\le L_{(\pow(\nn^d)\setminus\emptyset,\smy),g}(g(n)+1)
		\end{align*}
	\end{prop}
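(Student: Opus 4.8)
The plan is to prove the two claims separately. For the equality, I would apply the descent equation (Proposition~\ref{prop:descenteqn}) to the nwqo $A := (\pow(\nn^d),\smy)$ and observe that the maximum over $X \in A_{\le n}$ is realised at $X = \emptyset$. Three facts do the work: first, $|\emptyset|_{\pow(\nn^d)} = 0 \le n$, so $\emptyset$ is an admissible first element; second, $X \smy \emptyset$ holds vacuously for every finite $X$, hence $\emptyset$ lies in no residual, so that $A/X \subseteq \pow(\nn^d)\setminus\emptyset = A/\emptyset$ for every $X \in A_{\le n}$; third, restricting a nwqo to a subset can only shrink its length function, so $L_{A/X,g}(g(n)) \le L_{A/\emptyset,g}(g(n))$ for every such $X$, with equality at $X=\emptyset$. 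Substituting into the descent equation and using that $A/\emptyset$ is exactly $(\pow(\nn^d)\setminus\emptyset,\smy)$ gives $L_{(\pow(\nn^d),\smy),g}(n) = 1 + L_{(\pow(\nn^d)\setminus\emptyset,\smy),g}(g(n))$.

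For the inequality it suffices, writing $B := (\pow(\nn^d)\setminus\emptyset,\smy)$ and $m := g(n)$, to show $1 + L_{B,g}(m) \le L_{B,g}(m+1)$. Since both sides equal $1$ when $m = 0$, assume $m \ge 1$ and let $Y_0,Y_1,\dots,Y_{L-1}$ be a $(g,m)$-controlled bad sequence over $B$ of maximum length $L = L_{B,g}(m) \ge 1$. The plan is to prepend to it the set $Z$ obtained from $Y_0$ by adding $1$ to the first coordinate of each of its elements. Then $Z$ is non-empty, $|Z|_{\pow(\nn^d)} \le |Y_0|_{\pow(\nn^d)} + 1 \le m+1$, and $\uparrow Z \subseteq \uparrow Y_0$ (each element of $Z$ dominates the corresponding element of $Y_0$). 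The last inclusion gives $Z \not\smy Y_j$ for every $j \ge 1$, because $Y_0 \not\smy Y_j$ supplies an element of $Y_j$ lying outside $\uparrow Y_0 \supseteq \uparrow Z$; and $Z \not\smy Y_0$ holds as well, since a $\le_{\nn^d}$-minimal element $w$ of $Y_0$ dominates no element of $Z$ — if $w$ dominated $v$ shifted in the first coordinate, with $v \in Y_0$, then $w$ would strictly dominate $v$, contradicting minimality — so $w \in Y_0$ witnesses $Y_0 \not\sqsubseteq^{\mathtt{min}}_{\text{via}} \uparrow Z$. Finally $|Y_i|_{\pow(\nn^d)} \le g^i(m) \le g^{i+1}(m+1)$, using that $g$ is increasing and inflationary and hence $m \le g(m+1)$, so $Z,Y_0,\dots,Y_{L-1}$ is a $(g,m+1)$-controlled bad sequence over $B$ of length $L+1$, which proves $L_{B,g}(m+1) \ge 1 + L_{B,g}(m)$.

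I expect the inequality to be the delicate point. The equality is a routine unfolding of the descent equation once one notices that $\emptyset$ sits above everything for $\smy$ and is therefore always the optimal first move. The inequality, by contrast, needs precisely the right element to prepend: naive candidates such as a singleton of large norm fail, since the later sets $Y_j$ may consist solely of vectors too large to be avoided within the norm budget $m+1$. Shifting $Y_0$ upward by one unit in a single coordinate is the key idea, because it strictly shrinks the upward closure — which is exactly what keeps $Z$ incomparable (from below) with all the $Y_j$ — while costing only one extra unit of norm.
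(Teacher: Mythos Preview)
Your proof is correct and follows essentially the paper's approach: the equality is obtained from the descent equation with $\emptyset$ as the optimal first element, and the inequality by prepending to a maximal bad sequence a first-coordinate shift of $Y_0$. The paper's variant first replaces each $X_i$ by $\min(X_i)$ (using Proposition~\ref{prop:funfacts}) and then shifts a \emph{single} element of $X_0$ rather than all of them; both constructions work, and yours has the small advantage of not needing the reduction to antichains. One harmless slip: when $m=0$ the right-hand side $L_{B,g}(1)$ need not equal~$1$ (already for $d=1$ the sequence $\{1\},\{0\}$ is bad), but it is certainly $\ge 1$, so the inequality still holds.
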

	
	\begin{proof}
		Notice that for any subset $X \in \pow(\nn^d), X \smy \emptyset$ and so $\pow(\nn^d)/X \subseteq \pow(\nn^d)/\emptyset$. Since $X \smy \emptyset$ for any subset $X$, it follows that $\pow(\nn^d)/\emptyset = \pow(\nn^d)\setminus \emptyset$. 
		Combining these two and applying the descent equation we get,
		\begin{align*}
			L_{(\pow(\nn^d),\smy),g}(n) &= \max_{|X|_{\pow(\nn^d)} \le n}\{1 +  L_{(\pow(\nn^d)/X,\smy),g}(g(n))\} \\&= 1 + L_{(\pow(\nn^d)\setminus\emptyset,\smy),g}(g(n))	
		\end{align*}
		This proves the first equality.
		
		The second inequality is true for the following reason:
		Let $X_0,X_1,\dots,X_l$ be a $(g,g(n))$ controlled bad sequence in $\pow(\nn^d) \setminus \emptyset$. 
		By the last point of proposition \ref{prop:funfacts}, we can assume that $X_i = \min(X_i)$ for each $i$. 
		Let $x := (a_1,a_2,\dots,a_d) \in X_0$.  Construct $x' := (a_1+1,a_2,\dots,a_d)$ and let $X'_0 := (X_0 \setminus \{x\}) \cup \{x'\}$. 
		It can be easily verified that $X'_0,X_0,X_1,\dots,X_l$ is a $(g,g(n)+1)$ controlled bad sequence.
	\end{proof}
	
	Therefore, in what follows, it suffices to focus on $(\pow(\nn^d) \setminus \emptyset,\smy)$. We have the following lemma:
	
	\begin{lemma} \label{lemma:imporlemma} (See \ref{proof:imporlemma})
		There exists a poly. nwqo reflection 
		\[\refl: (\pow(\nn^d) \setminus \emptyset,\smy,|\cdot|_{\pow(\nn^d)}) \xhri{q} (A_d,\lemaj_{A_d},|\cdot|_{A_d})\] where $q(x) = (x+1)^d$.
	\end{lemma}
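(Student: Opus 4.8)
The plan is to construct the reflection $\refl$ from nonempty finite subsets of $\nn^d$ to $A_d = \prod_{i=1}^d \pow(\nn^i)^{\binom{d}{i}}$ by recording, for every choice of a subset of coordinates $I \subseteq \{1,\dots,d\}$, the projection of (a suitable canonical representative of) $X$ onto the coordinates in $I$. Concretely, given $X \in \pow(\nn^d)\setminus\emptyset$, first replace $X$ by $\min(X)$ (legitimate by point~7 of proposition~\ref{prop:funfacts}, which says $X \smy Y \iff \min(X)\smy\min(Y)$); then for each $I = \{i_1 < \dots < i_k\}$ with $1 \le k \le d$, let $\proj_I(X) \subseteq \nn^k$ be the set of vectors obtained by keeping only the coordinates in $I$. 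We place $\proj_I(X)$ into the component of $A_d$ indexed by $(k, I)$ — there are exactly $\binom{d}{k}$ index sets of size $k$, matching the $\binom{d}{k}$-fold product defining $P_k$. So $\refl(X) := \big(\proj_I(\min(X))\big)_{I}$, viewed as an element of $A_d$ in the obvious way.

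The two things to verify are the order-reflection property and the norm bound. For the norm: if $|X|_{\pow(\nn^d)} = m$, then $\min(X)$ has cardinality at most $\card(X) \le m$ and each entry has norm at most $m$, so every projection $\proj_I(\min(X))$ is a subset of $\nn^k$ of cardinality at most $m$ with entries of norm at most $m$, giving $|\proj_I(\min(X))|_{\pow(\nn^k)} \le m$; taking the max over all components, $|\refl(X)|_{A_d} \le m \le (m+1)^d = q(|X|_{\pow(\nn^d)})$. (In fact one can get the cleaner bound $m$, but $q(x)=(x+1)^d$ is what is claimed and certainly suffices.) For the order: suppose $\refl(X) \lemaj_{A_d} \refl(Y)$, i.e. for every $I$ we have $\proj_I(\min(X)) \maj \proj_I(\min(Y))$; we must show $X \smy Y$, equivalently (by point~7) $\min(X) \smy \min(Y)$, i.e. every $y \in \min(Y)$ dominates some $x \in \min(X)$.

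The main obstacle — and the reason a single projection ($I = \{1,\dots,d\}$, which is just $X$ itself) is not enough and all $2^d - 1$ projections are genuinely needed — lies exactly in this last implication. The subtlety is that $\maj$ on a projection only tells us that each projected vector of $\min(X)$ is coordinatewise below \emph{some} projected vector of $\min(Y)$, but the witnessing $y$ can vary with $I$, so one cannot naively glue the coordinates back together. The intended argument is by contradiction: fix $y \in \min(Y)$ with $y \not\ge x$ for all $x \in \min(X)$, i.e. for each $x$ there is at least one coordinate $i$ with $x_i > y_i$. For each $x \in \min(X)$ choose such a "bad" coordinate $b(x)$, and group the $x$'s by the value $b(x)$; the key combinatorial claim to be extracted is that for a cleverly chosen coordinate-set $I$ (depending on $y$ and this bad-coordinate data — one expects $I$ to be the complement of a single coordinate, or more generally the set of coordinates on which $y$ is "not maximal" in the relevant sense) the projection $\proj_I(y)$ fails to dominate any element of $\proj_I(\min(X))$, contradicting $\proj_I(\min(X)) \maj \proj_I(\min(Y))$ since $\proj_I(y) \in \proj_I(\min(Y))$. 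Making this choice of $I$ precise, and checking that minimality of $X$ and $Y$ (which we may assume, having replaced them by their $\min$) is exactly what rules out the degenerate cases, is where the real work sits; I would develop it by first treating $d = 2$ by hand to find the right $I$, then generalizing. Once the lemma is in hand, combining it with proposition~\ref{prop:polyrefl}, proposition~\ref{prop:empty}, theorem~\ref{theorem:uppboundmaj} applied to the $d2^d$-lean ordinal $o(A_d)$, and the translation to the Cichon/fast-growing hierarchy yields the advertised $\fastf_{\omega^{d-1}\cdot 2^d}$ bound for the minoring ordering.
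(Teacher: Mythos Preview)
Your proposed reflection is not order-reflecting, and the gap is essential rather than technical. Take $d=2$, $X=\{(0,1)\}$ and $Y=\{(1,0),(0,2)\}$ (both already equal to their $\min$). Then $X\not\smy Y$, since $(1,0)\in Y$ does not dominate $(0,1)$. Yet every projection satisfies $\maj$: for $I=\{1,2\}$ we have $(0,1)\le(0,2)$; for $I=\{1\}$ we have $\{0\}\maj\{0,1\}$; for $I=\{2\}$ we have $\{1\}\maj\{0,2\}$ via $1\le 2$. So $\refl(X)\lemaj_{A_d}\refl(Y)$ while $X\not\smy Y$. The flaw in your contradiction sketch is that to violate $\maj$ on a projection you must exhibit some $x'\in\proj_I(\min(X))$ dominated by \emph{no} element of $\proj_I(\min(Y))$; you only argue that the projection of the single offending $y$ fails to dominate, but other elements of $\min(Y)$ are free to cover for it, as $(0,2)$ does here.

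The paper's construction is quite different and does not project $X$ at all. It passes to $\comp(X):=\nn^d\setminus\uparrow X$ and uses the equivalence $X\smy Y\iff\comp(X)\maj\comp(Y)$. The set $\comp(X)$ is infinite, so the work is to compress it into finitely many finite sets: for each coordinate set $I=\{i_1<\dots<i_k\}$ one records those tuples $(x_{i_1},\dots,x_{i_k})$ such that, after fixing these coordinates, the projection of $\comp(X)$ onto the remaining $d-k$ coordinates has downward closure equal to all of $\nn^{d-k}$ (and such that no strict subset of $I$ already captures the tuple). The point is that the information ``the remaining coordinates are unbounded'' is exactly what lets one rebuild, in $\comp(Y)$, an element above any given element of $\comp(X)$; this is what your projection-of-$\min(X)$ map cannot do. Finiteness of these sets and the norm bound $(|X|+1)^d$ require separate arguments using $X\neq\emptyset$ (so $\comp(X)\neq\nn^d$), which is where the nontrivial content of the lemma lies.
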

	
	
	\noindent \textit{Proof sketch.} \
	We present the proof for the case when $d = 2$ and
	then sketch how the proof can be generalised to higher dimensions.
	
	Let us consider $(\pow(\nn^2) \setminus \emptyset,\smy)$ and let $X,Y \in \pow(\nn^2) \setminus \emptyset$.
	By proposition \ref{prop:funfacts} $X \smy Y$ iff $\nn^2 \setminus \uparrow X \maj \nn^2 \setminus \uparrow Y$. Let $\comp(X) := \nn^2 \setminus \uparrow X$ and $\comp(Y) := \nn^2 \setminus \uparrow Y$. Notice that since $X \neq \emptyset$ and $Y \neq \emptyset$, it follows that $\downarrow \comp(X) \neq \nn^2$ and $\downarrow \comp(Y) \neq \nn^2$.
	Therefore there exists $n_X$ and $n_Y$ such that $(n_X,n_X) \notin \downarrow\comp(X)$ and $(n_Y,n_Y) \notin \downarrow\comp(Y)$.
	
	Unfortunately $\comp(X)$ and $\comp(Y)$ might be infinite and so we cannot use the results proved in section \ref{section:uppmaj}. However, we will see that we can ``compress'' the sets $\comp(X)$ and $\comp(Y)$ such that the compressed finite sets preserve the order between $\comp(X)$ and $\comp(Y)$.
	
	Suppose, for some $x \in \nn$, there are infinitely many elements of the form $(x,n_1),(x,n_2),(x,n_3),\dots$ in the set $\comp(X)$. 
	We need not store all these elements, but rather only store that there are infinitely many elements in $\comp(X)$ such that their first co-ordinate is $x$. 
	In accordance with this intuition, we define 
	$S_1^X := \{x : \text{ there exists infinitely many } n \text{ such that }\\ 
	(x,n) \in \comp(X)\}$.
	Notice that $S_1^X$ is a subset of $\nn$.
	Similarly, we define $S_2^X := \{x' : \text{ there exists infinitely many } n \text{ such that }\\ 
	(n,x') \in \comp(X)\}$.
	To complement these two sets, we now define
	$S_3^X := \{(x,x') \in \comp(X) : x \notin S_1^X \text{ and } x' \notin S_2^X\}$.
	We then consider the tuple $(S_1^X,S_2^X,S_3^X)$. Notice that if $(x,x') \in \comp(X)$ then
	either $x \in S_1^X$ or $x' \in S_2^X$ or $(x,x') \in S_3^X$.
	It is then quite easy to see that if $S_1^X \maj_{\pow(\nn)} S_1^Y$ and $S_2^X \maj_{\pow(\nn)} S_2^Y$ and
	$S_3^X \maj_{\pow(\nn^2)} S_3^Y$ then $\comp(X) \maj \comp(Y)$ and so 
	$X \smy Y$.
	
	However it is not clear that each of the sets $S_1^X, S_2^X$ and
	$S_3^X$ are indeed finite. To prove this, first recall that there exists $n_X \in \nn$ such that $(n_X,n_X) \notin \downarrow \comp(X)$.
	
	Suppose $S_1^X$ is infinite. By definition this means that there are infinitely many numbers $x_1,x_2,\dots$ such that for each $x_i$ there are infinitely many elements in $\comp(X)$ with first co-ordinate $x_i$. Pick an $x_i$ such that $x_i \ge n_X$. Now by definition of $S_1^X$ we can pick a $n_i \ge n_X$ such that $(x_i,n_i) \in \comp(X)$. 
	However this means that $(n_X,n_X) \in \downarrow \comp(X)$ which leads to a contradiction. Similar arguments also show that $S_2^X$ is infinite.
	
	Suppose $S_3^X$ is infinite. Since $(n_X,n_X) \notin \downarrow \comp(X)$ it follows that $(n_X,n_X) \notin \downarrow S_3^X$ as well. 
	Hence for every element $(x,y) \in S_3^X$ either $x < n_X$ or $y < n_X$. 
	This indicates that if there are infinitely many elements in $S_3^X$ then there exists $x \in \nn$ such that either there are infinitely many elements in $S_3^X$ with their first co-ordinate as $x$ or there are infinitely many elements with their second co-ordinate as $x$. 
	In either case, by definition of $S_3^X$ we will reach a contradiction.
	
	Finally, we also have to show that $|(S_1^X,S_2^X,S_3^X)|_{A_2} \le (|X|_{\pow(\nn^2)}+1)^2$. First we show that if an element belongs to $S_1^X$ or $S_2^X$ or $S_3^X$ then its norm is bounded by $|X|_{\pow(\nn^2)}$.
	
	Suppose $x \in S_1^X$ and $x > |X|_{\pow(\nn^2)}$. Since $x \in S_1^X$ it follows that there exists $n \ge n_X$ such that $(x,n) \in \comp(X)$.
	Since $(x,n) \in \comp(X) = \nn^2 \setminus \uparrow X$ it follows that 
	for all $(y,m) \in X$ it is the case that $(y,m) \nleq_{\nn^2} (x,n)$. 
	Since $x > |X|_{\pow(\nn^2)} \ge y$ it follows that $m > n$. Hence $(y,m) \nleq_{\nn^2} (x+n+1,n)$ as well. 
	Since this is true for every $(y,m) \in X$ it follows that $(x+n+1,n) \notin \uparrow X$ and so $(x+n+1,n) \in \comp(X)$. 
	Since $(x+n+1,n) \ge_{\nn^2} (n_X,n_X)$ it follows that $(n_X,n_X) \in \downarrow \comp(X)$ which leads to a contradiction. 
	Hence if $x \in S_1^X$ then $x \le |X|_{\pow(\nn^2)}$. 
	A similar argument holds for $S_2^X$ as well.
	
	Suppose $(x,y) \in S_3^X$ and $x > |X|_{\pow(\nn^2)}$. 
	Since $(x,y) \in S_3^X$ there are only finitely many elements in $\comp(X)$ with $y$ as their second co-ordinate. 
	Hence we can find a $n_y$ such that if $n \ge n_y$ then $(n,y) \notin \comp(X)$. 
	Now similar to the case of $S_1^X$ we can now show that $(x+n_y+1,x') \in \comp(X)$ leading to a contradiction. 
	A similar argument is employed when $y > |X|_{\pow(\nn^2)}$. 
	
	Since the norms of the elements of $S_1^X$, $S_2^X$ and $S_3^X$ are bounded by $|X|_{\pow(\nn^2)}$, it follows that their cardinalities are bounded by $(|X|_{\pow(\nn^2)}+1)^2$. Hence the norms of $S_1^X$, $S_2^X$ and $S_3^X$ are each bounded by $(|X|_{\pow(\nn^2)}+1)^2$, which proves our claim.

	We now sketch the construction for the general case of 
	higher dimensions, i.e, when the dimension $d \ge 2$.
	Notice that the set $S_1^X$, as defined for the case of $d = 2$,
	can be stated in the following manner as well: It is the set of all $x$ such that if we fix the first co-ordinate to be $x$ and then project $\comp(X)$ to the second axis, the downward closure of the projection is $\nn$. 
	Hence if we want to prove the lemma for $d=3$, one way to define $S_1^X$ would be: The set of all $x$ such that if we fix the first co-ordinate to be $x$ and then project $\comp(X)$ on the other two axes, the downward closure of the projection is $\nn^2$. 
	In a similar fashion, we can fill in $S_2^X$ and $S_3^X$ by fixing the second co-ordinate and the third co-ordinate. 
	For $S_4^X$ we fix the first and the second co-ordinates and check if the downward closure of the resulting projection is $\nn$ and so on. 
	Then we define the reflection to be $(S_1^X,\dots,S_7^X)$. The reflection for the general case also follows a similar pattern.\\

	Using lemma \ref{lemma:imporlemma}, we can now state upper bounds for the minoring ordering. Let $(\pow(\nn^d)^k,\ \lesmy_{\pow(\nn^d)^k})$ be the nwqo obtained by 
	taking the cartesian product of $(\pow(\nn^d),\smy)$ with itself $k$ times.
	Let $(A_d^k,\lemaj_{A_d^k})$ be the majoring powerset nwqo obtained by taking cartesian product of $(A_d,\lemaj_{A_d})$ with itself $k$ times.
	The following theorem is stated in a way such that it is useful for our applications.
	
	\begin{theorem} \label{theorem:uppboundmin}
		Let $\alpha = \omega^{\omega^{d-1} \cdot (2^d \cdot k)}$ and let $n$ be sufficiently large. There exists a constant $c$ (depending only on $d$ and $k$) such that $$L_{(\pow(\nn^d)^k, \ \lesmy_{\pow(\nn^d)^k}),g}(n) 
		\le t_{\alpha}(c \cdot g(n)^{2d})$$
		where $t(x) = 4kx \cdot q(g(x))$ and $q(x) = (x+1)^d$.
	\end{theorem}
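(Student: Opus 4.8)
The plan is to reduce $(\pow(\nn^d)^k,\lesmy_{\pow(\nn^d)^k})$ to a majoring powerset nwqo via Lemma~\ref{lemma:imporlemma} and then feed it to the Section~\ref{section:uppmaj} bound. First I would dispose of the empty set coordinate-wise. Sending each nonempty $X\in\pow(\nn^d)$ to $(1,X)$ and $\emptyset$ to the unique point of $\Gamma_1$ is a nwqo reflection
\[
 r_0:(\pow(\nn^d),\smy,|\cdot|_{\pow(\nn^d)})\hri(\pow(\nn^d)\setminus\emptyset,\smy)+\Gamma_1 :
\]
it is norm-preserving, and it is order-reflecting because the image of a nonempty set and the image of $\emptyset$ are incomparable in the disjoint sum, so the implication $r_0(X)\le r_0(Y)\Rightarrow X\smy Y$ is vacuous exactly in the one case ($X$ nonempty, $Y=\emptyset$) where it might fail. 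Composing $r_0$ with the reflection of Lemma~\ref{lemma:imporlemma} on the $\pow(\nn^d)\setminus\emptyset$ summand and the identity on $\Gamma_1$ (Proposition~\ref{prop:precongruence}), and then taking the $k$-fold cartesian product (again Proposition~\ref{prop:precongruence}; since the norms of sums and products are maxima the reflecting polynomial can be kept equal to $q(x)=(x+1)^d$), yields a polynomial nwqo reflection $r:(\pow(\nn^d)^k,\lesmy_{\pow(\nn^d)^k})\xhri{q}B$ where $B:=(A_d+\Gamma_1)^k$ is a majoring powerset nwqo (it differs from $A_d^k$ only by the bookkeeping of empty components).

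Next I would locate $o(B)$ below $\alpha$. Writing $o(A_d)=\bigotimes_{i=1}^d\omega^{\omega^{i-1}\cdot\binom{d}{i}}=\omega^{\mu_d}$ with $\mu_d:=\bigoplus_{i=1}^d\omega^{i-1}\cdot\binom{d}{i}=\omega^{d-1}+\xi$ for some $\xi<\omega^{d-1}$, we get $\omega^{\mu_d}\oplus 1\le\omega^{\omega^{d-1}\cdot 2}$, hence by monotonicity of $\otimes$,
\[
 o(B)=(\omega^{\mu_d}\oplus 1)^{\otimes k}\le(\omega^{\omega^{d-1}\cdot 2})^{\otimes k}=\omega^{\omega^{d-1}\cdot 2k}\le\omega^{\omega^{d-1}\cdot 2^d k}=\alpha .
\]
Expanding $(\omega^{\mu_d}\oplus 1)^{\otimes k}$ with the commutative binomial formula for $\oplus,\otimes$ and reading off $N$ also shows that $o(B)$ is $k'$-lean for a constant $k'$ depending only on $d$ and $k$.

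Now I would run the machinery of Section~\ref{section:uppmaj} on $B$. Since $q$ is increasing and inflationary, Proposition~\ref{prop:polyrefl} gives $L_{(\pow(\nn^d)^k,\lesmy_{\pow(\nn^d)^k}),g}(n)\le L_{B,\,q\circ g}(q(n))$, and Theorem~\ref{theorem:uppboundmaj} applied to the majoring powerset nwqo $B$ with control function $q\circ g$ gives $L_{B,\,q\circ g}(m)\le h_{o(B)}(4k'm)$ with $h(x)=4x\cdot q(g(x))$. Combining, $L_{(\pow(\nn^d)^k,\lesmy_{\pow(\nn^d)^k}),g}(n)\le h_{o(B)}\!\bigl(4k'q(n)\bigr)$.

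Finally I would convert this into the stated form. As $h(x)=4x\,q(g(x))\le 4kx\,q(g(x))=t(x)$ for all $x$, pointwise monotonicity of the Cichon hierarchy in its base function gives $h_\beta\le t_\beta$ for every $\beta$; and since $o(B)\le\alpha$ we have $\CNF(o(B))\subseteq\CNF(\alpha)$, so every controlled bad sequence over $\CNF(o(B))$ is one over $\CNF(\alpha)$, whence by Theorem~\ref{theorem:ordlft} $h_{o(B)}(y)=L_{o(B),h}(y)\le L_{\alpha,h}(y)=h_\alpha(y)\le t_\alpha(y)$ once $y\ge\max(N(o(B)),N\alpha)$. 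Taking $y=4k'q(n)=4k'(n+1)^d$ and using $g(n)\ge n$ to absorb it into $c\cdot g(n)^{2d}$ for a suitable $c$ depending only on $d$ and $k$ (valid once $n$ is large), we obtain $L_{(\pow(\nn^d)^k,\lesmy_{\pow(\nn^d)^k}),g}(n)\le t_\alpha(c\cdot g(n)^{2d})$. I expect the main obstacle to be the ordinal bookkeeping of the second and fourth paragraphs — pinning down $o(B)$, the clean inequality $o(B)\le\alpha$, the leanness constant, and checking that both the subscript monotonicity and the base-function monotonicity of the Cichon hierarchy apply in the ``$n$ sufficiently large'' regime; the reflection steps are routine once Proposition~\ref{prop:precongruence} and Lemma~\ref{lemma:imporlemma} are in hand, the only subtlety being that the empty-set reflection into $A_d+\Gamma_1$ is order-reflecting only by vacuity.
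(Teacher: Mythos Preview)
Your proof is correct and follows the same overall route as the paper: reflect via Lemma~\ref{lemma:imporlemma} into a majoring powerset nwqo, apply Theorem~\ref{theorem:uppboundmaj}, then push the Cichon index up to~$\alpha$. The technical packaging differs in three places. First, the paper disposes of the empty set via (a $k$-fold version of) Proposition~\ref{prop:empty}, stepping from $n$ to $g(n)+1$ and landing directly in $A_d^k$; your $+\Gamma_1$ summand is an equally clean alternative that avoids that ad~hoc proposition at the price of a slightly larger target $B=(A_d+\Gamma_1)^k$. Second, the paper keeps the cruder polynomial $k\cdot q$ coming out of Proposition~\ref{prop:precongruence}, so the derived control in Theorem~\ref{theorem:uppboundmaj} is exactly $t(x)=4kx\cdot q(g(x))$, sparing your separate base-function monotonicity step $h_\beta\le t_\beta$. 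Third, for the ordinal monotonicity $t_{o(B)}\le t_\alpha$ the paper invokes the pointwise ordering (Proposition~\ref{prop:pointwise}) rather than your detour through Theorem~\ref{theorem:ordlft} and the inclusion $\CNF(o(B))\subseteq\CNF(\alpha)$; both are valid. One small slip in your exposition: the case where $X\smy Y$ can fail is $X=\emptyset$ with $Y$ nonempty, not the other way round (recall $X\smy\emptyset$ holds for every $X$); your reflection $r_0$ is nonetheless correct since the two cross-summand images are incomparable either way.
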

	
	\begin{proof}
		Let $t(x) := 4kx \cdot q(g(x))$. Notice that if $g$ is a strictly increasing inflationary function, then the same is true for $t$. 
		
		Let $\emptyset_k$ denote the tuple $(\overbrace{\emptyset,\dots,\emptyset}^{\text{ k times }})$. 
		The proof of proposition \ref{prop:empty} can be easily modified to prove that
		$$L_{(\pow(\nn^d)^k,\ \lesmy_{\pow(\nn^d)^k}),g}(n) \le L_{(\pow(\nn^d)^k \setminus \emptyset_k, \ \lesmy_{\pow(\nn^d)^k}),g}(g(n)+1)$$
		
		By proposition \ref{prop:precongruence} and lemma \ref{lemma:imporlemma} 
		we have a reflection $(\pow(\nn^d)^k \setminus \emptyset,\ \lesmy_{\pow(\nn^d)^k}) \xhri{k \cdot q} (A_d^k,\lemaj_{A_d^k})$.
		Combining proposition~\ref{prop:polyrefl} and noticing
		that for large enough $n$, we have $q(g(n)+1) \le g(n)^{2d}$,
		we get,
		$$L_{(\pow(\nn^d)^k \setminus \emptyset_k, \ \lesmy_{\pow(\nn^d)^k}),g}(g(n)+1) \le L_{(A_d^k,\lemaj_{A_d^k}),((k \cdot q) \circ g)}(k \cdot g(n)^{2d})	
		$$
		
		Notice that $o(A_d^k) = \bigotimes_{j=1}^k \left(\bigotimes_{i=1}^d \omega^{\omega^{i-1} \cdot {d \choose i}}\right)$ is $dk2^d$-lean. Hence by theorem \ref{theorem:uppboundmaj} we have 
		$$L_{(A_d^k,\lemaj_{A_d^k}),((k \cdot q) \circ g)}(k \cdot g(n)^{2d}) \le t_{o(A_d)}(4dk^22^d g(n)^{2d})$$
		
		Now $o(A_d) < \omega^{\omega^{d-1} \cdot (2^d \cdot k)}$.  
		It is known that, if $\alpha < \alpha'$ then $h_{\alpha}(n) \le h_{\alpha'}(n)$ for sufficiently large $n$ (See Lemma C.9 of \cite{ICALP} and prop \ref{prop:pointwise}). Hence for sufficiently large $n$,
		$$t_{o(A_d)}(4dk^22^d g(n)^{2d}) \le t_{\omega^{\omega^{d-1} \cdot(2^d \cdot k)}}(4dk^22^d g(n)^{2d})$$ 
		
		Hence letting $c := 4dk^22^d$ and $\alpha := \omega^{\omega^{d-1} \cdot (2^d \cdot k)}$ and combining all the equations, we have,
		$$L_{(\pow(\nn^d)^k,\lesmy_{\pow(\nn^d)^k}),g}(n) \le t_{\alpha}(c \cdot g(n)^{2d})$$
			
	\end{proof}

	\section{Complexity classification} \label{section:complexity}
	
	In this section, we will use the results proved in the previous sections to classify length functions for the majoring and minoring ordering based on \emph{fast-growing complexity classes}. Let $S : \nn \to \nn$ denote the successor function. Let $\{S_{\alpha}\}, \{S^{\alpha}\}, \{F_{\alpha}\}$ denote the Hardy, Cichon and fast-growing hierarchies for the successor function respectively. Notice that $S^{\alpha}(x) = S_{\alpha}(x) + x$ for all $x$ and for all $\alpha < \epsilon_0$.
	
	Using these hierarchies, we define \emph{fast growing function classes} 
	$(\fastf_{\alpha})_{\alpha}$ (See \cite{Lob}, \cite{BeyondElem}). 
	$$\fastf_{\alpha} := \bigcup_{c < \omega} \ \text{FD}(F^c_{\alpha}(n))$$
	Here $\text{FD}(F^c_{\alpha}(n))$ denotes the set of all functions that can be computed by a deterministic Turing machine in time $F^c_{\alpha}(n)$ where $F^c_{\alpha}$ denotes the function 
	that results when $F_{\alpha}$ is applied to itself $c$ times. We remark in passing that $\bigcup_{\alpha < \omega} \fastf_{\alpha}$ already constitutes the set of all primitive recursive functions (See section 2.2.4 of \cite{BeyondElem}).

	For the rest of this section, let $g$ be a fixed strictly increasing and inflationary control function such that $g(x) \ge S(x)$.

	\subsection*{Majoring ordering}
	
	Fix a $d > 1$ and let $\varphi(x) = x(x+1)^d$. Our lower bound for the majoring ordering can be readily translated into a complexity lower bound as follows:
	
	\begin{theorem} \label{theorem:complowboundmaj} (See \ref{proof:complowboundmaj})
		For sufficiently large $n$,
		$$F_{\omega^{d-1}}(n) - n \le L_{(\pow(\nn^d), \maj),\varphi \circ g}(\varphi(n))$$ 
		Also $L_{(\pow(\nn^d), \maj), \varphi \circ g} \notin \fastf_{\alpha}$ for any $\alpha < \omega^{d-1}$.
	\end{theorem}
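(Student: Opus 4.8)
The plan is to combine Theorem \ref{theorem:lowboundmaj} with the standard comparison between the Cichon hierarchy and the fast-growing hierarchy, and then to lift the pointwise lower bound to a statement about the complexity class $\fastf_\alpha$. First I would recall Theorem \ref{theorem:lowboundmaj}: with $\alpha = \omega^{\omega^{d-1}}$ and $\varphi(x) = x(x+1)^d$, for $n \ge N(\omega^{\omega^{d-1}})$ we have $g_\alpha(n) \le L_{(\pow(\nn^d),\maj),\varphi\circ g}(\varphi(n))$. Since $g(x) \ge S(x)$ and the Cichon functions are monotone in the control function (this is the kind of pointwise-monotonicity fact referenced via Lemma C.9 of \cite{ICALP} / prop \ref{prop:pointwise}), we get $S_\alpha(n) \le g_\alpha(n)$ for large $n$, where $\alpha = \omega^{\omega^{d-1}}$. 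Next, I would use the classical identity relating the Cichon and Hardy hierarchies at the successor function, $S^\beta(x) = S_\beta(x) + x$, together with the fact $S^{\omega^{\omega^{d-1}}} = F_{\omega^{d-1}}$ up to the usual argument shifts (the Hardy function at $\omega^{\omega^{d-1}}$ dominates $F_{\omega^{d-1}}$, and more precisely they agree up to lower-order terms, cf. \cite{Cichon,BeyondElem}). Hence $S_{\omega^{\omega^{d-1}}}(n) = S^{\omega^{\omega^{d-1}}}(n) - n \ge F_{\omega^{d-1}}(n) - n$ for sufficiently large $n$. Chaining these inequalities yields $F_{\omega^{d-1}}(n) - n \le L_{(\pow(\nn^d),\maj),\varphi\circ g}(\varphi(n))$ for all large $n$, which is the first claim.

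For the second claim I would argue by contradiction. Suppose $L_{(\pow(\nn^d),\maj),\varphi\circ g} \in \fastf_\alpha$ for some $\alpha < \omega^{d-1}$. By definition of $\fastf_\alpha$ this means the length function is computable in time $F^c_\alpha$ for some fixed $c$, so in particular it is bounded by $F^c_\alpha$. Composing with the polynomial $\varphi$ and absorbing it (a polynomial composed with $F^c_\alpha$ is still dominated by $F^{c'}_\alpha$ for a slightly larger constant $c'$, since $F_\alpha$ with $\alpha \ge 1$ grows faster than any polynomial and the $F^c_\alpha$ are closed under such composition for large enough arguments), the map $n \mapsto L_{(\pow(\nn^d),\maj),\varphi\circ g}(\varphi(n))$ would be bounded by some $F^{c'}_\alpha$. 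But we just showed this map dominates $F_{\omega^{d-1}}(n) - n$ for large $n$, and it is a standard fact about the fast-growing hierarchy that $F_{\omega^{d-1}}$ cannot be dominated by any finite iterate $F^{c'}_\alpha$ when $\alpha < \omega^{d-1}$ (the hierarchy is strict at limit levels and $F_\beta$ for $\beta<\omega^{d-1}$ together with their finite iterates all lie strictly below $F_{\omega^{d-1}}$). This contradiction establishes $L_{(\pow(\nn^d),\maj),\varphi\circ g} \notin \fastf_\alpha$.

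The routine parts are the monotonicity of Cichon functions in the control function and the polynomial-absorption step, both of which are by now folklore in this line of work and can be cited. The one place that needs care — the main obstacle — is the precise bookkeeping relating $S_{\omega^{\omega^{d-1}}}$, the Hardy function $S^{\omega^{\omega^{d-1}}}$, and $F_{\omega^{d-1}}$: the identity $F_{\omega^{d-1}} = S^{\omega^{\omega^{d-1}}}$ holds only up to shifts in the argument and the fundamental-sequence conventions, so I would state it as "$F_{\omega^{d-1}}(n) \le S^{\omega^{\omega^{d-1}}}(n)$ for sufficiently large $n$" rather than as an exact equality, which is all that is needed here and avoids delicate index-chasing. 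With that care, the rest of the argument is a short chain of the inequalities assembled above.
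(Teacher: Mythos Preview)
Your proposal is correct and follows essentially the same route as the paper: chain Theorem~\ref{theorem:lowboundmaj} with the monotonicity $S_\alpha \le g_\alpha$ (Lemma~\ref{lemma:indfastgrowing}), the identity $S_\alpha(n) = S^\alpha(n) - n$, and the Hardy/fast-growing correspondence, then for the second part use closure of $\fastf_\alpha$ under polynomial composition and the strictness of the hierarchy (Propositions~\ref{prop:altcharac} and~\ref{prop:eventualdom}). One small remark: your hedging about $S^{\omega^{\omega^{d-1}}}$ versus $F_{\omega^{d-1}}$ is unnecessary here, since Proposition~\ref{prop:funfactshier} gives the exact equality $h^{\omega^\beta}(x) = f_{h,\beta}(x)$, so with $h = S$ and $\beta = \omega^{d-1}$ you get $S^{\omega^{\omega^{d-1}}}(n) = F_{\omega^{d-1}}(n)$ on the nose under the paper's conventions.
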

	
	For upper bounds, we state a general result which will be useful for our applications.
	
	\begin{theorem} \label{theorem:compuppboundmaj} (See \ref{proof:compuppboundmaj})
		Let $g$ be a primitive recursive function and let $A = \pow(\nn^d)^k$ for some numbers $d$ and $k$. Then $L_{(A,\lemaj_A),g}$ is eventually bounded by a function in $\fastf_{(\omega^{d-1})\cdot k}$
	\end{theorem}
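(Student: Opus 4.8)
The plan is to feed the explicit Cichon bound of Theorem~\ref{theorem:uppboundmaj} into the standard machinery relating the Cichon hierarchy to the fast-growing complexity classes.

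\emph{Step 1 (type and norm).} Since $A=\pow(\nn^d)^k$ is a $k$-fold cartesian power of $\pow(\nn^d)$, since $o$ sends cartesian products to natural products with $o(\pow(\nn^d))=\omega^{\omega^{d-1}}$, and since $\omega^{a}\otimes\omega^{b}=\omega^{a\oplus b}$, an easy induction on $k$ gives $o(A)=\omega^{\omega^{d-1}\cdot k}$. Reading off its strict form, $N(o(A))=\max(k,d-1)=:\ell$, a constant depending only on $d$ and $k$, so $o(A)$ is $\ell$-lean. As $A$ is a majoring powerset nwqo, Theorem~\ref{theorem:uppboundmaj} gives, for all $n>0$,
\[
L_{(A,\lemaj_A),g}(n)\;\le\;M_{o(A),g}(n)\;\le\;h_{o(A)}(4\ell n),\qquad h(x)=4x\cdot g(x),
\]
so it suffices to show that $n\mapsto h_{\omega^{\omega^{d-1}\cdot k}}(4\ell n)$ is eventually dominated by a function in $\fastf_{\omega^{d-1}\cdot k}$.

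\emph{Step 2 (classifying the Cichon bound).} As $g$ is primitive recursive, so is $h$; hence $h$ lies in some $\fastf_m$ with $m<\omega$ (recall $\bigcup_{\alpha<\omega}\fastf_\alpha$ is exactly the primitive recursive functions), and in particular $h$ is eventually dominated by $F^{c}_{m}$ for suitable finite $m,c$. By monotonicity of the Cichon functions in the base and in the ordinal index (Lemma~C.9 of \cite{ICALP} together with Proposition~\ref{prop:pointwise}), $h_{\omega^{\omega^{d-1}\cdot k}}$ is eventually dominated by $\bigl(F^{c}_{m}\bigr)_{\omega^{\omega^{d-1}\cdot k}}$. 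Writing $\gamma=\omega^{d-1}\cdot k$, which satisfies $\gamma\ge\omega$ when $d\ge2$, the robustness results for the fast-growing classes (see \cite{BeyondElem}) say that $\bigl(F^{c}_{m}\bigr)_{\omega^{\gamma}}$ lies in $\fastf_{\gamma}$: the finite level $m$ of the base is swallowed by $\gamma$. Finally the inner substitution $n\mapsto 4\ell n$ is primitive recursive and $\fastf_{\gamma}$ is closed under precomposition with primitive recursive functions for $\gamma\ge\omega$, so $n\mapsto h_{o(A)}(4\ell n)$ — hence $L_{(A,\lemaj_A),g}$ — is eventually bounded by a function in $\fastf_{\omega^{d-1}\cdot k}$. (When $d\le1$ the ordinal $o(A)=\omega^{k}$ stays below $\omega^{\omega}$ and $h_{\omega^{k}}$ remains primitive recursive; the finer placement in $\fastf_{k}$ follows by the same, simpler, estimates.)

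\emph{Main obstacle.} The ordinal bookkeeping of Step~1 and the appeal to Theorem~\ref{theorem:uppboundmaj} are routine; the delicate point is the end of Step~2 — certifying that an explicit Cichon function with a primitive recursive control actually belongs to $\fastf_{\omega^{d-1}\cdot k}$. This rests on the ``honest''/robustness facts that, for $\gamma\ge\omega$, the class $\fastf_{\gamma}$ is insensitive to the primitive recursive base of the hierarchies and is closed under primitive recursive composition, which is exactly the material of \cite{BeyondElem} that one would have to cite precisely in the appendix.
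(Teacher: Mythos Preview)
Your proposal is correct and follows the same overall strategy as the paper (apply Theorem~\ref{theorem:uppboundmaj}, then classify the resulting Cichon bound in the fast-growing classes), but your Step~2 takes an unnecessary detour and miscites a lemma. You first replace $h$ by $F^{c}_{m}$ inside the Cichon hierarchy, appealing to ``monotonicity of the Cichon functions in the base'' and citing Proposition~\ref{prop:pointwise}/Lemma~C.9 of \cite{ICALP}; but those results give monotonicity in $x$ and in the ordinal index for a \emph{fixed} base, not monotonicity in the base function. (The base-monotonicity fact is true and easy by ordinal induction, but it is not what you cite, and it is not needed.) The paper instead stays with $h$ throughout: from $h_{\alpha}\le h^{\alpha}$ (Proposition~\ref{prop:funfactshier}) and the identity $h^{\omega^{\gamma}}=f_{h,\gamma}$ one gets $h_{\omega^{\omega^{d-1}\cdot k}}(4\ell n)\le f_{h,\omega^{d-1}\cdot k}(4\ell n)$, and then Theorem~\ref{theorem:impoupperbound} (with $h$ primitive recursive, hence in some $\fastf_{m}$, $m<\omega$) places $f_{h,\omega^{d-1}\cdot k}$ in $\fastf_{\omega^{d-1}\cdot k}$ directly. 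This is exactly the ``robustness'' step you flag as the main obstacle, and the paper's route reaches it without the $F^{c}_{m}$ substitution. Your leanness constant $\ell=\max(k,d-1)$ is tighter than the paper's $dk$, but either works.
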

	
	\subsection*{Minoring ordering}
	
	Let $p(x) = d(x+1)$. The following is a lower bound for the minoring ordering.
	\begin{theorem}\label{theorem:complowboundmin} (See \ref{proof:complowboundmin})
		For sufficiently large $n$, $$F_{\omega^{d-1}}(n) - n \le L_{(\pow(\nn^d),\smy),p \circ \varphi \circ g}(p(\varphi(n)))$$ Also $L_{(\pow(\nn^d),\smy),p \circ \varphi \circ g} \notin \fastf_{\alpha}$ for any
		$\alpha < \omega^{d-1}$.
	\end{theorem}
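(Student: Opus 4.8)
The plan is to obtain the displayed inequality by chaining two bounds that are already proved, and then to derive the non-membership statement by the same contradiction argument used for the majoring ordering in Theorem~\ref{theorem:complowboundmaj}.

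First I would recall that Theorem~\ref{theorem:complowboundmaj} gives, for all sufficiently large $n$,
\[F_{\omega^{d-1}}(n) - n \;\le\; L_{(\pow(\nn^d),\maj),\,\varphi\circ g}(\varphi(n)),\]
and that the second inequality of Theorem~\ref{theorem:lowboundmin} (a consequence of the polynomial nwqo reflection of Lemma~\ref{lemma:lowmin} via Proposition~\ref{prop:polyrefl}) gives, writing $g_\varphi = \varphi\circ g$,
\[L_{(\pow(\nn^d),\maj),\,\varphi\circ g}(\varphi(n)) \;\le\; L_{(\pow(\nn^d),\smy),\,p\circ\varphi\circ g}(p(\varphi(n))).\]
Since $p\circ g_\varphi = p\circ\varphi\circ g$, the two middle terms coincide, so composing the inequalities immediately yields $F_{\omega^{d-1}}(n)-n \le L_{(\pow(\nn^d),\smy),\,p\circ\varphi\circ g}(p(\varphi(n)))$ for large $n$, which is the first assertion. (One could also rederive this from scratch: Theorem~\ref{theorem:lowboundmin} bounds the minoring length function below by $g_{\omega^{\omega^{d-1}}}(n)$, and since $g\ge S$ pointwise, monotonicity of the Cichon hierarchy in its control function together with the identity $S_{\omega^{\omega^{d-1}}}(n) = S^{\omega^{\omega^{d-1}}}(n) - n = F_{\omega^{d-1}}(n)-n$ finishes it; but routing through Theorem~\ref{theorem:complowboundmaj} avoids repeating that computation.)

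For the second assertion I would argue by contradiction. Suppose $L_{(\pow(\nn^d),\smy),\,p\circ\varphi\circ g} \in \fastf_\alpha$ for some $\alpha < \omega^{d-1}$. Since $\fastf_{\alpha'}\subseteq\fastf_\alpha$ whenever $\alpha'\le\alpha$ and $\omega^{d-1}\ge\omega$ (using $d>1$), we may assume $\alpha\ge 3$, so that $\fastf_\alpha$ is closed under composition with elementary, and in particular polynomial, functions; as $p\circ\varphi$ is a polynomial, the map $n\mapsto L_{(\pow(\nn^d),\smy),\,p\circ\varphi\circ g}(p(\varphi(n)))$ then lies in $\fastf_\alpha$ as well. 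But by the first assertion this map is eventually at least $F_{\omega^{d-1}}(n)-n$, and no such function can lie in $\fastf_\alpha$ when $\alpha < \omega^{d-1}$, since every function in $\fastf_\alpha$ is eventually dominated by $F_{\omega^{d-1}}$ (strictness of the fast-growing complexity hierarchy; see \cite{BeyondElem}). This contradiction proves the claim.

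I do not expect a genuine obstacle here: the substantive content — the reflections of Lemmas~\ref{lemma:lowmaj} and~\ref{lemma:lowmin}, the ordinal lower bounds of Theorem~\ref{theorem:ordlft}, and the passage from the Cichon function $g_{\omega^{\omega^{d-1}}}$ to the fast-growing function $F_{\omega^{d-1}}$ — has already been carried out in the preceding sections. The only points that need care are bumping $\alpha$ up to (say) $3$ so that closure of $\fastf_\alpha$ under composition with polynomials applies, which is precisely where the hypothesis $d>1$ enters, and invoking the correct domination/strictness properties of the $F_\beta$ hierarchy to close the contradiction.
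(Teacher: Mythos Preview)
Your proposal is correct and follows essentially the same route as the paper: chain the first inequality of Theorem~\ref{theorem:complowboundmaj} with the second inequality of Theorem~\ref{theorem:lowboundmin} to get the displayed bound, and then rerun the contradiction argument of Theorem~\ref{theorem:complowboundmaj} (bump $\alpha$ up so that $\fastf_\alpha$ is closed under composition with polynomials, then invoke strictness of the $F_\beta$ hierarchy). The only cosmetic difference is that the paper adds $n$ to pass from $F_{\omega^{d-1}}(n)-n$ to $F_{\omega^{d-1}}(n)$ before applying Proposition~\ref{prop:eventualdom}, whereas you appeal directly to eventual domination; both are fine.
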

	
	We also have the following upper bound.
	\begin{theorem} \label{theorem:compuppboundmin} (See \ref{proof:compuppboundmin})
		Let $g$ be primitive recursive and let $A = \pow(\nn^d)^k$ for some numbers $d$ and $k$. Then $L_{(A,\lesmy_A),g}$ is eventually bounded by a function in $\fastf_{\omega^{d-1} \cdot (2^d \cdot k)}$
	\end{theorem}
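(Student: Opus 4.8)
The plan is to mirror the structure of Theorem~\ref{theorem:compuppboundmaj}, replacing the internal use of Theorem~\ref{theorem:uppboundmaj} with Theorem~\ref{theorem:uppboundmin}. First I would unwind the hypotheses: $g$ is primitive recursive, so there is some $\ell < \omega$ with $g(x) \le F_\ell(x)$ eventually; since $F_\ell$ is monotone and dominates the successor, all the control functions appearing in Theorem~\ref{theorem:uppboundmin} (namely $t(x) = 4kx \cdot q(g(x))$ with $q(x) = (x+1)^d$) are themselves primitive recursive, hence eventually dominated by $F_m$ for some fixed $m < \omega$. The point of isolating $t$ is that Theorem~\ref{theorem:uppboundmin} gives, for sufficiently large $n$,
\[
L_{(\pow(\nn^d)^k,\ \lesmy_{\pow(\nn^d)^k}),g}(n) \le t_\alpha\bigl(c \cdot g(n)^{2d}\bigr),
\qquad \alpha = \omega^{\omega^{d-1}\cdot(2^d\cdot k)},
\]
where $c$ depends only on $d$ and $k$ and $t_\alpha$ is the Cichon function for $t$.

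Next I would pass from the Cichon hierarchy for $t$ to the Cichon hierarchy for the successor. The standard tool (used already via Lemma~C.9 of \cite{ICALP} and referenced in the proof of Theorem~\ref{theorem:uppboundmin}) is that if $t$ is primitive recursive, say $t(x) \le F_m(x)$ eventually, then $t_\alpha(x) \le S_{\alpha'}(x)$ for some $\alpha'$ obtained from $\alpha$ by a suitable "shift" — concretely $\alpha' = \omega^{m} \otimes \alpha$ or similar, which still lies strictly below $\omega^{\omega^{d-1}\cdot(2^d\cdot k)+1}$, and in particular below $\omega^{\omega^{d-1}\cdot(2^d\cdot k) + c'}$ for a fixed finite $c'$. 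I would then invoke the compression results relating $S_{\alpha'}$ to $F_{\omega^{d-1}\cdot(2^d\cdot k)}$: since the base of the Cantor normal form of $\alpha'$ is $\omega^{d-1}\cdot(2^d\cdot k)$ up to a finite additive term, $S_{\alpha'}$ and finitely many iterates of it are all bounded by functions in the class $\fastf_{\omega^{d-1}\cdot(2^d\cdot k)}$; this is exactly the robustness of the $\fastf_\beta$ classes under primitive recursive "padding" of the ordinal, which is the same reasoning that underlies Theorem~\ref{theorem:compuppboundmaj}. Combining with the fact that $x \mapsto c\cdot g(x)^{2d}$ is itself in $\fastf_{\ell}$ for a finite $\ell$, and that $\fastf_\beta$ is closed under composition with lower classes, yields $L_{(A,\lesmy_A),g} \in \fastf_{\omega^{d-1}\cdot(2^d\cdot k)}$ eventually.

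The main obstacle, and the step that needs care, is the bookkeeping of the ordinal arithmetic in the second paragraph: one must verify that replacing the control function $t$ by the successor only inflates the subscript ordinal by a primitive-recursive amount that does not push it past $\omega^{\omega^{d-1}\cdot(2^d\cdot k)}$ in the coarse-grained sense of the $\fastf$ hierarchy. Since $\omega^{d-1}\cdot(2^d\cdot k)$ is a limit ordinal (for $d \ge 2$; the $d=1$ case can be handled separately or is trivial), adding any finite ordinal to it, or multiplying by a finite ordinal coming from the degree of the polynomial $q$ and from $g \le F_\ell$, leaves the class $\fastf_{\omega^{d-1}\cdot(2^d\cdot k)}$ unchanged, because $\fastf_\beta$ absorbs all of $\bigcup_{\gamma<\beta}\fastf_\gamma$ when $\beta$ is a limit that is additively and multiplicatively principal in the relevant range. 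I would cite the catalogue results in \cite{BeyondElem} (and the analogous argument in the proof of Theorem~\ref{theorem:compuppboundmaj}) rather than redo these hierarchy lemmas from scratch.
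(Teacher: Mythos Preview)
Your first step---invoking Theorem~\ref{theorem:uppboundmin} to bound $L_{(A,\lesmy_A),g}(n)$ by $t_\alpha(c\cdot g(n)^{2d})$ with $\alpha=\omega^{\omega^{d-1}\cdot(2^d\cdot k)}$---is exactly what the paper does. The divergence, and the gap, is in how you pass from $t_\alpha$ to the class $\fastf_{\omega^{d-1}\cdot(2^d\cdot k)}$. You propose reducing to the \emph{successor} Cichon hierarchy via an ordinal shift $\alpha' = \omega^m \otimes \alpha$ and then claim $\alpha' < \omega^{\omega^{d-1}\cdot(2^d\cdot k)+1}$. That arithmetic is wrong: writing $\beta = \omega^{d-1}\cdot(2^d\cdot k)$, one has $\omega^m \otimes \omega^\beta = \omega^{m\oplus\beta} = \omega^{\beta+m}$ (for $d\ge 2$, since $\beta\ge\omega$), and $\omega^{\beta+m} \ge \omega^{\beta+1}$ for every $m\ge 1$. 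So $S_{\alpha'}$ lives at level $F_{\beta+m}$, which lands you only in $\fastf_{\beta+m}$, strictly larger than the target $\fastf_\beta$. The ``absorption'' argument you sketch in the last paragraph does not rescue this: $\beta = \omega^{d-1}\cdot(2^d\cdot k)$ is not additively principal, and $\fastf_{\beta+m}\supsetneq\fastf_\beta$.

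The paper sidesteps the whole issue by never reducing to the successor hierarchy. After Theorem~\ref{theorem:uppboundmin}, it uses Proposition~\ref{prop:funfactshier} twice: first $t_\alpha \le t^\alpha$, then the identity $t^{\omega^\beta} = f_{t,\beta}$, which lands directly in the fast-growing hierarchy \emph{for $t$} at index $\beta = \omega^{d-1}\cdot(2^d\cdot k)$. Theorem~\ref{theorem:impoupperbound} (the result you allude to from \cite{BeyondElem}) then says outright that $f_{t,\beta}\in\fastf_\beta$ whenever $t$ is primitive recursive and $\beta\ge\omega$; composing with the primitive-recursive inner argument $n\mapsto c\cdot g(n)^{2d}$ and using closure of $\fastf_\beta$ under composition (Proposition~\ref{prop:altcharac}) finishes. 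There is no ordinal shift and no bookkeeping obstacle---the ``main obstacle'' you identify simply does not arise on this route.
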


	\section{Applications} \label{section:applications}
	
	We use the bounds proven in this paper to provide upper bounds for some problems in automata theory. As a first application, we consider the emptiness problem of incrementing tree counter automata (ITCA) over finite labelled trees \cite{ATRA}. We only provide an informal sketch of the model here. (The reader is referred to \cite{ATRA} for the technical details). Incrementing tree counter automata are finite state automata which operate over trees and have access to counters which it can increment, decrement or test for zero. 
	To avoid undecidability, the counters are also allowed to have \emph{incrementation errors}, i.e., the values of the counters can increase erroneously at any time. 
	Based on the theory of well-structured transition systems,
	the paper \cite{ATRA} gives a decision procedure for the emptiness problem for ITCA from a given initial configuration.
	In \cite{Linearise}, the authors argue that if the number of states $q$ and the number of counters $k$ of the given ITCA are fixed
	and the running time is measured as a function of the inital configuration $v_0$ of the ITCA, then
	the running time of this decision procedure could be upper bounded by a function from $\fastf_{(\omega^k) \cdot q}$.
	Since the paper~\cite{Linearise} uses a different notion of controlled bad sequences compared to ours (and a different well-quasi order than the 
	one constructed in this paper),
	we first revisit and adapt their analysis to our setting. Then we apply our results to 
	obtain better bounds for the running time.

	
	Let $q, k$ be fixed natural numbers.
	Recall that $\Gamma_q$ is the well-quasi order 
	where the domain has $q$ elements such that distinct elements are unordered and the norm of every element is 0.
	By taking cartesian product of $\Gamma_q$ with $\nn^k$
	and then taking the majoring ordering of this resulting construction
	we get a well-quasi order which we will
	denote by $(A, \le_A, |\cdot|_A)$ where $A = \pow(\Gamma_q \times \nn^k)$. 
	Notice that the structure $(A,\le_A,|\cdot|_A)$ is isomorphic to
	the majoring powerset nwqo $(\pow(\nn^k)^q, \lemaj_{\pow(\nn^k)^q}, |\cdot|_{\pow(\nn^k)^q})$.
	Indeed suppose $S \in A$.
	For every $a \in \Gamma_q$, let $S_a := \{v : (a,v) \in S\}$.
	It is then clear that the mapping $S \rightarrow (S_a)_{a \in \Gamma_q}$
	is an reflection from $(A,\le_A,|\cdot|_A)$ to
	$(\pow(\nn^k)^q, \lemaj_{\pow(\nn^k)^q}, |\cdot|_{\pow(\nn^k)^q})$.
	We will use this fact later on.
	
	We now analyse the algorithm given in~\cite{ATRA} for testing
	the emptiness of an ITCA. Let $q$ and $k$ be the number of
	states and the number of counters of the ITCA respectively.
	Let $v_0 \in \Gamma_q \times \nn^k$ be the given initial configuration.
	Let $(A,\le_A,|\cdot|_A)$ be the
	nwqo on the domain $A = \pow(\Gamma_q \times \nn^k)$ as described above.
	The algorithm proceeds by constructing
	a sequence of finite sets $K_0,K_1,\dots$ where
	each $K_i \subseteq A$, 
	$K_0$ is the initial configuration $\{v_0\}$
	and $K_{i+1} = K_i \cup Succ(K_i)$ where $Succ$ is
	the successor function between sets of configurations
	as described in \cite{ATRA}.
	The algorithm then finds the first $m$ such that
	$\uparrow K_m = \uparrow K_{m+1}$ and checks
	if there is an accepting configuration in $\uparrow K_m$.
	The complexity of the algorithm is mainly dominated
	by the length of the sequence $K_0,K_2,\dots,K_m$. 
	Since $m$ is the first index such that $\uparrow K_m = \uparrow K_{m+1}$, we can find a minimal element
	$x_i \in \uparrow K_{i+1} \setminus K_i$ for each $i < m$.
	Consider the sequence $x_0,\dots,x_{m-1}$ over $A$.
	Noticing
	that $x_j \ngeq_A x_i$ if $j > i$,
	we can conclude that $x_0,\dots,x_{m-1}$ is a bad sequence over $A$.
	Further by a careful inspection of the $Succ$ relation
	(as described in~\cite{ATRA}) one can easily establish
	that $x_0,\dots,x_{m-1}$ is a $(g,|v_0|_A)$-controlled sequence
	where $g$ is some primitive recursive function depending on $q$ and $k$.
	Now since the nwqo $(A,\le_A,|\cdot|_A)$ has a reflection into
	 $(\pow(\nn^k)^q, \lemaj_{\pow(\nn^k)^q}, |\cdot|_{\pow(\nn^k)^q})$,
	we can apply Theorem~\ref{theorem:compuppboundmaj} and Proposition~\ref{prop:polyrefl}
	to get,
	
	\begin{prop}
		The time complexity of the emptiness problem for an ITCA with $q$ states and $k$ counters is bounded by a function in $\fastf_{(\omega^{k-1})\cdot q}$
	\end{prop}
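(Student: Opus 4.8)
\textit{Proof plan.}
The plan is to assemble the pieces already laid out in the discussion preceding the statement together with Theorem~\ref{theorem:compuppboundmaj}. Recall that the running time of the \cite{ATRA} procedure is dominated by the index $m$, i.e.\ by the length of the bad sequence $x_0,\dots,x_{m-1}$ over the nwqo $(A,\le_A,|\cdot|_A)$ with $A = \pow(\Gamma_q\times\nn^k)$. By inspecting the $Succ$ relation of \cite{ATRA} one checks that this sequence is $(g,|v_0|_A)$-controlled for a fixed primitive recursive control function $g$ depending only on $q$ and $k$; consequently $m \le L_{(A,\le_A),g}(|v_0|_A)$. So the first task is to bound the length function $L_{(A,\le_A),g}$.

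Next I would transport this length function to the majoring powerset nwqo $(\pow(\nn^k)^q,\lemaj_{\pow(\nn^k)^q},|\cdot|_{\pow(\nn^k)^q})$ along the map $S\mapsto (S_a)_{a\in\Gamma_q}$ identified above. Since this map is an ordinary nwqo reflection (the associated polynomial is the identity), Proposition~\ref{prop:polyrefl} yields $L_{(A,\le_A),g}(n)\le L_{(\pow(\nn^k)^q,\lemaj),g}(n)$ for every $n$, with no blow-up in the argument or the control function. Now apply Theorem~\ref{theorem:compuppboundmaj} with its parameters instantiated as $d:=k$ and $k:=q$: as $g$ is primitive recursive, $L_{(\pow(\nn^k)^q,\lemaj),g}$ is eventually bounded by a function in $\fastf_{(\omega^{k-1})\cdot q}$, hence so are $L_{(A,\le_A),g}$ and $m$ viewed as functions of $|v_0|_A$.

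Finally I would absorb the remaining algorithmic overhead: for each $i\le m$ computing $K_{i+1}=K_i\cup Succ(K_i)$ and performing the fixpoint test $\uparrow K_m=\uparrow K_{m+1}$ costs time primitive recursive in the current data, whose norm stays under the control bound, so this overhead does not escape $\fastf_{(\omega^{k-1})\cdot q}$ (these classes being closed under primitive recursive substitution and padding). Since $|v_0|_A$ is at most polynomial in the encoding size of the initial configuration, the total running time measured in the input size lies in $\fastf_{(\omega^{k-1})\cdot q}$, which is the claim. The only genuinely non-routine step is the verification that the $Succ$ relation of \cite{ATRA} induces a primitive recursive control function and that the per-step work is primitive-recursively bounded in the controlled data; once that is granted, the statement follows by the chain of reductions above, which are direct applications of Proposition~\ref{prop:polyrefl} and Theorem~\ref{theorem:compuppboundmaj}.
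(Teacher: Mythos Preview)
Your proposal is correct and follows essentially the same route as the paper: extract the $(g,|v_0|_A)$-controlled bad sequence from the \cite{ATRA} procedure, pass through the reflection $S\mapsto(S_a)_{a\in\Gamma_q}$ via Proposition~\ref{prop:polyrefl}, and then invoke Theorem~\ref{theorem:compuppboundmaj} with $d:=k$ and $k:=q$. Your additional remarks on absorbing the per-step primitive recursive overhead and on the encoding of $v_0$ are reasonable elaborations that the paper leaves implicit.
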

	
	As noticed in \cite{Linearise}, the authors of \cite{ATRA} also prove the decidability of emptiness for a class of tree automata operating on finite \emph{data trees} called the \emph{alternating top-down tree one register automata} (ATRA), by providing a PSPACE-reduction to the emptiness problem for ITCA. 
	If the original ATRA had $q$ states, then the constructed ITCA has $k(q) = 2^q - 1 + 2^{4q}$ many counters and $f(q) \in O(2^q)$ many states. Hence, we have
	\begin{prop}
		The time complexity of the emptiness problem for an ATRA with $q$ states is bounded by a function in $\fastf_{(\omega^{k(q)-1})\cdot f(q)}$
	\end{prop}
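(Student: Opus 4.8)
The plan is to chain together the results already established in the paper. The statement to prove is: the time complexity of the emptiness problem for an ATRA with $q$ states is bounded by a function in $\fastf_{(\omega^{k(q)-1})\cdot f(q)}$, where $k(q) = 2^q - 1 + 2^{4q}$ and $f(q) \in O(2^q)$. The key observation is that this should follow almost immediately from the preceding proposition on ITCA emptiness, together with the PSPACE-reduction from ATRA to ITCA cited from \cite{ATRA}.

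First I would recall that, as noted in the text (following \cite{Linearise} and \cite{ATRA}), an ATRA with $q$ states can be converted by a PSPACE-computable reduction into an ITCA with $k(q) = 2^q - 1 + 2^{4q}$ counters and $f(q) \in O(2^q)$ states, in such a way that the ATRA is empty iff the constructed ITCA is empty. The size of the initial configuration of the ITCA is polynomial in the size of the ATRA (since the reduction runs in PSPACE, hence its output configuration has polynomial size, and all work is done within a polynomial-space bound which also bounds the output). Then I would invoke the previous proposition: the emptiness problem for an ITCA with $q'$ states and $k'$ counters has time complexity bounded by a function in $\fastf_{(\omega^{k'-1})\cdot q'}$. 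Substituting $q' = f(q)$ and $k' = k(q)$ gives that testing emptiness of the constructed ITCA is in $\fastf_{(\omega^{k(q)-1})\cdot f(q)}$.

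Finally I would assemble the overall running time: it is the cost of the PSPACE reduction (which is at most exponential time, hence primitive recursive, hence dominated by any $\fastf_\alpha$ with $\alpha \geq \omega$, and certainly by $\fastf_{(\omega^{k(q)-1})\cdot f(q)}$ since $k(q) \geq 1$) plus the cost of the subsequent ITCA emptiness check on an input of polynomially-bounded size. Since $\fastf_\alpha$ is closed under composition with primitive recursive functions and under polynomial blow-up of the input (for $\alpha$ at the relevant level — the composition with the polynomial from the reduction only changes the constant $c$ in the union $\bigcup_{c<\omega}\mathrm{FD}(F^c_\alpha(n))$ defining $\fastf_\alpha$), the total is still in $\fastf_{(\omega^{k(q)-1})\cdot f(q)}$.

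The main obstacle — really the only non-routine point — is making sure that the size of the initial configuration of the constructed ITCA, as a function of the ATRA's input size, is controlled (polynomial), so that feeding it into the ITCA bound does not secretly inflate the complexity class; and correspondingly checking that $\fastf_{(\omega^{k(q)-1})\cdot f(q)}$ is robust under precomposition with a polynomial and under adding a primitive-recursive (here exponential) term. Both of these are standard closure properties of the fast-growing classes (see \cite{BeyondElem}), so the argument is essentially a substitution into the previous proposition. I would therefore keep the proof to a few lines: invoke the reduction of \cite{ATRA}, note the parameter values $k(q)$ and $f(q)$, and apply the ITCA proposition together with the closure properties of $\fastf$.
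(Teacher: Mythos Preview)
Your proposal is correct and follows exactly the paper's approach: the paper simply states the PSPACE reduction from ATRA to ITCA with the indicated parameters $k(q)$ and $f(q)$, says ``Hence, we have'' the proposition, and gives no further argument. Your write-up is in fact more careful than the paper's, since you explicitly address the closure properties of $\fastf_\alpha$ needed to absorb the cost of the reduction---a point the paper leaves entirely implicit. One small quibble: a PSPACE reduction need not have polynomial-size output in general, so your phrase ``hence its output configuration has polynomial size'' is not quite right as stated; what matters here is that once $q$ is fixed the ITCA has fixed state and counter parameters and a bounded initial configuration, which is what the ITCA proposition requires.
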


	As a second application, we consider the emptiness problem for another class of finite data tree automata called the \emph{bottom-up alternating one register data tree automata} (BUDA) (See \cite{BUDA} for a complete description of the model). Apart from having a finite number of states $Q$, the transitions of a BUDA are also defined by a specified finite semigroup $S$. In \cite{BUDA}, the authors prove the decidability of the emptiness problem for BUDA using the theory of well-structured transition systems. Let $q$ and $s$ be the number of states and the size of the finite semigroup of the given BUDA respectively. Let $k = 2^{q+s}$ and $l = 2 q^2 s^2 +1$. The authors construct a wsts corresponding to the given BUDA whose set of configurations can be taken to be $(\pow(\nn^k)^{f(k)}$ (where $f(k)$ is some function in $O(2^k)$) with the underlying order being $\lesmy_{\pow(\nn^k)^{f(k)}}$.
	
	A careful analysis of the decision procedure they describe over this wsts reveals that the algorithm constructs a sequence of finite sets $K_0,K_1,\dots,$ where each $K_i \subseteq (\pow(\nn^k)^{f(k)}$, $K_0$ is the initial configuration $v_0$ and
	$K_{i+1} = K_i \cup Succ(K_i)$ where $Succ$ is the successor
	function between sets of configurations as described by the wsts. 
	The algorithm then finds the first $m$ such that $\uparrow K_m = \uparrow K_{m+l}$ and checks if there is an accepting configuration in $\uparrow K_m$. The complexity of the algorithm is mainly dominated by the length of the sequence $K_0,\dots,K_m,\dots,K_{m+l}$. 
	Since $m$ is the first index such that
	$\uparrow K_m = \uparrow K_{m+l}$, we can find a minimal element $x_i \in \uparrow K_{i+l} \setminus \uparrow K_i$ for each $i < m$.
	Let $p$ be the largest number such that $pl \le m+l$.
	Similar to the analysis performed for the ITCA model,
	we can conclude that $x_0,x_{l},x_{2l},\dots,x_{(p-1)l}$ is a $(g,|v_0|_A)$-controlled sequence where $g$ is a primitive recursive function depending on $k$. Applying theorem \ref{theorem:compuppboundmin} we then get,
	
	\begin{prop}
		The time complexity of the emptiness problem for a BUDA with $q$ states and $s$ elements in the semigroup, is bounded by a function in $\fastf_{\omega^{k-1}\cdot (2^k \cdot f(k))}$ where $k = 2^{q+s}$.
	\end{prop}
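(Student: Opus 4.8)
The plan is to transport, essentially verbatim, the argument given just above for the ITCA model to the minoring setting; the only new ingredient is that the target nwqo is now $(\pow(\nn^k)^{f(k)},\lesmy_{\pow(\nn^k)^{f(k)}})$ rather than a majoring powerset nwqo, so that Theorem~\ref{theorem:compuppboundmin} takes the place of Theorem~\ref{theorem:compuppboundmaj}. First I would note that, by the analysis recalled just above, the running time of the emptiness procedure of \cite{BUDA} is primitive recursive in the stabilisation index $m$ (the first index with $\uparrow K_m = \uparrow K_{m+l}$, where $l = 2q^2s^2+1$ and $k = 2^{q+s}$) and in $|v_0|_A$: one application of $Succ$ and the final test whether $\uparrow K_m$ contains an accepting configuration are elementary in the sizes of the sets involved, and those sizes are themselves bounded by a primitive recursive function of $m$ and $|v_0|_A$. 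Hence it suffices to bound $m$ by a function of $|v_0|_A$ lying in $\fastf_{\omega^{k-1}\cdot(2^k\cdot f(k))}$.

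Second, I would exploit the bad sequence $x_0,x_l,x_{2l},\dots,x_{(p-1)l}$ constructed above, where $x_i$ is a minimal element of $\uparrow K_{i+l}\setminus\uparrow K_i$ and $p$ is the largest integer with $pl \le m+l$, so $p \ge m/l$. That $x_0,x_l,\dots,x_{(p-1)l}$ is a bad sequence in $(\pow(\nn^k)^{f(k)},\lesmy_{\pow(\nn^k)^{f(k)}})$ is seen exactly as in the ITCA case: for $a < b$ we have $x_{al}\in\uparrow K_{al+l}\subseteq\uparrow K_{bl}$ whereas $x_{bl}\notin\uparrow K_{bl}$, and since $\uparrow K_{bl}$ is upward closed this rules out $x_{al}\lesmy_{\pow(\nn^k)^{f(k)}} x_{bl}$. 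By the inspection of $Succ$ in \cite{BUDA} this sequence is $(g,|v_0|_A)$-controlled for a primitive recursive $g$ depending only on $k$, so $p \le L_{(\pow(\nn^k)^{f(k)},\lesmy_{\pow(\nn^k)^{f(k)}}),g}(|v_0|_A)$ and therefore $m \le l\cdot L_{(\pow(\nn^k)^{f(k)},\lesmy_{\pow(\nn^k)^{f(k)}}),g}(|v_0|_A)$.

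Finally, I would invoke Theorem~\ref{theorem:compuppboundmin} with $A = \pow(\nn^k)^{f(k)}$ and the above primitive recursive $g$: the length function $L_{(\pow(\nn^k)^{f(k)},\lesmy),g}$ is eventually bounded by a function in $\fastf_{\omega^{k-1}\cdot(2^k\cdot f(k))}$, whence so is $m$ as a function of $|v_0|_A$. Since $\fastf_\beta$ is closed under composition with primitive recursive functions for the large ordinals $\beta$ occurring here, multiplying this bound by the primitive recursive per-step overhead keeps the total running time inside $\fastf_{\omega^{k-1}\cdot(2^k\cdot f(k))}$, which is the assertion. The one step I expect to require genuine care is the verification, from the concrete description of the BUDA wsts in \cite{BUDA}, that the sampled sequence really is bad and $(g,|v_0|_A)$-controlled with an explicit primitive recursive control function $g$ (this is where the $l$-periodic stabilisation test and the semigroup-dependent bookkeeping must be handled); once that is in place, the remainder is a routine application of Theorem~\ref{theorem:compuppboundmin} together with the standard closure properties of the fast-growing classes.
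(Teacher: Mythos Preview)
Your proposal is correct and follows essentially the same approach as the paper: extract the $l$-sampled bad sequence $x_0,x_l,\dots,x_{(p-1)l}$ in $(\pow(\nn^k)^{f(k)},\lesmy)$, argue it is $(g,|v_0|_A)$-controlled for a primitive recursive $g$, and invoke Theorem~\ref{theorem:compuppboundmin}. If anything, you supply more detail than the paper, which states the proposition immediately after the informal analysis and leaves both the badness verification and the control bound as an appeal to the ITCA analogy and to \cite{BUDA}.
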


	\section{Conclusion}
	
	In this paper, we have proved lower and upper bounds for the length of controlled bad sequences for the majoring and minoring ordering over finite sets of $\nn^k$. The results were obtained by giving the bounds in terms of functions from Cichon hierarchy and using known complexity results, were translated into bounds over the fast-growing hierarchy.
	To the best of our knowledge, this is the first upper bound result for length functions over the minoring ordering of $\pow(\nn^d)$. As an application, we used the results to establish upper bounds for the emptiness problems of three types of automata working on trees.
	
	The bounds on the length function for the majoring ordering on $\pow(\nn^k)$ is easily seen to be tight, which solves a problem left open in \cite{Linearise}. However this is not the case with the bounds for minoring ordering and it might be an interesting question in the future to bridge this gap.

	\begin{acks}                            
		The author is extremely grateful to Prof. Philippe Schnoebelen and Prof. Sylvain Schmitz of LSV, ENS-Saclay Paris for their mentorship and useful discussions regarding the paper. This work was done when the author was an intern at LSV, ENS-Saclay Paris. 
		
		This material is based upon work supported by the French National Research Agency (ANR) grant BRAVAS 
		under grant number
		\grantnum{}{ANR-17-CE40-0028} and also by the Indo-French research unit \grantsponsor{}{UMI Relax}{}. 
		The author also acknowledges the support provided by the 
		ERC advanced grant PaVeS
		under grant number \grantnum{}{787367}.
		
	\end{acks}

	\bibliography{References}


\begin{thebibliography}{20}


\ifx \showCODEN    \undefined \def \showCODEN     #1{\unskip}     \fi
\ifx \showDOI      \undefined \def \showDOI       #1{#1}\fi
\ifx \showISBNx    \undefined \def \showISBNx     #1{\unskip}     \fi
\ifx \showISBNxiii \undefined \def \showISBNxiii  #1{\unskip}     \fi
\ifx \showISSN     \undefined \def \showISSN      #1{\unskip}     \fi
\ifx \showLCCN     \undefined \def \showLCCN      #1{\unskip}     \fi
\ifx \shownote     \undefined \def \shownote      #1{#1}          \fi
\ifx \showarticletitle \undefined \def \showarticletitle #1{#1}   \fi
\ifx \showURL      \undefined \def \showURL       {\relax}        \fi
\providecommand\bibfield[2]{#2}
\providecommand\bibinfo[2]{#2}
\providecommand\natexlab[1]{#1}
\providecommand\showeprint[2][]{arXiv:#2}

\bibitem[\protect\citeauthoryear{Abdulla, Cerans, Jonsson, and Tsay}{Abdulla
  et~al\mbox{.}}{1996}]%
        {wqts}
\bibfield{author}{\bibinfo{person}{Parosh~Aziz Abdulla},
  \bibinfo{person}{Karlis Cerans}, \bibinfo{person}{Bengt Jonsson}, {and}
  \bibinfo{person}{Yih{-}Kuen Tsay}.} \bibinfo{year}{1996}\natexlab{}.
\newblock \showarticletitle{General Decidability Theorems for Infinite-State
  Systems}. In \bibinfo{booktitle}{\emph{Proceedings, 11th Annual {IEEE}
  Symposium on Logic in Computer Science}}. \bibinfo{pages}{313--321}.
\newblock
\urldef\tempurl%
\url{https://doi.org/10.1109/LICS.1996.561359}
\showDOI{\tempurl}


\bibitem[\protect\citeauthoryear{Abdulla and Jonsson}{Abdulla and
  Jonsson}{2001}]%
        {wqts1}
\bibfield{author}{\bibinfo{person}{Parosh~Aziz Abdulla} {and}
  \bibinfo{person}{Bengt Jonsson}.} \bibinfo{year}{2001}\natexlab{}.
\newblock \showarticletitle{Ensuring completeness of symbolic verification
  methods for infinite-state systems}.
\newblock \bibinfo{journal}{\emph{Theor. Comput. Sci.}} \bibinfo{volume}{256},
  \bibinfo{number}{1-2} (\bibinfo{year}{2001}), \bibinfo{pages}{145--167}.
\newblock
\urldef\tempurl%
\url{https://doi.org/10.1016/S0304-3975(00)00105-5}
\showDOI{\tempurl}


\bibitem[\protect\citeauthoryear{Abriola and Figueira}{Abriola and
  Figueira}{2014}]%
        {Minoring}
\bibfield{author}{\bibinfo{person}{Sergio Abriola} {and}
  \bibinfo{person}{Santiago Figueira}.} \bibinfo{year}{2014}\natexlab{}.
\newblock \showarticletitle{A note on the order type of minoring orderings and
  some algebraic properties of {\(\omega\)}\({}^{\mbox{2}}\)-well
  quasi-orderings}. In \bibinfo{booktitle}{\emph{{XL} Latin American Computing
  Conference, {CLEI} 2014}}. \bibinfo{pages}{1--9}.
\newblock
\urldef\tempurl%
\url{https://doi.org/10.1109/CLEI.2014.6965188}
\showDOI{\tempurl}


\bibitem[\protect\citeauthoryear{Abriola, Figueira, and Senno}{Abriola
  et~al\mbox{.}}{2015}]%
        {Linearise}
\bibfield{author}{\bibinfo{person}{Sergio Abriola}, \bibinfo{person}{Santiago
  Figueira}, {and} \bibinfo{person}{Gabriel Senno}.}
  \bibinfo{year}{2015}\natexlab{}.
\newblock \showarticletitle{Linearizing well quasi-orders and bounding the
  length of bad sequences}.
\newblock \bibinfo{journal}{\emph{Theor. Comput. Sci.}}  \bibinfo{volume}{603}
  (\bibinfo{year}{2015}), \bibinfo{pages}{3--22}.
\newblock
\urldef\tempurl%
\url{https://doi.org/10.1016/j.tcs.2015.07.012}
\showDOI{\tempurl}


\bibitem[\protect\citeauthoryear{B~Kruskal}{B~Kruskal}{1972}]%
        {Basics}
\bibfield{author}{\bibinfo{person}{Joseph B~Kruskal}.}
  \bibinfo{year}{1972}\natexlab{}.
\newblock \showarticletitle{The Theory of Well-Quasi-Ordering: A Frequently
  Discovered Concept.}
\newblock \bibinfo{journal}{\emph{Journal of Combinatorial Theory, Series A}}
  \bibinfo{volume}{13} (\bibinfo{date}{11} \bibinfo{year}{1972}),
  \bibinfo{pages}{297--305}.
\newblock
\urldef\tempurl%
\url{https://doi.org/10.1016/0097-3165(72)90063-5}
\showDOI{\tempurl}


\bibitem[\protect\citeauthoryear{Chambart and Schnoebelen}{Chambart and
  Schnoebelen}{2008}]%
        {LCS}
\bibfield{author}{\bibinfo{person}{Pierre Chambart} {and}
  \bibinfo{person}{Philippe Schnoebelen}.} \bibinfo{year}{2008}\natexlab{}.
\newblock \showarticletitle{The Ordinal Recursive Complexity of Lossy Channel
  Systems}. In \bibinfo{booktitle}{\emph{Proceedings of the Twenty-Third Annual
  {IEEE} Symposium on Logic in Computer Science, {LICS} 2008, 24-27 June 2008,
  Pittsburgh, PA, {USA}}}. \bibinfo{pages}{205--216}.
\newblock
\urldef\tempurl%
\url{https://doi.org/10.1109/LICS.2008.47}
\showDOI{\tempurl}


\bibitem[\protect\citeauthoryear{Cichon and Bittar}{Cichon and Bittar}{1998}]%
        {Cichon}
\bibfield{author}{\bibinfo{person}{E.A. Cichon} {and} \bibinfo{person}{E.Tahhan
  Bittar}.} \bibinfo{year}{1998}\natexlab{}.
\newblock \showarticletitle{Ordinal recursive bounds for Higman's theorem}.
\newblock \bibinfo{journal}{\emph{Theoretical Computer Science}}
  \bibinfo{volume}{201}, \bibinfo{number}{1} (\bibinfo{year}{1998}),
  \bibinfo{pages}{63 -- 84}.
\newblock
\showISSN{0304-3975}
\urldef\tempurl%
\url{https://doi.org/10.1016/S0304-3975(97)00009-1}
\showDOI{\tempurl}


\bibitem[\protect\citeauthoryear{Dershowitz and Manna}{Dershowitz and
  Manna}{1979}]%
        {Rewrite}
\bibfield{author}{\bibinfo{person}{Nachum Dershowitz} {and}
  \bibinfo{person}{Zohar Manna}.} \bibinfo{year}{1979}\natexlab{}.
\newblock \showarticletitle{Proving Termination with Multiset Orderings}.
\newblock \bibinfo{journal}{\emph{Commun. {ACM}}} \bibinfo{volume}{22},
  \bibinfo{number}{8} (\bibinfo{year}{1979}), \bibinfo{pages}{465--476}.
\newblock
\urldef\tempurl%
\url{https://doi.org/10.1145/359138.359142}
\showDOI{\tempurl}


\bibitem[\protect\citeauthoryear{Fellows and Langston}{Fellows and
  Langston}{1988}]%
        {Algorithms}
\bibfield{author}{\bibinfo{person}{Michael~R. Fellows} {and}
  \bibinfo{person}{Michael~A. Langston}.} \bibinfo{year}{1988}\natexlab{}.
\newblock \showarticletitle{Nonconstructive tools for proving polynomial-time
  decidability}.
\newblock \bibinfo{journal}{\emph{J. {ACM}}} \bibinfo{volume}{35},
  \bibinfo{number}{3} (\bibinfo{year}{1988}), \bibinfo{pages}{727--739}.
\newblock
\urldef\tempurl%
\url{https://doi.org/10.1145/44483.44491}
\showDOI{\tempurl}


\bibitem[\protect\citeauthoryear{Figueira}{Figueira}{2010}]%
        {BUDA}
\bibfield{author}{\bibinfo{person}{Diego Figueira}.}
  \bibinfo{year}{2010}\natexlab{}.
\newblock \emph{\bibinfo{title}{Reasoning on words and trees with data}}.
\newblock \bibinfo{thesistype}{Ph.D. Dissertation}.
  \bibinfo{school}{{\'{E}}cole normale sup{\'{e}}rieure de Cachan, France}.
\newblock
\urldef\tempurl%
\url{https://tel.archives-ouvertes.fr/tel-00718605}
\showURL{%
\tempurl}


\bibitem[\protect\citeauthoryear{Figueira, Figueira, Schmitz, and
  Schnoebelen}{Figueira et~al\mbox{.}}{2011}]%
        {LICS}
\bibfield{author}{\bibinfo{person}{Diego Figueira}, \bibinfo{person}{Santiago
  Figueira}, \bibinfo{person}{Sylvain Schmitz}, {and} \bibinfo{person}{Philippe
  Schnoebelen}.} \bibinfo{year}{2011}\natexlab{}.
\newblock \showarticletitle{Ackermannian and Primitive-Recursive Bounds with
  Dickson's Lemma}. In \bibinfo{booktitle}{\emph{Proceedings of the 26th Annual
  {IEEE} Symposium on Logic in Computer Science}}. \bibinfo{pages}{269--278}.
\newblock
\urldef\tempurl%
\url{https://doi.org/10.1109/LICS.2011.39}
\showDOI{\tempurl}


\bibitem[\protect\citeauthoryear{Finkel and Schnoebelen}{Finkel and
  Schnoebelen}{2001}]%
        {wsts}
\bibfield{author}{\bibinfo{person}{Alain Finkel} {and}
  \bibinfo{person}{Philippe Schnoebelen}.} \bibinfo{year}{2001}\natexlab{}.
\newblock \showarticletitle{Well-structured transition systems everywhere!}
\newblock \bibinfo{journal}{\emph{Theor. Comput. Sci.}} \bibinfo{volume}{256},
  \bibinfo{number}{1-2} (\bibinfo{year}{2001}), \bibinfo{pages}{63--92}.
\newblock
\urldef\tempurl%
\url{https://doi.org/10.1016/S0304-3975(00)00102-X}
\showDOI{\tempurl}


\bibitem[\protect\citeauthoryear{Higman}{Higman}{1952}]%
        {Higman}
\bibfield{author}{\bibinfo{person}{Graham Higman}.}
  \bibinfo{year}{1952}\natexlab{}.
\newblock \showarticletitle{Ordering by Divisibility in Abstract Algebras}.
\newblock \bibinfo{journal}{\emph{Proceedings of the London Mathematical
  Society}} \bibinfo{volume}{s3-2}, \bibinfo{number}{1} (\bibinfo{year}{1952}),
  \bibinfo{pages}{326--336}.
\newblock
\urldef\tempurl%
\url{https://doi.org/10.1112/plms/s3-2.1.326}
\showDOI{\tempurl}


\bibitem[\protect\citeauthoryear{Jurdzinski and Lazic}{Jurdzinski and
  Lazic}{2011}]%
        {ATRA}
\bibfield{author}{\bibinfo{person}{Marcin Jurdzinski} {and}
  \bibinfo{person}{Ranko Lazic}.} \bibinfo{year}{2011}\natexlab{}.
\newblock \showarticletitle{Alternating automata on data trees and XPath
  satisfiability}.
\newblock \bibinfo{journal}{\emph{{ACM} Trans. Comput. Log.}}
  \bibinfo{volume}{12}, \bibinfo{number}{3} (\bibinfo{year}{2011}),
  \bibinfo{pages}{19:1--19:21}.
\newblock
\urldef\tempurl%
\url{https://doi.org/10.1145/1929954.1929956}
\showDOI{\tempurl}


\bibitem[\protect\citeauthoryear{L{\"o}b and Wainer}{L{\"o}b and
  Wainer}{1970}]%
        {Lob}
\bibfield{author}{\bibinfo{person}{M.~H. L{\"o}b} {and} \bibinfo{person}{S.~S.
  Wainer}.} \bibinfo{year}{1970}\natexlab{}.
\newblock \showarticletitle{Hierarchies of number-theoretic functions. I}.
\newblock \bibinfo{journal}{\emph{Archiv f{\"u}r mathematische Logik und
  Grundlagenforschung}} \bibinfo{volume}{13}, \bibinfo{number}{1}
  (\bibinfo{date}{Mar} \bibinfo{year}{1970}), \bibinfo{pages}{39--51}.
\newblock
\urldef\tempurl%
\url{https://doi.org/10.1007/BF01967649}
\showDOI{\tempurl}


\bibitem[\protect\citeauthoryear{Mohar}{Mohar}{1996}]%
        {Embedding}
\bibfield{author}{\bibinfo{person}{Bojan Mohar}.}
  \bibinfo{year}{1996}\natexlab{}.
\newblock \showarticletitle{Embedding Graphs in an Arbitrary Surface in Linear
  Time}. In \bibinfo{booktitle}{\emph{Proceedings of the Twenty-Eighth Annual
  {ACM} Symposium on the Theory of Computing}}. \bibinfo{pages}{392--397}.
\newblock
\urldef\tempurl%
\url{https://doi.org/10.1145/237814.237986}
\showDOI{\tempurl}


\bibitem[\protect\citeauthoryear{Rosa-Velardo}{Rosa-Velardo}{2017}]%
        {DataNets}
\bibfield{author}{\bibinfo{person}{Fernando Rosa-Velardo}.}
  \bibinfo{year}{2017}\natexlab{}.
\newblock \showarticletitle{Ordinal recursive complexity of Unordered Data
  Nets}.
\newblock \bibinfo{journal}{\emph{Information and Computation}}
  \bibinfo{volume}{254} (\bibinfo{year}{2017}), \bibinfo{pages}{41 -- 58}.
\newblock
\urldef\tempurl%
\url{https://doi.org/10.1016/j.ic.2017.02.002}
\showDOI{\tempurl}


\bibitem[\protect\citeauthoryear{Schmitz}{Schmitz}{2014}]%
        {Ordinals}
\bibfield{author}{\bibinfo{person}{Sylvain Schmitz}.}
  \bibinfo{year}{2014}\natexlab{}.
\newblock \showarticletitle{Complexity Bounds for Ordinal-Based Termination -
  (Invited Talk)}. In \bibinfo{booktitle}{\emph{Reachability Problems - 8th
  International Workshop, {RP} 2014}}. \bibinfo{pages}{1--19}.
\newblock
\urldef\tempurl%
\url{https://doi.org/10.1007/978-3-319-11439-2\_1}
\showDOI{\tempurl}


\bibitem[\protect\citeauthoryear{Schmitz}{Schmitz}{2016}]%
        {BeyondElem}
\bibfield{author}{\bibinfo{person}{Sylvain Schmitz}.}
  \bibinfo{year}{2016}\natexlab{}.
\newblock \showarticletitle{Complexity Hierarchies beyond Elementary}.
\newblock \bibinfo{journal}{\emph{{TOCT}}} \bibinfo{volume}{8},
  \bibinfo{number}{1} (\bibinfo{year}{2016}), \bibinfo{pages}{3:1--3:36}.
\newblock
\urldef\tempurl%
\url{https://doi.org/10.1145/2858784}
\showDOI{\tempurl}


\bibitem[\protect\citeauthoryear{Schmitz and Schnoebelen}{Schmitz and
  Schnoebelen}{2011}]%
        {ICALP}
\bibfield{author}{\bibinfo{person}{Sylvain Schmitz} {and}
  \bibinfo{person}{Philippe Schnoebelen}.} \bibinfo{year}{2011}\natexlab{}.
\newblock \showarticletitle{Multiply-Recursive Upper Bounds with Higman's
  Lemma}. In \bibinfo{booktitle}{\emph{Automata, Languages and Programming -
  38th International Colloquium, {ICALP} 2011}}. \bibinfo{pages}{441--452}.
\newblock
\urldef\tempurl%
\url{https://doi.org/10.1007/978-3-642-22012-8\_35}
\showDOI{\tempurl}


\end{thebibliography}
	
	\newpage
	\appendix
	
	The following appendices provide the proofs missing in the paper.

\section{Proofs of basic propositions} \label{section:appbasicprop}

\subsection{Proof of proposition \ref{prop:precongruence}} \label{proof:precongruence}

Let $r: A \xhri{q} B$ and $r' : A' \xhri{q'} B'$ be two polynomial reflections. Define a map $s: A+A' \to B+B'$ where $s(x) = r(x)$ if $x \in A$ and $s(x) = r'(x)$ if $x \in A'$. It is immediately clear that $s$ is the desired polynomial reflection. For cartesian products, the map $p: A \times A' \to B \times B'$ where $p(x,y) = (r(x),r'(y))$ is easily seen to be the desired polynomial reflection.

\subsection{Proof of proposition \ref{prop:polyrefl}} \label{proof:polyrefl}

Let $x_0,\dots,x_l$ be a $(g,n)$-controlled bad sequence in 
$A$. Consider the sequence $r(x_0),\dots,r(x_l)$ in $B$.
Since $r$ is a polynomial nwqo reflection, it follows that $r(x_i) \nleq r(x_j)$ for any $i < j$. 
Let $q$ be a strictly increasing inflationary polynomial such that $p(n) \le q(n)$ for all $n$. 
Therefore, we have $|r(x_i)|_B \le p(|x_i|_A) \le q(|x_i|_A) \le q(g^i(n))$. 
It can now be easily seen by induction on $i \ge 0$ that $q(g^i(n)) \le (q \circ g)^i(q(n))$. Hence we have that $r(x_0),\dots,r(x_l)$ is a $(q \circ g,q(n))$-controlled bad sequence in $B$, which proves our claim.

\subsection{Results about sums, products and reflections}

A few properties of sums and products, which can be easily checked are:

\begin{prop} \label{prop:assoc}
	$$A+B \equiv B+A, \qquad A \times B \equiv B \times A,$$
	$$A+(B+C) \equiv (A+B)+C, \qquad A \times (B \times C) \equiv (A \times B) \times C,$$
	$$\Gamma_0 + A \equiv A, \qquad \Gamma_0 \times A \equiv \Gamma_0, \qquad \Gamma_1 \times A \equiv A$$
	$$(A+B) \times C \equiv (A \times C) + (B \times C)$$
\end{prop}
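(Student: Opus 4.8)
The plan is straightforward: for each of the eight claimed isomorphisms I would exhibit an explicit bijection between the two domains and check that it preserves both the quasi-order and the norm, since this is exactly what $\equiv$ demands. Because the disjoint sum, cartesian product, $\Gamma_0$ and $\Gamma_1$ are all defined by completely explicit formulas for domain, order and norm, every verification collapses to a one-line case analysis, so I would organize the proof as a list of the required maps together with a remark on why each preserves order and norm.

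First I would dispatch the identities involving $\Gamma_0$ and $\Gamma_1$. Since $\Gamma_0$ is empty, $\Gamma_0 + A$ consists only of pairs $(2,x)$ with $x \in A$, and $(2,x)\mapsto x$ is the desired isomorphism: the sum-norm of $(2,x)$ is $|x|_A$ by definition, and comparability reduces to comparability of the second coordinates. Similarly $\Gamma_0 \times A$ is empty, so $\Gamma_0 \times A \equiv \Gamma_0$ via the empty map. Writing $\Gamma_1 = \{a_0\}$ with $|a_0|=0$, the map $(a_0,x)\mapsto x$ witnesses $\Gamma_1 \times A \equiv A$, since $\max(0,|x|_A)=|x|_A$ and $(a_0,x)\le(a_0,y)$ holds iff $x\le_A y$. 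For commutativity, the tag-swap $(1,x)\mapsto(2,x)$, $(2,y)\mapsto(1,y)$ handles $A+B\equiv B+A$ (the sum-norm ignores the tag, and comparability forces equal tags on both sides), and $(x,y)\mapsto(y,x)$ handles $A\times B\equiv B\times A$ (the product-norm is a symmetric $\max$ and the product-order is componentwise). Associativity of $+$ and of $\times$ is the same idea with three components: for sums send $(1,a)\mapsto(1,(1,a))$, $(2,(1,b))\mapsto(1,(2,b))$, $(2,(2,c))\mapsto(2,c)$; for products send $(x,(y,z))\mapsto((x,y),z)$; in both cases the norm reads off the same data on either side, using associativity of $\max$ in the product case.

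The one line that is marginally less mechanical is distributivity. Here I would write a generic element of $(A+B)\times C$ as $((i,w),c)$ --- with $i\in\{1,2\}$, $w$ in the $i$-th summand, $c\in C$ --- and map it to $(i,(w,c))\in(A\times C)+(B\times C)$; this is manifestly a bijection, its norm equals $\max(|w|,|c|_C)$ computed on either side, and $((i,w),c)\le((j,w'),c')$ iff $i=j$, $w$ and $w'$ are comparable in the common summand, and $c\le_C c'$, which is precisely the defining condition for $(i,(w,c))\le(j,(w',c'))$. There is no real obstacle in this proposition: the only thing one must do carefully is keep the disjoint-sum tags straight under nesting and confirm that the iterated $\max$ defining the product norm reassociates, which is why the paper is content to state these facts as ``easily checked''.
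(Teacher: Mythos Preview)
Your proposal is correct and is exactly the routine verification the paper has in mind when it says these properties ``can be easily checked'': the paper gives no proof at all for this proposition, and your explicit bijections (tag-swap, coordinate-swap, re-bracketing, and the distributivity map $((i,w),c)\mapsto(i,(w,c))$) together with the one-line order/norm checks are precisely what is needed.
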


Further, the following facts about nwqo reflections can be easily verified:
\begin{prop} \label{prop:funfactsrefl}
	$$A+B/(1,x) \hri (A/x) + B, \qquad A+B/(2,x) \hri A + (B/x)$$
	$$(A \times B)/(x,y) \hri [(A/x) \times B] + [A \times (B/x)]$$
	$$A \hri A' \text{ and } B \hri B' \text{ implies } A+B \hri A'+B'$$
	$$A \hri A' \text{ and } B \hri B' \text{ implies } A \times B \hri A' \times B'$$
\end{prop}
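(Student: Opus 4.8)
The plan is to treat the four facts one at a time, in each case exhibiting an explicit map and checking the two defining conditions of an nwqo reflection: order-reflection ($r(u)\le r(v)\Rightarrow u\le v$) and the norm bound, which here must hold with the identity polynomial. The first two facts are in fact witnessed by identity maps, because the structures on the two sides literally coincide.

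For $A+B/(1,x)\hri (A/x)+B$ I would unfold the residual. Since $(1,x)\le_{A+B}(i,y)$ holds exactly when $i=1$ and $x\le_A y$, the set $(A+B)/(1,x)$ equals $\{(1,y):y\in A/x\}\cup\{(2,z):z\in B\}$. The order and norm on the residual are by definition the restrictions of those of $A+B$, and on this set they agree componentwise with the order and norm of $(A/x)+B$ — recall that the norm of a disjoint-sum element is just the underlying norm, and that $A/x$ carries exactly the restricted order and norm. Hence the identity map is an isomorphism, in particular a reflection; the case of $(2,x)$ is symmetric.

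For $(A\times B)/(x,y)\hri [(A/x)\times B]+[A\times (B/x)]$ (where the intended second factor is the residual of $B$ at $y$, since $y$ is the $B$-component), I would first observe that, unfolding the definition, $(A\times B)/(x,y)=\{(a,b):x\nleq_A a\text{ or }y\nleq_B b\}$, which as a subset of $A\times B$ is precisely the union $[(A/x)\times B]\cup[A\times (B/y)]$. The reflection is then the map that includes this union into the disjoint sum, breaking ties toward the left summand: if $x\nleq_A a$ set $r(a,b):=(1,(a,b))$, otherwise $x\le_A a$ forces $y\nleq_B b$ and we set $r(a,b):=(2,(a,b))$. The norm is preserved, as $|(a,b)|$ is the max of $|a|$ and $|b|$ in every structure involved. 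For order-reflection, suppose $r(a,b)\le r(a',b')$ in the disjoint sum: comparable elements of a disjoint sum lie in the same summand, and the order within each summand is exactly the restriction of $\le_{A\times B}$ (residuals carry restricted orders and the product order is componentwise), so $(a,b)\le_{A\times B}(a',b')$.

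Finally, the two precongruence facts are the identity-polynomial instances of Proposition~\ref{prop:precongruence}: given $r:A\hri A'$ and $r':B\hri B'$, the map $s$ equal to $r$ on $A$ and $r'$ on $B$ witnesses $A+B\hri A'+B'$, and $(a,b)\mapsto(r(a),r'(b))$ witnesses $A\times B\hri A'\times B'$; the bounds hold with the identity polynomial because a disjoint-sum norm is the underlying norm and a product norm is the max of the underlying norms, each of which is preserved or bounded under componentwise application of $r,r'$. None of these steps is genuinely difficult; the only point needing care is the case analysis in the product-residual map — verifying that comparable images necessarily land in the same summand and that each summand carries exactly the restricted product order.
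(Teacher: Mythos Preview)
Your proposal is correct. The paper does not actually give a proof of this proposition; it merely states that the facts ``can be easily verified'' and leaves the details to the reader. Your explicit verification --- identity maps for the disjoint-sum residuals, the tagged-inclusion map for the product residual, and the componentwise maps from Proposition~\ref{prop:precongruence} specialised to the identity polynomial --- is exactly the routine check the paper has in mind, and every step you give goes through (you also correctly note that the second summand in the product case should be $A\times(B/y)$ rather than $A\times(B/x)$, which is a typo in the statement).
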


\subsection{Results about sub-recursive hierarchies}

The following facts are known about sub-recursive hierarchies: (see Lemma 5.1 and C.9 of \cite{ICALP})
\begin{prop} \label{prop:funfactshier}
	Let $h$ be a strictly increasing inflationary function. 
	For all $\alpha \in \CNF(\omega^{\omega^{\omega}})$ and $x \in \nn$ we have:
	\begin{itemize} \setlength\itemsep{1mm}
		\item $h_{\alpha}(x) \le h_{\alpha}(y)$ if $x \le y$
		\item $h_{\alpha}(x) \le h^{\alpha}(x) - x$
		\item $h^{\omega^{\alpha} \cdot r}(x) = f_{h,\alpha}^r(x)$ for all $r < \omega$.
	\end{itemize}
\end{prop}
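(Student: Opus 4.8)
The plan is to establish each of the three items by transfinite induction on $\alpha$, exploiting well-foundedness of the predecessor operator $P_x$; these are classical facts about the Hardy, Cichon and fast-growing hierarchies, so I would in fact just cite Lemmas~5.1 and~C.9 of \cite{ICALP}, but here is how a self-contained argument goes. I would prove the third item first, then the second, and leave the first — the genuinely delicate one — for last.

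\emph{Third item.} The key ingredient is the composition (additivity) property of the Hardy hierarchy: $h^{\delta+\gamma}(x) = h^{\delta}(h^{\gamma}(x))$ whenever the sum $\delta+\gamma$ is already in Cantor normal form (every exponent occurring in $\delta$ is $\ge$ every exponent occurring in $\gamma$). This follows by induction on $\gamma$ from $h^{\mu}(x) = h^{P_x(\mu)}(h(x))$ together with $P_x(\delta+\gamma) = \delta + P_x(\gamma)$, the latter because both the fundamental sequence and the predecessor act only on the CNF tail. In particular $h^{\omega^{\beta}\cdot r}(x) = (h^{\omega^{\beta}})^{r}(x)$. One also notes $h^{\lambda}(x) = h^{\lambda(x)}(x)$ for limits, directly from $P_x(\lambda) = P_x(\lambda(x))$. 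Then $h^{\omega^{\alpha}}(x) = f_{h,\alpha}(x)$ is proved by induction on $\alpha$: for $\alpha = 0$ it is $h^{1}(x) = h(x) = f_{h,0}(x)$; for $\alpha = \beta+1$ one unfolds $h^{\omega^{\beta+1}}(x) = h^{(\omega^{\beta+1})(x)}(x) = h^{\omega^{\beta}\cdot(x+1)}(x) = (h^{\omega^{\beta}})^{x+1}(x) = f_{h,\beta}^{x+1}(x) = f_{h,\beta+1}(x)$; for $\alpha$ a limit one uses $h^{\omega^{\alpha}}(x) = h^{\omega^{\alpha(x)}}(x) = f_{h,\alpha(x)}(x) = f_{h,\alpha}(x)$. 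Finally $h^{\omega^{\alpha}\cdot r}(x) = (h^{\omega^{\alpha}})^{r}(x) = f_{h,\alpha}^{r}(x)$.

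\emph{Second item.} I would prove the equivalent bound $x + h_{\alpha}(x) \le h^{\alpha}(x)$ by induction on $\alpha$. The base case $\alpha = 0$ is the equality $x + 0 = x$. For $\alpha > 0$ — both successors and limits obey $h_{\alpha}(x) = 1 + h_{P_x(\alpha)}(h(x))$ and $h^{\alpha}(x) = h^{P_x(\alpha)}(h(x))$ with $P_x(\alpha) < \alpha$ — the induction hypothesis at $P_x(\alpha)$ and argument $h(x)$ gives $h(x) + h_{P_x(\alpha)}(h(x)) \le h^{\alpha}(x)$, whence $x + h_{\alpha}(x) = x + 1 + h_{P_x(\alpha)}(h(x)) \le x + 1 - h(x) + h^{\alpha}(x) \le h^{\alpha}(x)$, the last step using $h(x) \ge x+1$. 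This is the one place where one must read ``strictly increasing inflationary'' as $h(x)\ge x+1$; for the sole instance used later, $h(x) = 4x\,g(x)$, this holds once $x\ge 1$, which is all that is needed.

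\emph{First item, and the main obstacle.} Monotonicity $x\le y \Rightarrow h_{\alpha}(x)\le h_{\alpha}(y)$ again goes by induction on $\alpha$: the base case is trivial, and the successor case $\alpha = \beta+1$ follows from $h_{\beta+1}(x) = 1 + h_{\beta}(h(x))$, monotonicity of $h$, and the induction hypothesis at $\beta$. The limit case is the crux and the step I expect to fight with: comparing $h_{\lambda}(x) = 1 + h_{P_x(\lambda)}(h(x))$ with $h_{\lambda}(y) = 1 + h_{P_y(\lambda)}(h(y))$, one knows $\lambda(x)\le\lambda(y)$ and $h(x)\le h(y)$, but one \emph{cannot} pass from the subscript $P_x(\lambda)$ to $P_y(\lambda)$ by monotonicity of $h_{\bullet}$ in the ordinal argument, which genuinely fails — already for $h$ the successor function there are $\beta < \beta'$ with $h_{\beta}(z) > h_{\beta'}(z)$ for suitable $z$. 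The standard remedy is to carry a stronger invariant through the induction, roughly that the descent sequence computed from $(\alpha,x)$ is pointwise dominated by the one from $(\alpha,y)$, which rests on monotonicity of $x\mapsto\lambda(x)$ and the compatibility of $P_x$ with it (the ``pointwise ordering''/Bachmann-type properties of the chosen fundamental sequences). This is exactly what Lemma~5.1 of \cite{ICALP} establishes; I would either cite it outright or reproduce its simultaneous induction, everything else above being bookkeeping layered on top.
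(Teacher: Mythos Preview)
Your proposal is correct and matches the paper's own treatment: the paper does not prove this proposition at all but simply records it as known and cites Lemmas~5.1 and~C.9 of \cite{ICALP}, exactly as you suggest doing. Your additional self-contained sketches are sound, and your flagging of the subtlety in the second item (that the step needs $h(x)\ge x+1$, not merely $h(x)\ge x$) and of the genuine difficulty in the limit case of the first item are both accurate observations that go beyond what the paper itself spells out.
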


\section{Proofs for upper bound of majoring ordering} \label{section:appuppmaj}

\subsection{Proof of proposition \ref{prop:lessthan}} \label{proof:lessthan}

The proposition is clearly true when $\alpha = d$ (or) $\alpha = \omega^{\omega^d}$ for some $d \in \nn$. For the general case,
first observe that the following three statements are true:
\begin{align*}
	\alpha < \alpha' &	\implies \alpha \oplus \beta < \alpha' \oplus \beta\\
	\alpha < \alpha' \text{ and } 0 < \beta &\implies \alpha \otimes \beta < \alpha' \otimes \beta\\
	\bigoplus_{i=1}^m \alpha_i < \omega^{\beta} &\iff \alpha_i < \omega^{\beta} \text{ for all } i	
\end{align*}

Using these statements and ordinal induction, the proposition can be proven for the general case as well.

\subsection{Proof of theorem \ref{theorem:wqotoord}} \label{proof:wqotoord}

For the rest of this section, whenever we want to mention that there exists a nwqo reflection from $A$ to $B$, we simply denote it by 
$A \hri B$.

Let $A$ be a majoring powerset nwqo. Since majoring powerset nwqos are built from $\{\pow(\nn^d)\}_{d > 0}$ and $\{\Gamma_d\}_{d \in \{0,1\}}$ by disjoint sums and cartesian products, we can write $A$ as \\
$\sum_{i=1}^m \prod_{j=1}^{j_i} \pow(\nn^{d_{i,j}})$ where the empty sum is taken to be $\Gamma_0$ and the empty product is taken to be $\Gamma_1$. 

Let $X \in A_{\le n}$. We proceed by induction on the structure of $A$.

\begin{itemize}
	\item Suppose $A$ is finite (i.e $j_i = 0$ for every $i$). Therefore $A = \sum_{i=1}^m \Gamma_1$. If $m = 0$, then $A = \Gamma_0$ and the claim holds trivially. Suppose $m > 0$. Therefore $A \equiv \Gamma_m$ for some $m > 0$. 
	By definition
	$$o(A) = m, \; \partial_n(m) = \{m-1\}, \; C(m-1) = \Gamma_{m-1}$$ and indeed $\Gamma_m/X \equiv \Gamma_{m-1}$.
	
	\item Suppose $A = \pow(\nn^d)$ for some $d$. If $d = 1$, then 
	$$o(A) = \omega, \; \partial_n(\omega) = \{n+1\}, \; C(n+1) = \Gamma_{n+1}$$ 
	
	Suppose $Y$ is a finite subset of $\nn$ such that $Y \in A/X$. By assumption $X \in A_{\le n}$ and so if $x \in X$ then $x \le n$. Since $Y \in A/X$ it follows that if $y \in Y$ then $y < n$. Therefore $Y$ is a subset over 
	$\{0,1,\dots,n-1\}$. We consider the following reflection $\refl: A/X \hri \Gamma_{n+1}$: $\refl(\emptyset) = a_0$
	and $\refl(Y) = a_{i+1}$ if the maximum element in $Y$ is $i$. 
	This can be easily seen to be a nwqo reflection and so we are done.
	
	Suppose $d > 1$. Then $$o(A) = \omega^{\omega^{d-1}}, \; \partial_n(\omega^{\omega^{d-1}}) = \{\omega^{\omega^{d-2} \cdot (dn)}\}$$  $$C(\omega^{\omega^{d-2} \cdot (dn)}) = \pow(\nn^{d-1})^{dn}$$ Therefore to prove the claim, it suffices to exhibit a reflection of the form
	$\refl : \pow(\nn^d)/X \hri \pow(\nn^{d-1})^{dn}$.
	
	Let $Y \in \pow(\nn^d)/X$. Since $X \in A_{\le n}$ it follows that if $x \in X$ then $|x| \le n$. Hence 
	\begin{equation} \label{eqn:xn}
	\text{ If } x \in X \text{ then } x \le (\underbrace{n,\dots,n}_{d \text{ times}})		
	\end{equation}
	
	Since $Y \in \pow(\nn^d)/X$, we have that $X \not\sqsubseteq^{\mathtt{maj}} Y$. This combined with equation (\ref{eqn:xn}) implies that if $y \in Y$ then there exists $i$ such that $y_i < n$. Using this fact, we define our reflection as follows: For $1 \le i \le d, \ 0 \le j \le n-1$, define 
	\begin{multline*}
		Y_i^j = \{(y_1,y_2,\dots,y_{i-1},y_{i+1},y_{i+2},\dots,y_d) :\\ (y_1,y_2,\dots,y_d) \in Y, y_i = j\}
	\end{multline*}
 Now consider the map 
 \[
	 Y \to (Y_1^0, \dots, Y_1^{n-1}, Y_2^0, \dots, Y_2^{n-1}, \dots, Y^0_d, \dots, Y_d^{n-1})
 \]
 It is easy to verify that this is indeed a nwqo reflection from $\pow(\nn^d)/X$ to $\pow(\nn^{d-1})^{dn}$.
	
	\item Suppose $A = \prod_{i=1}^k \pow(\nn^{d_i+1})$ (where $d_i$ can be 0). Then,
	$$o(A) = \bigotimes_{i=1}^k \omega^{\omega^{d_i}} = \omega^{\omega^{d_1} \oplus \dots \oplus \omega^{d_k}}$$
	$$\alpha' := \partial_n(A) = \left\{ \bigoplus_{i=1}^k \left(D_n(\omega^{\omega^{d_i}}) \otimes \bigotimes_{j \neq i} \omega^{\omega^{d_j}}\right) \right\}$$
	$$C(\alpha') = C\left[ \bigoplus_{i=1}^k \left(\overbrace{D_n(\omega^{\omega^{d_i}}}^{\beta_i}) \otimes \overbrace{\bigotimes_{j \neq i} \omega^{\omega^{d_j}}}^{\alpha_i'}\right) \right] = \sum_{i=1}^k C(\beta_i) \times C(\alpha_i')$$
	
	Now $C(\alpha_i') = \prod_{j \neq i} \pow(\nn^{d_j+1})$ and
	$C(\beta_i) = C(D_n(\omega^{\omega^{d_i}}))$.
	Let $X = (X_1,\dots,X_k)\in A_{\le n}$. By proposition \ref{prop:funfactsrefl} we have a reflection $A/X \hri \sum_{i=1}^k \pow(\nn^{d_i+1})/X_i \times \prod_{j \neq i} \pow(\nn^{d_j+1})$. 
	By induction hypothesis, there is a reflection of the form $\pow(\nn^{d_i+1})/X_i \hri C(D_n(\omega^{\omega^{d_i}})) = C(\beta_i)$. Combining these two we get that there is a reflection of the form $A/X \hri \sum_{i=1}^k \pow(\nn^{d_i+1})/X_i \ \times \ \prod_{j \neq i} \pow(\nn^{d_j+1}) \hri \sum_{i=1}^k C(\beta_i) \times C(\alpha_i') = C(\alpha')$.
	
	\item Suppose $A$ is of the form $\sum_{i=1}^k \prod_{j=1}^{j_i} \pow(\nn^{d_{i,j}+1})$. Let $A_i = \prod_{j=1}^{j_i} \pow(\nn^{d_{i,j}+1})$ so that $A = \sum_{i=1}^k A_i$. Notice that if $\alpha' \in \partial_n(o(A_i))$ then $\alpha' = D_n(o(A_i))$.
	Now $$o(A) = \bigoplus_{i=1}^k o(A_i)$$
	and 
	$$\partial_n(o(A)) = \left\{D_n(o(A_i)) \oplus \bigoplus_{j \neq i} o(A_j) \ | \ i = 1,\dots,k\right\}$$
	
	Let $X \in A_{\le n}$ such that $X = (i,X')$ for some $X' \in A_i$. By proposition \ref{prop:funfactsrefl} we know that there exists a reflection $A/X \hri (A_i/X') + \sum_{j \neq i} A_j$.
	Let $\alpha' = D_n(o(A_i)) \oplus \bigoplus_{j \neq i} o(A_j)$. By induction hypothesis we have a reflection $A_i/X' \hri C(D_n(o(A_i)))$. Since $\sum_{j \neq i} A_j\equiv$ \\ $C(\bigoplus_{j \neq i} o(A_j))$ it follows that we have a reflection $A/X \hri (A_i/X') + \sum_{j \neq i} A_j \hri 
	C(D_n(o(A_i)) \oplus $\\ $\bigoplus_{j \neq i} o(A_j)) = C(\alpha')$.
\end{itemize}

\subsection{Proof of theorem \ref{lemma:wqotoordeqn}} \label{proof:wqotoordeqn}

Let $A$ be a majoring powerset nwqo and let $\alpha = o(A)$. The descent equation tells us that
$$L_{A,g}(n) = \max_{X \in A_{\le n}} \{1 + L_{A/X,g}(g(n))\}$$
Since $o$ and $C$ are inverse operators of each other, we can rephrase the descent equation as
$$L_{C(\alpha),g}(n) = \max_{X \in A_{\le n}} \{1 + L_{A/X,g}(g(n))\}$$

By theorem \ref{theorem:wqotoord}, if $X \in A_{\le n}$ then there exists $\alpha' \in \partial_n(\alpha)$ such that $A/X \hookrightarrow C(\alpha')$. Hence by proposition \ref{prop:polyrefl} we have,

$$L_{C(\alpha),g}(n) \le \max_{\alpha' \in \partial_n(\alpha)} \{1 + L_{C(\alpha'),g}(g(n)) \}$$

\subsection{Proof of theorem \ref{theorem:uppmaj}} \label{proof:uppmaj}

Similar to the presentation in \cite{ICALP}, we present some intermediate results before proving the main theorem.\\

A main problem with the Cichon hierarchy is that in general $\alpha < \alpha'$ \emph{does not} imply $h_{\alpha}(x) \le h_{\alpha'}(x)$. To demonstrate this, let $h$ be the successor function and let 
$\alpha = n+2$ and $\alpha' = \omega$. Clearly, $h_{n+2}(n) = n+2$ whereas $h_{\omega}(n) = n+1$. This will quickly prove to be a problem in our arguments for proving upper bounds. To handle this, we introduce the notion of \emph{pointwise at-$x$ ordering} \cite{Cichon}. Given $x \in \nn$, we define the relation $\prec_x$ between ordinals as the smallest transitive relation such that for all $\alpha, \lambda$:
\begin{equation*}
\alpha \prec_x \alpha+1, \qquad \lambda_x \prec_x \lambda
\end{equation*}
Here $\lambda_x$ is the $x^{th}$ term in the fundamental sequence for $\lambda$. The inductive definition of $\prec_x$ implies
\begin{equation*}
\alpha \prec_x \alpha' \iff 
\begin{cases}
\alpha' = \beta + 1 \text{ and } \alpha \preceq_x \beta \text{ or }\\
\alpha' = \lambda \text{ and } \alpha \preceq_x \lambda_x
\end{cases}
\end{equation*}

The following is true about the pointwise ordering:

\begin{prop} \label{prop:funfactspointwise} (See section B.3 of \cite{ICALP})
	$$\prec_0 \ \dots \ \subset \ \prec_x \ \subset \ \prec_{x+1} \ \subset \ \dots \ \subset \ \left(\bigcup_{x \in \nn} \ \prec_x \right) \ = \ < $$
	where $<$ is the usual ordering on ordinals.
\end{prop}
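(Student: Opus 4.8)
The plan is to prove the displayed chain in two halves: the inclusions $\prec_x \subseteq \prec_{x+1}$ for every $x$, and the identity $\bigcup_{x\in\nn}\prec_x = {<}$. The \emph{strictness} of each step will then be witnessed concretely by a single pair, e.g. $(x+2,\omega)$: since $\omega_{x+1} = x+2$ we have $x+2 \preceq_{x+1} \omega_{x+1}$, hence $x+2 \prec_{x+1}\omega$; but $x+2 > \omega_x = x+1$ and (once we know $\prec_x$ refines $<$) $x+2 \not\prec_x \omega$, which simultaneously gives $\prec_x \subsetneq \prec_{x+1}$ and $\prec_x \subsetneq {<}$.

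The one technical ingredient I would isolate is a monotonicity property of the fixed fundamental sequences: for every limit $\lambda$ and every $x$, $\lambda_x \preceq_{x+1}\lambda_{x+1}$. Writing $\lambda = \gamma + \omega^{\delta}$ in Cantor normal form, both the successor case ($\lambda_{x+1} = \lambda_x + \omega^{\delta'}$ when $\delta = \delta'+1$) and the limit case ($\lambda_{x+1} = \gamma + \omega^{\delta(x+1)} = \gamma + \omega^{\delta(x)} + \omega^{\delta(x+1)}$ by absorption, since $\delta(x) < \delta(x+1)$) reduce to the auxiliary claim that $\eta \preceq_y \eta + \omega^{\mu}$ for all $\eta,\mu,y$. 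That claim is an easy induction on $\mu$: for $\mu = 0$ it is the successor generator $\eta\prec_y\eta+1$; for $\mu = \mu'+1$ one unfolds $\eta \prec_y \eta+\omega^{\mu'+1} \iff \eta \preceq_y \eta + \omega^{\mu'}\cdot(y+1)$ and chains $y+1$ instances of the induction hypothesis using transitivity of $\preceq_y$; for $\mu$ a limit one unfolds $\eta \prec_y \eta+\omega^{\mu} \iff \eta \preceq_y \eta+\omega^{\mu(y)}$ with $\mu(y) < \mu$ and applies the induction hypothesis directly.

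Granting this, $\prec_x \subseteq \prec_{x+1}$ is immediate: $\prec_{x+1}$ is transitive, and it contains both families of generators of $\prec_x$ — the pairs $\alpha\prec_{x+1}\alpha+1$ are literally generators of $\prec_{x+1}$, while for a limit $\lambda$ the ingredient above together with the generator $\lambda_{x+1}\prec_{x+1}\lambda$ and transitivity gives $\lambda_x \preceq_{x+1}\lambda_{x+1}\prec_{x+1}\lambda$, hence $\lambda_x \prec_{x+1}\lambda$ — so by minimality of $\prec_x$ as a transitive relation, $\prec_x\subseteq\prec_{x+1}$. The identical minimality argument with $<$ in place of $\prec_{x+1}$ gives $\prec_x \subseteq {<}$ for every $x$ (using $\alpha<\alpha+1$ and $\lambda_x<\lambda$, the latter because a fundamental sequence is strictly increasing with supremum $\lambda$), and therefore $\bigcup_{x}\prec_x \subseteq {<}$.

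It remains to prove ${<}\subseteq\bigcup_x\prec_x$, i.e. that $\alpha<\beta$ implies $\alpha\prec_x\beta$ for some $x$, which I would do by ordinal induction on $\beta$. If $\beta=\gamma+1$ then $\alpha\le\gamma$: if $\alpha=\gamma$ any $x$ works, and if $\alpha<\gamma$ the induction hypothesis gives $\alpha\prec_x\gamma$, hence $\alpha\prec_x\gamma+1$. If $\beta=\lambda$ is a limit, then since $\sup_x\lambda_x=\lambda>\alpha$ there is $x$ with $\alpha<\lambda_x$; applying the induction hypothesis to $\alpha<\lambda_x$ yields some $y$ with $\alpha\prec_y\lambda_x$, and since $\lambda_x\prec_x\lambda$ is a generator, setting $z=\max(x,y)$ and lifting both relations to $\prec_z$ via the already-established nestedness gives $\alpha\prec_z\lambda$ by transitivity. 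I expect the fundamental-sequence monotonicity $\lambda_x\preceq_{x+1}\lambda_{x+1}$ (equivalently the auxiliary claim $\eta\preceq_y\eta+\omega^{\mu}$) to be the only point requiring real care, since it is where the concrete choice of fundamental sequences is used and where one cannot argue purely formally from the generators; everything else is a combination of "the target relation is transitive and contains the generators'' observations and a routine ordinal induction.
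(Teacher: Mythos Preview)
The paper does not prove this proposition itself but simply cites \cite{ICALP}, so there is no in-paper argument to compare against. Your overall architecture is the standard one and is sound: obtain $\prec_x\subseteq\prec_{x+1}$ from the monotonicity $\lambda_x\preceq_{x+1}\lambda_{x+1}$ of fundamental sequences, obtain each $\prec_x\subseteq{<}$ by the same minimality argument, and finish ${<}\subseteq\bigcup_x\prec_x$ by ordinal induction on the larger ordinal. The strictness witness $(x+2,\omega)$ is also correct.

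There is, however, a genuine gap in your auxiliary claim. The statement ``$\eta\preceq_y\eta+\omega^{\mu}$ for all $\eta,\mu,y$'' is \emph{false} when ordinal addition causes absorption: take $\eta=2$, $\mu=1$, $y=0$, so that $\eta+\omega^{\mu}=2+\omega=\omega$; then $2\prec_0\omega$ would force $2\preceq_0\omega_0=1$, which contradicts $\prec_0\subseteq{<}$. Your inductive argument for the claim tacitly assumes that $\eta+\omega^{\mu}$ is already in Cantor normal form --- every unfolding you use, e.g.\ $(\eta+\omega^{\mu'+1})_y=\eta+\omega^{\mu'}\cdot(y+1)$, requires this --- so what you actually establish is only the CNF-restricted version. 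That version is enough for the successor-exponent case of the monotonicity lemma, where $\lambda_x+\omega^{\delta'}$ is genuinely in CNF. But in the limit-exponent case you write $\lambda_{x+1}=\lambda_x+\omega^{\delta(x+1)}$ \emph{via} absorption, so the CNF hypothesis fails and the auxiliary claim does not apply. The clean fix is to handle that case by structural induction on $\lambda$: since $\delta<\lambda$ (as $\delta<\omega^{\delta}\le\lambda$ below $\epsilon_0$), the induction hypothesis yields $\delta(x)\preceq_{x+1}\delta(x+1)$, and one then uses the easy congruence that $\alpha\preceq_y\beta$ with $\gamma+\omega^{\beta}$ in CNF implies $\gamma+\omega^{\alpha}\preceq_y\gamma+\omega^{\beta}$ --- which follows by checking it on generator pairs, using your CNF-restricted auxiliary claim for successor steps and the identity $(\gamma+\omega^{\nu})_y=\gamma+\omega^{\nu_y}$ for limit steps.
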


We have already introduced the notion of \emph{leanness} of ordinals, but we reintroduce it here for the sake of completeness. Let $\alpha \in \CNF(\epsilon_0)$. We say that $\alpha$ is $k$-lean if
$N\alpha \le k$. Observe that only 0 is 0-lean and if $\alpha$ is $k$-lean and $\alpha'$ is $k'$-lean, then $\alpha \oplus \alpha'$ is 
$k+k'$-lean. We have the following lemma: 

\begin{lemma} \label{lemma:icalp} (Lemma B.1 from \cite{ICALP})
	Let $\alpha$ be $x$-lean. Then $\alpha < \gamma$ iff $\alpha \preceq_{x} P_x(\gamma)$	
\end{lemma}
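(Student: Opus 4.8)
The plan is to prove the two implications separately. The direction $\alpha \preceq_x P_x(\gamma) \Rightarrow \alpha < \gamma$ is immediate and does not use leanness: since $\prec_x$ refines $<$ by Proposition~\ref{prop:funfactspointwise} we get $\alpha \le P_x(\gamma)$, and a routine induction on $\gamma$ shows $P_x(\gamma) < \gamma$ for every $\gamma > 0$ (the successor case holds by definition of $P_x$, and for a limit $\lambda$ we have $P_x(\lambda) = P_x(\lambda(x))$, which is $< \lambda(x) < \lambda$ by the induction hypothesis). Hence $\alpha < \gamma$. The content of the lemma is therefore the converse, which I would prove by induction on $\gamma$ after recording two auxiliary facts.

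The first auxiliary fact is that $P_x(\beta) \prec_x \beta$ for every $\beta > 0$, proved by induction on $\beta$: the successor case is one of the two defining clauses of $\prec_x$, and for a limit $\mu$ we have $P_x(\mu) = P_x(\mu(x)) \prec_x \mu(x) \prec_x \mu$, using the induction hypothesis (every term of a fundamental sequence is nonzero), the defining clause $\mu(x) \prec_x \mu$, and transitivity of $\prec_x$. The second auxiliary fact is the one where leanness enters: if $\alpha$ is $x$-lean and $\alpha < \lambda$ with $\lambda$ a limit, then $\alpha < \lambda(x)$. Writing $\lambda = \gamma + \omega^{\beta}$ in Cantor normal form with $\beta > 0$, if $\alpha \le \gamma$ the claim is clear since $\lambda(x) > \gamma$; otherwise subtract to get $\alpha = \gamma + \delta$ with $0 < \delta < \omega^{\beta}$, and split on $\beta$. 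If $\beta = \beta' + 1$, so $\lambda(x) = \gamma + \omega^{\beta'}\cdot(x+1)$, the leading exponent of $\delta$ is at most $\beta'$; if it is strictly smaller then $\delta < \omega^{\beta'}$, and if it equals $\beta'$ then the coefficient $c$ of $\omega^{\beta'}$ in $\delta$ equals its coefficient in $\alpha$ (nothing in $\gamma$ has exponent $\beta'$), so $c \le N\alpha \le x$ and hence $\delta < \omega^{\beta'}\cdot(x+1)$. If $\beta$ is a limit, so $\lambda(x) = \gamma + \omega^{\beta(x)}$, the leading exponent $\mu$ of $\delta$ satisfies $\mu < \beta$ and $N\mu \le N\alpha \le x$, so applying this very claim inductively to $\mu < \beta$ (note $\beta < \lambda$) yields $\mu < \beta(x)$, whence $\delta < \omega^{\mu+1} \le \omega^{\beta(x)}$. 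In every case $\alpha = \gamma + \delta < \lambda(x)$.

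With these in hand, the forward implication follows by induction on $\gamma$. If $\gamma = \beta + 1$ then $\alpha \le \beta = P_x(\gamma)$; if $\alpha = \beta$ we are done by reflexivity, and if $\alpha < \beta$ then the induction hypothesis gives $\alpha \preceq_x P_x(\beta)$, the first auxiliary fact gives $P_x(\beta) \prec_x \beta = P_x(\gamma)$, and transitivity gives $\alpha \preceq_x P_x(\gamma)$. If $\gamma = \lambda$ is a limit then $P_x(\gamma) = P_x(\lambda(x))$, and the second auxiliary fact gives $\alpha < \lambda(x) < \lambda$, so the induction hypothesis applied to $\lambda(x)$ yields $\alpha \preceq_x P_x(\lambda(x)) = P_x(\gamma)$.

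The only genuinely delicate step is the second auxiliary fact: one must set up its nested induction on the limit ordinal correctly and be careful with the Cantor-normal-form bookkeeping (in particular that the relevant coefficient or exponent of the ``tail'' $\delta$ is inherited by $\alpha$, so that leanness of $\alpha$ actually bounds it). Everything else is a direct unwinding of the definitions of $P_x$, $\prec_x$, and the fundamental sequences, together with Proposition~\ref{prop:funfactspointwise}.
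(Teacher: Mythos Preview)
The paper does not supply its own proof of this lemma: it is quoted verbatim as Lemma~B.1 of \cite{ICALP} and used as a black box, so there is nothing in the present paper to compare your argument against. That said, your proposal is a correct self-contained proof and is essentially the standard argument one finds for this fact. The two auxiliary observations---$P_x(\beta) \prec_x \beta$ for $\beta>0$, and the leanness step ``$\alpha$ $x$-lean and $\alpha<\lambda$ imply $\alpha<\lambda(x)$''---are exactly the right ingredients, and your Cantor-normal-form bookkeeping in the second one (showing that the relevant coefficient or exponent of the tail $\delta$ is inherited by $\alpha$, so that $N\alpha\le x$ bounds it) is the place where $x$-leanness is genuinely used. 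The nested induction on the limit ordinal in the case where $\beta$ is itself a limit is well-founded since $\beta<\lambda$, as you note. Nothing is missing.
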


The following results are know about the pointwise ordering: 
\begin{prop} \label{prop:pointwise} (See Lemma C.9 of \cite{ICALP})
	Let $h$ be a strictly increasing inflationary function. Then,
	\begin{itemize}
		\item $x \le y \implies h_{\alpha}(x) \le h_{\alpha}(y)$
		\item $\alpha \preceq_x \alpha' \implies h_{\alpha}(x) \le h_{\alpha'}(x)$
		\item If $\alpha$ is $x$-lean and $\alpha < \alpha'$ then
		$\forall y \ge x, \ h_{\alpha}(y) \le h_{\alpha'}(y)$
	\end{itemize}
\end{prop}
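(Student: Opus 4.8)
The plan is to obtain all three items from one transfinite induction, with item~(3) falling out as a short consequence of item~(2) and Lemma~\ref{lemma:icalp}. First I would prove items~(1) and~(2) \emph{simultaneously} by transfinite induction on the ``outer'' ordinal $\alpha'$, establishing for each $\alpha'$ the two statements $P1(\alpha')$: ``$x\le y \Rightarrow h_{\alpha'}(x)\le h_{\alpha'}(y)$'' and $P2(\alpha')$: ``$\alpha\preceq_x\alpha' \Rightarrow h_\alpha(x)\le h_{\alpha'}(x)$''. The case $\alpha'=0$ is trivial, since $h_0\equiv 0$ and, by the inductive characterization of $\prec_x$, nothing lies $\prec_x 0$. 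For a successor $\alpha'=\beta+1$, unfolding $h_{\beta+1}(x)=1+h_\beta(h(x))$ reduces $P1(\beta+1)$ to $P1(\beta)$ applied to $h(x)\le h(y)$ (here $h$ increasing), and $P2(\beta+1)$ splits on whether $\alpha=\beta+1$ (nothing to do) or $\alpha\preceq_x\beta$; in the latter case $P2(\beta)$ gives $h_\alpha(x)\le h_\beta(x)$ and $P1(\beta)$ applied to $x\le h(x)$ (here $h$ inflationary) gives $h_\beta(x)\le h_\beta(h(x))\le 1+h_\beta(h(x))=h_{\beta+1}(x)$.

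The limit case $\alpha'=\lambda$ is the crux, and the one identity that unlocks it is $h_\lambda(z)=h_{\lambda(z)}(z)$ for every $z$, which is immediate from the definitions because $P_z(\lambda)=P_z(\lambda(z))$ and $\lambda(z)>0$. Granting this, $P2(\lambda)$ is short: if $\alpha\preceq_x\lambda$ and $\alpha\neq\lambda$ then $\alpha\preceq_x\lambda(x)$, so $P2(\lambda(x))$ (available since $\lambda(x)<\lambda$) gives $h_\alpha(x)\le h_{\lambda(x)}(x)=h_\lambda(x)$. For $P1(\lambda)$ with $x\le y$, I would rewrite both sides with the identity to reduce it to $h_{\lambda(x)}(x)\le h_{\lambda(y)}(y)$, then chain $h_{\lambda(x)}(x)\le h_{\lambda(y)}(x)$ (by $P2(\lambda(y))$, using the auxiliary fact $\lambda(x)\preceq_x\lambda(y)$) with $h_{\lambda(y)}(x)\le h_{\lambda(y)}(y)$ (by $P1(\lambda(y))$) --- both licit since $\lambda(y)<\lambda$. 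The auxiliary monotonicity of the fixed fundamental sequences, $x\le y\Rightarrow\lambda(x)\preceq_x\lambda(y)$, I would prove by a separate routine transfinite induction on $\lambda$, handling the two clauses $\gamma+\omega^{\beta+1}$ and $\gamma+\omega^{\lambda_0}$ of the definition and appealing to the standard congruence properties of $\preceq_x$ (cf.\ Proposition~\ref{prop:funfactspointwise}).

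Item~(3) is then immediate: if $\alpha$ is $x$-lean and $\alpha<\alpha'$, fix $y\ge x$; since $N\alpha\le x\le y$, $\alpha$ is also $y$-lean, so Lemma~\ref{lemma:icalp} at $y$ gives $\alpha\preceq_y P_y(\alpha')$, hence $\alpha\preceq_y\alpha'$ (as $P_y(\alpha')\prec_y\alpha'$ and $\preceq_y$ is transitive), and item~(2) with subscript $y$ yields $h_\alpha(y)\le h_{\alpha'}(y)$. The main obstacle is precisely the limit case of the joint induction for~(1)--(2): because the Cichon recursion at $\lambda$ consults the \emph{different} ordinals $\lambda(x)\neq\lambda(y)$ at the two arguments, reconciling them forces both the identity $h_\lambda(z)=h_{\lambda(z)}(z)$ and the pointwise monotonicity of fundamental sequences; everything else is bookkeeping with the increasing/inflationary properties of $h$ and the induction hypotheses at strictly smaller ordinals.
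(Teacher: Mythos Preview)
The paper does not supply its own proof of this proposition; it simply cites Lemma~C.9 of~\cite{ICALP}. Your argument is correct and is the standard one found in that reference: a simultaneous transfinite induction for items~(1) and~(2), with the identity $h_\lambda(z)=h_{\lambda(z)}(z)$ unlocking the limit case, and item~(3) obtained from item~(2) via Lemma~\ref{lemma:icalp}.

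One small slip worth flagging: your appeal to Proposition~\ref{prop:funfactspointwise} for the ``congruence properties of $\preceq_x$'' is misplaced, since that proposition only records the chain of inclusions $\prec_0\subset\prec_1\subset\cdots\subset{<}$. The auxiliary fact $x\le y\Rightarrow\lambda(x)\preceq_x\lambda(y)$ that you need in the limit case of $P1$ actually rests on the left-additive congruence $\alpha\preceq_x\alpha'\Rightarrow\gamma+\alpha\preceq_x\gamma+\alpha'$ together with $0\preceq_x\delta$ for all $\delta$; both are routine separate inductions (and are proved in~\cite{ICALP}), but they are not in the paper. This does not affect the correctness of your outline.
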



Using the notions of leanness and pointwise ordering and the above mentioned results, we prove the following:

\begin{lemma} \label{lemma:lean}
	Let $k,n > 0$. Suppose $\alpha \in \CNF(\omega^{\omega^{\omega}})$ is $k$-lean and $\alpha' \in \partial_n(\alpha)$. Then $\alpha'$ is $2k+(k+1)n$-lean.
\end{lemma}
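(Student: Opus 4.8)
The statement to prove is: if $\alpha \in \CNF(\omega^{\omega^\omega})$ is $k$-lean and $\alpha' \in \partial_n(\alpha)$, then $\alpha'$ is $(2k+(k+1)n)$-lean. The natural approach is to unravel the definitions of $\partial_n$ and $D_n$ and track how the norm $N$ is affected at each step, using the observations on leanness already recorded just before the lemma (only $0$ is $0$-lean; if $\alpha$ is $k$-lean and $\alpha'$ is $k'$-lean then $\alpha \oplus \alpha'$ is $(k+k')$-lean). First I would recall that, writing $\alpha = \sum_{i=1}^m \omega^{\beta_i}$ in Cantor normal form, an element of $\partial_n(\alpha)$ has the form $D_n(\omega^{\beta_i}) \oplus \bigoplus_{j \neq i} \omega^{\beta_j}$ for some $i$. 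The term $\bigoplus_{j \neq i}\omega^{\beta_j}$ is a sub-CNF of $\alpha$, hence still $k$-lean. So by the additivity of leanness under $\oplus$ it suffices to bound the leanness of $D_n(\omega^{\beta_i})$, and then the total bound will be (leanness of $D_n(\omega^{\beta_i})$) $+\, k$.

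**Key steps.** The core of the argument is therefore a sub-claim: if $\omega^\beta$ is $k$-lean (equivalently $\beta$ is $k$-lean and $k \ge 1$, or $\beta = 0$), then $D_n(\omega^\beta)$ is $(k + (k+1)n)$-lean; combined with the $+k$ from the leftover terms this gives $2k + (k+1)n$. To prove the sub-claim I would induct on the structure of $\beta$, i.e.\ on $\beta$ as an ordinal, matching the three defining clauses of $D_n$:
\begin{itemize}
	\item Base $\beta$ a natural number $k'$: here $\omega^\beta$ is $\omega^{\omega^{p_1}+\dots}$ only if $\beta$ itself decomposes; the genuinely atomic cases are $D_n(j) = j-1$ (leanness drops, fine) and $D_n(\omega^{\omega^d}) = \omega^{\omega^{d-1}\cdot(d+1)n}$. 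For the latter, $N(\omega^{\omega^{d-1}\cdot (d+1)n}) = \max(d-1, (d+1)n)$ and $k$-leanness of $\omega^{\omega^d}$ means $d \le k$, so $\max(d-1,(d+1)n) \le (k+1)n + k$ comfortably (indeed $\le (k+1)n$ when $n \ge 1$, plus slack), which lands inside the claimed bound.
	\item Inductive/general clause $D_n(\omega^{\omega^{p_1}+\dots+\omega^{p_k}}) = \bigoplus_{i=1}^k\bigl(D_n(\omega^{\omega^{p_i}}) \otimes \bigotimes_{j\neq i}\omega^{\omega^{p_j}}\bigr)$: here I would use that the natural product of a $a$-lean and a $b$-lean ordinal is $(a+b)$-lean (needs a small separate check, analogous to the $\oplus$ statement), that $\bigotimes_{j\neq i}\omega^{\omega^{p_j}}$ is still $k$-lean, that $D_n(\omega^{\omega^{p_i}})$ is controlled by the base case above, and that $\oplus$ of finitely many summands each of bounded leanness again has bounded leanness — but one must be careful that combining many summands via $\oplus$ adds leannesses, so the decomposition must be organized so that the exponents $p_i$ are themselves the pieces whose leanness is already $\le k$, not independent copies. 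The cleanest route is to observe that $N$ of the whole $\partial_n$ output equals the max of $N$ over the summands' exponents and coefficients, because $\oplus$ and $\otimes$ on CNF only rearrange/add exponents and coefficients without creating new large numbers beyond sums of existing ones — so I would prove the bound by the "$N(\cdot) = \max$ of constituent data" characterization rather than iterating the $\oplus$-additivity lemma, which would over-count.
\end{itemize}

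**Main obstacle.** The delicate point is exactly this bookkeeping: naive repeated application of "$\oplus$ adds leanness" would blow the bound up linearly in $m$ (the number of CNF terms of $\alpha$), which is unbounded, whereas the true bound $2k+(k+1)n$ is independent of $m$. The resolution is that $N\alpha$ already bounds $m$ is \emph{false} in general — rather, $N$ does not see the number of summands at all, only coefficients and nested exponents — so I must phrase everything in terms of the norm $N$ directly (coefficients and exponent-norms) and verify that each operation ($D_n$ on an atom, $\otimes$, and the outer $\oplus$) only ever produces coefficients/exponents that are bounded by fixed arithmetic combinations of $k$ and $n$. So the plan is: (1) record the "norm $=$ max of constituent coefficients and exponent-norms" lemma and the $\otimes$-additivity-of-leanness lemma; (2) prove the atomic bound $N(D_n(\omega^{\omega^d})) \le \max(d-1,(d+1)n) \le (k+1)n$ for $d \le k$, $n\ge 1$; (3) push through the general $D_n$ clause using (1) to see that every coefficient and exponent-norm appearing in $D_n(\omega^{\beta_i})$ is $\le \max(k,(k+1)n) \le (k+1)n$; (4) finally add back the untouched summands $\bigoplus_{j\neq i}\omega^{\beta_j}$, which are $k$-lean, via the $\oplus$-additivity lemma once, for a total of $(k+1)n + k \le 2k + (k+1)n$.
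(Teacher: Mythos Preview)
Your high-level decomposition matches the paper's exactly: reduce to a sub-claim that $D_n(\omega^\beta)$ is $(k+(k+1)n)$-lean when $\omega^\beta$ is $k$-lean, then add $k$ once for the leftover $\bigoplus_{j\neq i}\omega^{\beta_j}$. You also correctly flag the real difficulty, namely that iterating ``$\oplus$ adds leanness'' across the (unboundedly many) summands of the general $D_n$ clause would over-count.

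The gap is in how you resolve that difficulty. Saying ``phrase everything in terms of $N$ directly'' and ``verify that each operation only ever produces bounded coefficients/exponents'' is not yet an argument: to read off $N$ you need the \emph{strict form} of $D_n(\omega^\beta)$, and getting that strict form is precisely where the potential coefficient blow-up lives. The paper's move, which you are missing, is to first rewrite $\beta$ itself in strict form $\sum_{i=1}^m \omega^{q_i}\cdot c_i$ with $q_1>\cdots>q_m$; then the $l$ summands in the definition of $D_n(\omega^\beta)$ collapse into $m$ groups of sizes $c_1,\dots,c_m$, each group contributing $\omega^{\beta_i}\cdot c_i$ for an explicitly computed $\beta_i$. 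The crucial verification is that $\beta_1<\beta_2<\cdots<\beta_m$, so that $\bigoplus_i \omega^{\beta_i}\cdot c_i$ \emph{is already} the strict form --- no further merging of coefficients occurs, and the outer coefficients stay bounded by $k$ (or $(n+1)k$ in the $q_m=0$ case). Without this distinctness check, there is nothing preventing coefficients from piling up across the $l$-fold $\bigoplus$.

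A second, smaller issue: your claimed intermediate bound in step~(3), that every coefficient and exponent-norm in $D_n(\omega^\beta)$ is at most $(k+1)n$, is too optimistic. When $q_i-q_{i+1}=1$ the term $\omega^{q_i-1}\cdot(q_i+1)n$ merges with $\omega^{q_{i+1}}\cdot c_{i+1}$ inside $\beta_i$, yielding a coefficient $c_{i+1}+(q_i+1)n\le k+(k+1)n$. So the sub-claim bound really is $k+(k+1)n$, not $(k+1)n$; the extra $k$ is needed there and again for the outer sum, giving the full $2k+(k+1)n$.
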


\begin{proof}
	We first prove that if $\alpha = \omega^{\beta}$ for some $\beta \in \CNF(\omega^{\omega})$ and $\alpha' \in \partial_n(\alpha)$ then
	$\alpha'$ is $k + (k+1)n$-lean. Notice that if $\alpha = \omega^{\beta}$ then $\partial_n(\alpha) = \{D_n(\alpha)\}$.
	
	The claim is clearly true when $\alpha = k$ for some $k \in \nn$ or $\alpha = \omega^{\omega^d}$ for some $d \ge 0$. Suppose 
	$\alpha = \omega^{\omega^{p_1} + \dots + \omega^{p_l}}$.
	Let $\beta := \omega^{p_1} + \dots + \omega^{p_l}$. Further let 
	$\beta$ in strict form be $\sum_{i=1}^m \omega^{q_i} \cdot c_i$, where $q_1 > q_2 > \dots > q_m$.
	
	We distinguish between two cases: Suppose $q_m \neq 0$. Therefore $q_i \neq 0$ for any $i$. In this case, notice that if $\alpha' \in \partial_n(\alpha)$ then
	\begin{align*}
	\alpha' &= \bigoplus_{i=1}^m \left(D_n(\omega^{\omega^{q_i}}) \cdot  c_i \otimes \omega^{\omega^{q_i} \cdot (c_i-1)} \otimes \bigotimes_{j \neq i} \omega^{\omega^{q_j} \cdot c_j}\right)\\
	&= \bigoplus_{i=1}^m \left(\omega^{\beta_i} \cdot c_i\right)
	\end{align*}
	
	where $$\beta_i = \omega^{q_i-1} \cdot (q_i+1)n \oplus \omega^{q_i} \cdot (c_i - 1) \oplus \bigoplus_{j \neq i} \omega^{q_j} \cdot c_j$$
	
	Notice that the coefficients of $\beta_i$ when written in strict form can be only one of the four possible choices: $(q_i+1)n, c_i-1, c_j$ and can also be $c_{i+1}+(q_i+1)n$ if $q_i - q_{i+1} = 1$. In either case notice that $\beta_i$ is $k+(k+1)n$-lean. Further since 
	$q_1 > q_2 > \dots > q_m$ it follows that $\beta_m > \beta_{m-1} > \dots > \beta_1$ and so $\bigoplus_{i=1}^m \left(\omega^{\beta_i} \cdot c_i\right)$ is 
	\emph{the strict form} of $\alpha'$. Since each $c_i \le k$ and each $\beta_i$ is $k+(k+1)n$-lean, it follows that $\alpha'$ is $k+(k+1)n$-lean as well.
	
	Suppose $q_m = 0$. In this case, if $\alpha' \in \partial_n(\alpha)$ then
	$$\alpha' = \left(\bigoplus_{i=1}^{m-1} \omega^{\beta_i} \cdot c_i\right) \oplus \left(\omega^{\beta_m} \cdot (n+1)c_i \right)$$
	
	where 
	\begin{align*}
		\beta_i &= \omega^{q_i-1} \cdot (q_i+1)n \oplus \omega^{q_i} \cdot (c_i - 1) \oplus \bigoplus_{j \neq i} \omega^{q_j} \cdot c_j \text{ (for } i \neq m)\\
		\beta_m &= \omega^{q_i} \cdot (c_i - 1) \oplus \bigoplus_{j \neq i} \omega^{q_j} \cdot c_j	
	\end{align*}
	
	Clearly $\beta_m$ is $k+(k+1)n$-lean. By the same argument given for the previous case, we can conclude that for $i \neq m, \beta_i$ is $k+(k+1)n$-lean. Since 
	$\beta_m > \beta_{m-1} > \dots > \beta_1$ it follows that $\left(\bigoplus_{i=1}^{m-1} \omega^{\beta_i} \cdot c_i\right) \oplus \left(\omega^{\beta_m} \cdot (n+1)c_i \right)$ is the strict form of $\alpha'$. Since $(n+1)c_i \le (n+1)k \le k+(k+1)n$ and since each $\beta_i$ is $k+(k+1)n$-lean, we conclude that $\alpha'$ is $k+(k+1)n$-lean.\\
	
	Now, we come to the general case. Suppose $\alpha = \sum_{i=1}^k \omega^{\beta_i}$. Notice that if $\alpha' \in \partial_n(\alpha)$ then 
	$\alpha' = D_n(\omega^{\beta_i}) \oplus \bigoplus_{j \neq i} \omega^{\beta_j}$ for some $i$. We just proved that $D_n(\omega^{\beta_i})$ is $k+(k+1)n$-lean. Since $\bigoplus_{j \neq i} \omega^{\beta_j}$ is $k$-lean, it follows that $\alpha'$ is $2k+(k+1)n$-lean.
	
\end{proof}

Recall that we took the control function $g$ to be a strictly increasing inflationary function.

\subsubsection*{Proof of theorem \ref{theorem:uppmaj}}

Let $h(x) := 4x \cdot g(x)$ and let $n > 0$. Notice that $h$ is a strictly increasing inflationary function as well.

We prove the theorem by induction on $\alpha$. The claim is clear for $\alpha = 0$. Suppose $\alpha > 0$. Let $\alpha$ be $k$-lean. We have that $k > 0$ and $M_{\alpha,g}(n) = 1 + M_{\alpha',g}(g(n))$ for some $\alpha' \in \partial_n (\alpha)$. 
By Lemma \ref{lemma:lean} we have that $\alpha'$ is $2k + (k+1)n$-lean. Since $n > 0$ we have that $2k + (k+1)n \le 4kn$ and hence $\alpha'$ is $4kn$-lean as well. By proposition \ref{prop:lessthan}, $\alpha' < \alpha$ and so we can apply the induction hypothesis on $\alpha'$. Hence, by induction hypothesis we have,
\begin{equation*}
	M_{\alpha,g}(n) = 1 + M_{\alpha',g}(g(n)) 
	\le 1 + h_{\alpha'}(4 (4kn) \cdot g(n))	
\end{equation*}

Since $g$ is strictly increasing (by assumption) and $h_{\alpha'}$ is strictly increasing (by proposition \ref{prop:funfactshier}) we have
\begin{equation*}
	1 + h_{\alpha'}(4 (4kn) \cdot g(n))	 \le 1 + h_{\alpha'}(4 (4kn) \cdot g(4kn))
\end{equation*}

By definition, $h(x) := 4x \cdot g(x)$ and so
\begin{equation*}
	1 + h_{\alpha'}(4 (4kn) \cdot g(4kn)) = 1 + h_{\alpha'}(h(4kn))
\end{equation*}

Since $\alpha'$ is $4kn$-lean and $\alpha' < \alpha$, by lemma
\ref{lemma:icalp} we have that $\alpha' \preceq_{4kn} P_{4kn}(\alpha)$. Hence by proposition \ref{prop:funfactspointwise}, $\alpha' \preceq_{h(4kn)} P_{4kn}(\alpha)$. Hence by point two of proposition 
\ref{prop:pointwise} we conclude that
\begin{equation*}
	1 + h_{\alpha'}(h(4kn)) \le 1 + h_{P_{4kn}(\alpha)}(h(4kn))
\end{equation*}

By definition of the Cichon hierarchy we have,
\begin{equation*}
	1 + h_{P_{4kn}(\alpha)}(h(4kn)) = h_{\alpha}(4kn)
\end{equation*}

Combining all the equations we now get that
\begin{equation*}
	M_{\alpha,g}(n) \le h_{\alpha}(4kn)
\end{equation*}

thus proving our theorem.

\section{Proofs for upper bound of minoring ordering} \label{section:appuppmin}

\subsection{Proof of lemma \ref{lemma:imporlemma}} \label{proof:imporlemma}

We need new notations for this lemma, which we introduce here.
For every $n \in \nn$, let $\textbf{n}_j$ denote the vector $(n,n,\dots,n) \in \nn^j$. Further, given $z = (z_1,z_2,\dots,z_d) \in \nn^d$ and numbers $1 \le i_1 < i_2 < \dots < i_k \le d$, let 
$$z_{i_1,i_2,\dots,i_k} = (z_{i_1},z_{i_2},\dots,z_{i_k})$$
\begin{multline*}
	z_{-i_1,-i_2,\dots,-i_k} = (z_1,\dots,z_{i_1-1},z_{i_1+1},\dots,
	z_{i_2-1},z_{i_2+1},\\\dots,z_{i_k-1},z_{i_k+1},\dots,z_d)		
\end{multline*}

By convention we let $z_{-1,\dots,-d} = ()$ where $()$ is the empty vector and we let $\nn^0$ denote the singleton set containing the empty vector.

Let $1 \le i_1 < i_2 < \dots < i_k \le d$ and $j_1,j_2,\dots,j_k$ be natural numbers. Given an arbitrary subset $Z$ of $\nn^d$, define
\begin{multline*}
	\proj^Z_{i_1,i_2,\dots,i_k}(j_1,j_2,\dots,j_k) := \{z_{-i_1,-i_2,\dots,-i_k} : 
	z \in Z, \\z_{i_1,\dots,i_k} = (j_1,\dots,j_k) \}	
\end{multline*}

Intuitively $\proj^Z_{i_1,i_2,\dots,i_k}(j_1,j_2,\dots,j_k)$ collects all elements from $Z$ whose $i_1^{th}$ value is $j_1$, $i_2^{th}$ value is $j_2$ and so on and then projects all these elements on the remaining co-ordinates.\\

We now set out to define the desired polynomial reflection. Let $X \in \pow(\nn^d) \setminus \emptyset$ and let $\comp(X) := \nn^d \setminus \uparrow X$. Notice that $\comp(X)$ is downward closed, (i.e),
$\downarrow \comp(X) = \comp(X)$. Also since $X \neq \emptyset$ we have the following very important fact, which we will extensively use:
\begin{equation} \label{eqn:extensive}
\downarrow \comp(X) \neq \nn^d
\end{equation}

For ease of understanding, we break the construction of the polynomial reflection into several steps: As a first step, for every sequence $i_1,i_2,\dots,i_k$ such that $1 \le i_1 < i_2 < \dots < i_k \le d$, we define an element $X_{i_1,\dots,i_k}$ of $\pow(\nn^k)$. In the second step, we show that each of these sets are finite. In the third step, for every $k$ such that $1 \le k \le d$, we define an element $X^k$ of $\pow(\nn^k)^{d \choose k}$ by using the sets defined in the first step.  In the fourth step, we define our reflection to be $\refl(X) = (X_1,\dots,X_d)$ and show that if $\refl(X) \lemaj_{A_d} \refl(Y)$ then $X \smy Y$. In the last step, we will show that $|\refl(X)|_{A_d} \le q(|X|_{\pow(\nn^d)})$ where $q(x) = (x+1)^d$.

\subsubsection*{First step}

Let $1 \le i \le d$. Define
\begin{equation*}
X_i := \{x_i : x \in \comp(X), \ \downarrow \proj^{\comp(X)}_{i}(x_i) = \nn^{d-1}\}
\end{equation*}

Note that following our notation, $x_i$ is the $i^{th}$ co-ordinate of $x$. Intuitively, $x_i \in X_i$ implies that if we pick all elements in $\comp(X)$ such that their $i^{th}$ co-ordinate is $x_i$ and project all these elements on all the other axes, the projection is `scattered everywhere' over $\nn^{d-1}$. Since with respect to the other co-ordinates it `looks like' $\nn^{d-1}$ we can discard all the other co-ordinates of $x$ and remember only the $i^{th}$ co-ordinate.

For the general case, let $i_1,i_2,\dots,i_k$ be a sequence such that $1 \le i_1 < i_2 < \dots < i_k \le d$. Define
\begin{align*}
X_{i_1,i_2,\dots,i_k} := &\{x_{i_1,i_2,\dots,i_k} : x \in \comp(X),\\ &\downarrow \proj^{\comp(X)}_{i_1,\dots,i_k}(x_{i_1,\dots,i_k}) = \nn^{d-k}, \text{ and for any } \\ &\text{ strict subsequence }
j_1,\dots,j_l \text{ of } i_1,\dots,i_k,  \\ &(x_{j_1,\dots,j_l}) \notin X_{j_1,\dots,j_l}\}
\end{align*}

The intuition behind this definition is the same as the previous one, except now we do not allow an element in the set if `some part of it' is already present in a `lower set'.

\subsubsection*{Second step}

We now prove that for every sequence $i_1,i_2,\dots,i_k$ such that $1 \le i_1 < i_2 < \dots < i_k \le d$, the set $X_{i_1,\dots,i_k}$ is a finite set. 
Before proving that we have the following lemmas:

\begin{lemma} \label{lemma:imporlemma1}
	Let $i_1,\dots,i_k$ be such that $1 \le i_1 < \dots < i_k \le d$. Then,
	\begin{equation*}
	\exists n \in \nn, \ \forall (m_1,\dots,m_k) \in X_{i_1,\dots,i_k}, \ \exists j \text{ such that } m_{j} < n
	\end{equation*}
\end{lemma}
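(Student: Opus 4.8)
The plan is to exploit the crucial fact \eqref{eqn:extensive}, namely that $\downarrow\comp(X)\neq\nn^d$. Since $\comp(X)$ is downward closed and a proper subset of $\nn^d$, there exists a point that is \emph{not} below any element of $\comp(X)$; in particular there is some $N\in\nn$ with $\mathbf{N}_d=(N,\dots,N)\notin\comp(X)$ (take $N$ to be the common value of a large enough diagonal point outside $\downarrow\comp(X)=\comp(X)$). I claim this very $N$ (or perhaps $N+1$, to be safe) witnesses the statement: for every tuple $(m_1,\dots,m_k)\in X_{i_1,\dots,i_k}$ there is a coordinate $j$ with $m_j<N$.

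The argument is by contradiction: suppose $(m_1,\dots,m_k)\in X_{i_1,\dots,i_k}$ with $m_j\ge N$ for all $j\in\{1,\dots,k\}$. By definition of $X_{i_1,\dots,i_k}$ there is an element $x\in\comp(X)$ with $x_{i_1,\dots,i_k}=(m_1,\dots,m_k)$ and moreover $\downarrow\proj^{\comp(X)}_{i_1,\dots,i_k}(m_1,\dots,m_k)=\nn^{d-k}$. The latter equality means: fixing the coordinates $i_1,\dots,i_k$ to the values $m_1,\dots,m_k$, the set of elements of $\comp(X)$ with those fixed coordinates projects onto all of $\nn^{d-k}$ on the remaining axes. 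Hence there is an element $z\in\comp(X)$ with $z_{i_1,\dots,i_k}=(m_1,\dots,m_k)$ and $z_{-i_1,\dots,-i_k}\ge\mathbf{N}_{d-k}$ (pick the tuple $(N,\dots,N)\in\nn^{d-k}$, which lies in the downward closure of the projection, so some $z$ dominates it on those coordinates). Then $z\ge_{\nn^d}\mathbf{N}_d$ coordinatewise: on the $i_1,\dots,i_k$ coordinates we have $z_{i_\ell}=m_\ell\ge N$, and on the remaining coordinates $z$ dominates $N$ by construction. But $\comp(X)$ is downward closed, so $\mathbf{N}_d\in\comp(X)=\downarrow\comp(X)$, contradicting the choice of $N$.

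The only slightly delicate point is making the phrase ``$\downarrow\proj^{\comp(X)}_{i_1,\dots,i_k}(m_1,\dots,m_k)=\nn^{d-k}$ gives us an element of $\comp(X)$ dominating $\mathbf{N}_{d-k}$ on the complementary coordinates'' fully precise: since $\mathbf{N}_{d-k}\in\nn^{d-k}=\downarrow\proj^{\comp(X)}_{i_1,\dots,i_k}(m_1,\dots,m_k)$, there is $w\in\proj^{\comp(X)}_{i_1,\dots,i_k}(m_1,\dots,m_k)$ with $w\ge_{\nn^{d-k}}\mathbf{N}_{d-k}$, and by definition of $\proj$ this $w$ arises as $z_{-i_1,\dots,-i_k}$ for some $z\in\comp(X)$ with $z_{i_1,\dots,i_k}=(m_1,\dots,m_k)$. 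Reassembling $z$ from its two coordinate blocks gives $z\ge\mathbf{N}_d$, and downward closure finishes the contradiction. I do not anticipate a genuine obstacle here — the whole content is unwinding the definitions of $\proj$ and $X_{i_1,\dots,i_k}$ and invoking downward closedness of $\comp(X)$ together with \eqref{eqn:extensive}; the ``for any strict subsequence'' clause in the definition of $X_{i_1,\dots,i_k}$ plays no role and can be ignored for this lemma.
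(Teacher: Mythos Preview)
Your proposal is correct and follows essentially the same argument as the paper: pick $N$ with $\mathbf{N}_d\notin\downarrow\comp(X)$ using \eqref{eqn:extensive}, then for any $(m_1,\dots,m_k)\in X_{i_1,\dots,i_k}$ with all $m_j\ge N$ use the projection condition to produce $z\in\comp(X)$ with $z\ge\mathbf{N}_d$, contradicting downward closedness. Your observation that the ``strict subsequence'' clause is irrelevant here is also correct.
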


\begin{proof}
	For ease of notation, assume (without loss of generality) that $i_1 = 1, i_2 = 2,\dots, i_k = k$. By equation (\ref{eqn:extensive}), $\downarrow \comp(X) \neq \nn^d$ and so there exists $n \in \nn$ such that 
	\begin{equation} \label{eqn:conseq1}
	\textbf{n}_d \notin \downarrow \comp(X)	
	\end{equation}
	
	We now claim that 
	\begin{equation} \label{eqn:claim1}
	\forall (m_1,\dots,m_k) \in X_{1,\dots,k}, \ \exists j \text{ such that } m_j < n 	
	\end{equation}
	
	Suppose (\ref{eqn:claim1}) is false. We have the following series of implications, all of which follow immediately from the definitions introduced so far:
	\begin{align*}
	(\ref{eqn:claim1}) \text{ is false } &\implies \exists (m_1,\dots,m_k) \in X_{1,\dots,k}  \text{ s.t. } (m_1,\dots,m_k) \ge \textbf{n}_k\\
	&\implies \downarrow \proj^{\comp(X)}_{1,\dots,k}(m_1,\dots,m_k) = \nn^{d-k}\\
	&\implies \exists (m_{k+1},\dots,m_d) \in \proj^{\comp(X)}_{1,\dots,k}(m_1,\dots,m_k) \\& \hspace{1cm}\text{ such that } (m_{k+1},\dots,m_d) \ge \textbf{n}_{d-k}\\
	&\implies (m_1,\dots,m_k,m_{k+1},\dots,m_d) \in \comp(X)
	\end{align*}

	Since $(m_1,\dots,m_k) \ge \textbf{n}_k$ and $(m_{k+1},\dots,m_d) \ge \textbf{n}_{d-k}$ it follows that $(m_1,\dots,m_d) \ge \textbf{n}_d$. Since $(m_1,\dots,m_d) \in \comp(X)$ it follows that $\textbf{n}_d \in \downarrow \comp(X)$ which contradicts (\ref{eqn:conseq1}). Therefore (\ref{eqn:claim1}) is true, which proves the lemma.
\end{proof}

Notice that lemma \ref{lemma:imporlemma1} immediately implies that $X_i$ is finite for every $i \in \{1,\dots,d\}$. However to prove the same for the general case requires another lemma which we prove next.

\begin{lemma} \label{lemma:imporlemma2}
	Let $1 \le i_1 < \dots < i_k \le d$ and $1 \le j_1 < \dots < j_l \le k$ such that $1 \le l < k$. Further let $m_1,\dots,m_l$ be natural numbers. Then
	\begin{multline*}
	\exists n \in \nn, \ \forall (m_{l+1},\dots,m_k) \in \proj^{X_{i_1,\dots,i_k}}_{j_1,\dots,j_l}(m_1,\dots,m_l), \\ \exists q \in \{l+1,\dots,k\} \text{ s.t. } m_q < n
	\end{multline*}
\end{lemma}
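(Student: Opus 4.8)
The statement generalises Lemma \ref{lemma:imporlemma1}: there Lemma \ref{lemma:imporlemma1} handled the case $l=0$ (no fixed coordinates, the set $X_{i_1,\dots,i_k}$ itself), and now we must handle the set obtained by fixing $l$ of the $k$ coordinates of elements of $X_{i_1,\dots,i_k}$ to prescribed values $m_1,\dots,m_l$ and projecting onto the remaining $k-l$ coordinates. The plan is to mimic the argument of Lemma \ref{lemma:imporlemma1} almost verbatim, the only new wrinkle being that fixing the first $l$ coordinates forces us to choose the witness $n$ large enough to dominate $m_1,\dots,m_l$ as well as to escape $\downarrow\comp(X)$, and to invoke the definition of $X_{i_1,\dots,i_k}$ (specifically the clause forbidding strict subsequences from lying in a lower set).

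First I would, as in Lemma \ref{lemma:imporlemma1}, assume WLOG $i_1=1,\dots,i_k=k$ and $j_1=1,\dots,j_l=l$. Using equation (\ref{eqn:extensive}), fix $N$ with $\textbf{N}_d\notin\downarrow\comp(X)$, and then set $n:=\max(N,m_1+1,\dots,m_l+1)$. Now suppose for contradiction that some $(m_{l+1},\dots,m_k)\in\proj^{X_{1,\dots,k}}_{1,\dots,l}(m_1,\dots,m_l)$ has $m_q\ge n$ for all $q\in\{l+1,\dots,k\}$. By definition of the projection there is $(m_1,\dots,m_l,m_{l+1},\dots,m_k)\in X_{1,\dots,k}$. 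Since $X_{1,\dots,k}\subseteq\{x_{1,\dots,k} : x\in\comp(X),\ \downarrow\proj^{\comp(X)}_{1,\dots,k}(x_{1,\dots,k})=\nn^{d-k},\dots\}$, we get $\downarrow\proj^{\comp(X)}_{1,\dots,k}(m_1,\dots,m_k)=\nn^{d-k}$, hence there exists $(m_{k+1},\dots,m_d)\ge\textbf{n}_{d-k}$ in that projection, so $(m_1,\dots,m_d)\in\comp(X)$. Now $m_1,\dots,m_l\ge$ their values (trivially) but we need $(m_1,\dots,m_d)\ge\textbf{n}_d$, which may fail for the first $l$ coordinates — this is the only genuinely new point. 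To fix it, instead of using $\textbf{n}_d$, note that because $\comp(X)$ is downward closed, from $(m_1,\dots,m_d)\in\comp(X)$ with $m_q\ge n> N$ for $q>l$ and $m_1,\dots,m_l$ arbitrary we cannot directly push down to $\textbf{N}_d$. So the cleaner route: pick $N$ instead so that $\textbf{N}_d\notin\downarrow\comp(X)$ and observe that since $\comp(X)=\downarrow\comp(X)$, if $(m_1,\dots,m_d)\in\comp(X)$ then the vector $(\max(m_1,N),\dots)$ need not be in it — so we must argue at the level of the element of $\comp(X)$ that \emph{witnesses} membership in $X_{1,\dots,k}$ and use that its first $l$ coordinates are exactly $m_1,\dots,m_l<n$, which is consistent, and then only the coordinates $\ge n$ are the ones we inflate; concretely $(m_1,\dots,m_d)\ge(\,m_1,\dots,m_l,\textbf{N}_{d-l}\,)$, and since $\textbf{N}_d\notin\downarrow\comp(X)$ and this vector is $\ge\textbf{N}_d$ only if $m_1,\dots,m_l\ge N$, which we do not have — so the correct choice is $\textbf{N}_{d}$ with $N$ chosen so that already the vector which is $N$ on coordinates $l+1,\dots,d$ and $0$ on the first $l$ is outside $\downarrow\comp(X)$; such $N$ exists because $\downarrow\comp(X)\ne\nn^d$ and $\comp(X)$ is downward closed, so its complement $\uparrow X$ is nonempty upward closed, and any element of $\uparrow X$ gives an $N$ (take $N$ the max coordinate of some element of $X$). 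With that $N$, $(m_1,\dots,m_d)\in\comp(X)$ and $(m_1,\dots,m_d)\ge(0,\dots,0,\textbf N_{d-l})$ would force $(0,\dots,0,\textbf N_{d-l})\in\downarrow\comp(X)=\comp(X)$, contradiction. Hence no such $(m_{l+1},\dots,m_k)$ exists and the lemma holds.

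The main obstacle is precisely bookkeeping the role of the \emph{fixed} coordinates $m_1,\dots,m_l$: one cannot simply reuse $\textbf{n}_d$ as the escaping vector because those coordinates are prescribed and may be small, so the clean fix is to escape $\downarrow\comp(X)$ along a vector that is $0$ (or anything small) on the fixed coordinates and large on the free ones — legitimate since $\uparrow X$ is a nonempty upward-closed set and hence contains a vector all of whose coordinates are at most $\max\{|x|_{\nn^d}:x\in X\}+1$, giving the desired $n$. Everything else is a direct transcription of the chain of implications in the proof of Lemma \ref{lemma:imporlemma1}, together with unfolding the definition of $X_{i_1,\dots,i_k}$ to extract the property $\downarrow\proj^{\comp(X)}_{i_1,\dots,i_k}(x_{i_1,\dots,i_k})=\nn^{d-k}$.
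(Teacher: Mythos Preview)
Your proposed ``clean fix'' does not work: the escaping vector $(0,\dots,0,\mathbf{N}_{d-l})$ need not lie outside $\comp(X)$. Having $\uparrow X$ nonempty and upward closed only guarantees $\mathbf{N}_d\in\uparrow X$ for large $N$; it says nothing about vectors with zeros in the fixed positions. Concretely, take $d=3$, $X=\{(1,3,0),(3,1,0)\}$, $i_1=1$, $i_2=2$, $j_1=1$, and $m_1=2$. Then $(2,2)\in X_{1,2}$, so the projection $\proj^{X_{1,2}}_{1}(2)$ is nonempty, yet $(0,N,N)\in\comp(X)$ for every $N$ because no element of $X$ has first coordinate $0$. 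Thus the contradiction you aim for in the last line never materialises.

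What you are missing is precisely the clause you mention in your opening paragraph but never deploy. If the projection is nonempty, pick any $(m_1,\dots,m_k)\in X_{1,\dots,k}$; the strict-subsequence condition in the definition of $X_{1,\dots,k}$ forces $(m_1,\dots,m_l)\notin X_{1,\dots,l}$, and since the other requirements for membership in $X_{1,\dots,l}$ are inherited, the failure must be that $\downarrow\proj^{\comp(X)}_{1,\dots,l}(m_1,\dots,m_l)\neq\nn^{d-l}$. This is exactly the ``local'' escape you need: it yields $n$ with $\mathbf{n}_{d-l}\notin\downarrow\proj^{\comp(X)}_{1,\dots,l}(m_1,\dots,m_l)$, after which the chain of implications from Lemma~\ref{lemma:imporlemma1} goes through verbatim (the constructed $(m_1,\dots,m_d)\in\comp(X)$ has its last $d-l$ coordinates $\ge n$, so $(m_{l+1},\dots,m_d)$ lands in $\proj^{\comp(X)}_{1,\dots,l}(m_1,\dots,m_l)$ and dominates $\mathbf{n}_{d-l}$, contradiction). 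In the example above this gives $n=3$, and indeed $\proj^{X_{1,2}}_{1}(2)=\{1,2\}$.
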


\begin{proof}
	The proof of this lemma is very similar to the proof of the previous lemma, but for completeness sake we present the proof here. 
	
	To make notation easier in the proof, (without loss of generality) we assume that $i_1 = 1, i_2 = 2, \dots, i_k = k$ and $j_1 = 1, j_2 = 2, \dots, j_l = l$. If $\proj^{X_{1,\dots,k}}_{1,\dots,l}(m_1,\dots,m_l)$ is empty then there is nothing to prove. Hence in the sequel we assume that $\proj^{X_{1,\dots,k}}_{1,\dots,l}(m_1,\dots,m_l)$ is non-empty.
	Let $(x_{l+1},\dots,x_k) \in \proj^{X_{1,\dots,k}}_{1,\dots,l}(m_1,\dots,m_l)$. By definition we have,
	\begin{align*}
	(x_{l+1},\dots,x_k) \in \proj^{X_{1,\dots,k}}_{1,\dots,l}(m_1,\dots,m_l)\\ \implies (m_1,\dots,m_l,x_{l+1},\dots,x_k) &\in X_{1,\dots,k}\\
	\implies \downarrow \proj^{\comp(X)}_{1,\dots,l}(m_1,\dots,m_l) &\neq \nn^{d-l}\\
	\implies \exists n \in \nn \text{ such that }  \textbf{n}_{d-l} \notin \downarrow &\proj^{\comp(X)}_{1,\dots,l}(m_1,\dots,m_l)	
	\end{align*}

	Hence there exists $n \in \nn$ such that 
	\begin{equation} \label{eqn:conseq2}
	\textbf{n}_{d-l} \notin \downarrow \proj^{\comp(X)}_{1,\dots,l}(m_1,\dots,m_l)
	\end{equation}
	
	We now claim that 
	\begin{multline} \label{eqn:claim2}
	\forall (m_{l+1},\dots,m_k) \in \proj^{X_{1,\dots,k}}_{1,\dots,l}(m_1,\dots,m_l), \\ \exists q \in \{l+1,\dots,k\} \text{ such that } m_q < n
	\end{multline}
	
	Suppose (\ref{eqn:claim2}) is false. We make the following series of implications, all of which follow immediately from definitions introduced so far:
	\begin{align*}
	(\ref{eqn:claim2}) \text{ is false } &\implies \exists (m_{l+1},\dots,m_k) \in \proj^{X_{1,\dots,k}}_{1,\dots,l}(m_1,\dots,m_l) \\& \hspace{1cm} \text{ such that } (m_{l+1},\dots,m_k) \ge \textbf{n}_{k-l}\\
	&\implies (m_1,\dots,m_l,m_{l+1},\dots,m_k) \in X_{1,\dots,k}\\
	&\implies \downarrow \proj^{\comp(X)}_{1,\dots,k}(m_1,\dots,m_k) = \nn^{d-k}\\
	&\implies \exists (m_{k+1},\dots,m_d) \in \proj^{\comp(X)}_{1,\dots,k}(m_1,\dots,m_k) \\ & \hspace{1cm} \text{ such that } (m_{k+1},\dots,m_d) \ge \textbf{n}_{d-k}\\
	&\implies (m_1,\dots,m_k,m_{k+1},\dots,m_d) \in \comp(X)\\
	&\implies (m_{l+1},\dots,m_d) \in \proj^{\comp(X)}_{1,\dots,l}(m_1,\dots,m_l)	
	\end{align*}

	Since $(m_{l+1},\dots,m_k) \ge \textbf{n}_{k-l}$ and $(m_{k+1},\dots,m_d) \ge \textbf{n}_{d-k}$ it follows that $(m_{l+1},\dots,m_d) \ge \textbf{n}_{d-l}$. Since $(m_{l+1},\dots,m_d) \in \proj^{\comp(X)}_{1,\dots,l}(m_1,\dots,m_l)$ it follows that $\textbf{n}_{d-k} \in$ \\
	$\downarrow \proj^{\comp(X)}_{1,\dots,l}(m_1,\dots,m_l)$ which contradicts (\ref{eqn:conseq2}). Therefore (\ref{eqn:claim2}) is true, which proves the lemma.
\end{proof}

A consequence of \ref{lemma:imporlemma2} is the following:

\begin{lemma} \label{lemma:imporlemma3}
	Let $1 \le i_1 < \dots < i_k \le d$ and $1 \le j_1 < \dots < j_l \le k$ such that $1 \le l < k$. Further let $m_1,\dots,m_l$ be natural numbers. Then $\proj^{X_{i_1,\dots,i_k}}_{j_1,\dots,j_l}(m_1,\dots,m_l)$ is finite.
\end{lemma}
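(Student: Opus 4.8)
The plan is to prove Lemma~\ref{lemma:imporlemma3} by induction on $k-l$, the number of coordinates that remain free after the projection, feeding Lemma~\ref{lemma:imporlemma2} in at each level to pin down one more coordinate. As in the proofs of Lemmas~\ref{lemma:imporlemma1} and~\ref{lemma:imporlemma2}, I would first reduce (without loss of generality) to the case $i_1=1,\dots,i_k=k$ and $j_1=1,\dots,j_l=l$, so that the object of interest is
\[
P \;:=\; \proj^{X_{1,\dots,k}}_{1,\dots,l}(m_1,\dots,m_l) \;=\; \{(x_{l+1},\dots,x_k)\in\nn^{k-l} : (m_1,\dots,m_l,x_{l+1},\dots,x_k)\in X_{1,\dots,k}\}.
\]

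For the base case $k-l=1$, Lemma~\ref{lemma:imporlemma2} gives an $n\in\nn$ such that every element of $P$ has some coordinate strictly below $n$; since $P\subseteq\nn$ here, this simply says $P\subseteq\{0,\dots,n-1\}$, which is finite. For the inductive step $k-l\ge 2$, I would again invoke Lemma~\ref{lemma:imporlemma2} to obtain an $n$ with the property that for every $(x_{l+1},\dots,x_k)\in P$ there is an index $q\in\{l+1,\dots,k\}$ with $x_q<n$. Consequently $P$ is the union, over the finitely many pairs $(q,v)$ with $l+1\le q\le k$ and $0\le v<n$, of the sets $P_{q,v}:=\{z\in P : z_q=v\}$ (indexing $z=(x_{l+1},\dots,x_k)$ by its original coordinate positions). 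The key point is that deleting the now-constant $q$-th coordinate is a bijection between $P_{q,v}$ and $\proj^{X_{1,\dots,k}}_{1,\dots,l,q}(m_1,\dots,m_l,v)$; the latter is a projection of exactly the same form, with free-coordinate count $k-(l+1)=k-l-1\ge 1$ and with $1\le l+1<k$, so it is finite by the induction hypothesis. Hence $P$ is a finite union of finite sets, and therefore finite.

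I do not expect a genuine obstacle here: the only external ingredient is Lemma~\ref{lemma:imporlemma2}, which is precisely what bounds the ``escaping'' coordinate at each level, and the recursion terminates because fixing one more coordinate of $X_{1,\dots,k}$ turns one of these projections into another of the same shape but with strictly fewer free coordinates. The step that requires the most care is purely notational: verifying that the index sequence $(1,\dots,l,q)$ is increasing and lies in $\{1,\dots,k\}$, that the coordinate-deletion map is the claimed bijection, and that the smaller instance indeed satisfies the hypothesis $1\le l+1<k$ of the lemma --- all of which hold exactly because we are in the case $k-l\ge 2$.
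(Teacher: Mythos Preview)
Your proposal is correct and follows essentially the same route as the paper: both arguments reduce to $i_r=r$, $j_s=s$, invoke Lemma~\ref{lemma:imporlemma2} to bound one coordinate of the projection below some $n$, and then recurse on the projection with one more coordinate fixed (the paper phrases this as backward induction on $l$ with a contradiction argument, you phrase it as forward induction on $k-l$ with a direct finite-union argument, but these are the same induction). The only cosmetic difference is that the paper, after finding an infinite $P^a_b$, relabels so that $a=1$ and lands in $\proj^{X_{1,\dots,k}}_{1,\dots,l,l+1}$, whereas you keep the general $q$ and land in $\proj^{X_{1,\dots,k}}_{1,\dots,l,q}$; both are valid instances of the same induction hypothesis.
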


\begin{proof}
	Once again, to make notation easier in the proof, we assume that $i_1 = 1, i_2 = 2, \dots, i_k = k$ and $j_1 = 1, j_2 = 2, \dots, j_l = l$. 
	
	We fix a $k$ and prove the lemma by backward induction on $l$. Suppose $l = k-1$. Then $\proj^{X_{1,\dots,k}}_{1,\dots,l}(m_1,\dots,m_l)$ is a subset of $\nn$. The statement of lemma \ref{lemma:imporlemma2} immediately implies that $\proj^{X_{1,\dots,k}}_{1,\dots,l}(m_1,\dots,m_l)$ is finite.
	
	Suppose $l < k-1$ and suppose $\proj^{X_{1,\dots,k}}_{1,\dots,l}(m_1,\dots,m_l)$ is infinite. Applying lemma \ref{lemma:imporlemma2} we have,
	\begin{multline} \label{eqn:claim3}
	\exists n \in \nn, \ \forall (x_1,\dots,x_{k-l}) \in \proj^{X_{1,\dots,k}}_{1,\dots,l}(m_1,\dots,m_l), \\ \exists q \in \{1,\dots,k-l\} \text{ s.t. } x_q < n
	\end{multline}
	
	We now partition $\proj^{X_{1,\dots,k}}_{1,\dots,l}(m_1,\dots,m_l)$ as follows: For each $a \in \{1,\dots,k-l\}$ and $b \in \{0,\dots,n-1\}$ we set 
	\begin{equation*}
	P^a_b := \{x : x \in \proj^{X_{1,\dots,k}}_{1,\dots,l}(m_1,\dots,m_l), \ x_a = b\}
	\end{equation*}
	
	Equation (\ref{eqn:claim3}) guarantees us that  \[\proj^{X_{1,\dots,k}}_{1,\dots,l}(m_1,\dots,m_l) = \bigcup_{1 \le a \le k-l, \ 0 \le b \le n-1} P^a_b\] Since we assumed that $\proj^{X_{1,\dots,k}}_{1,\dots,l}(m_1,\dots,m_l)$ is infinite, it follows that at least one $P^a_b$ is infinite. Without loss of generality assume that $a = 1$. 
	
	Notice that if $x,y \in P^1_b$ then $x_1 = y_1 = b$. Since $P^1_b$ is infinite,
	it is then easy to see that $\proj^{P^1_b}_{1}(b)$ is also infinite. It is also easy to see that $\proj^{P^1_b}_{1}(b)$ is the same as $\proj^{X_{1,\dots,k}}_{1,\dots,l,l+1}(m_1,\dots,m_l,b)$. By induction hypothesis,\\ $\proj^{X_{1,\dots,k}}_{1,\dots,l,l+1}(m_1,\dots,m_l,b)$ is finite which leads to a contradiction.
\end{proof}

Using lemma \ref{lemma:imporlemma3} we now prove that 
each set $X_{i_1,\dots,i_k}$ is finite.

\begin{theorem} \label{theorem:importheorem}
	Let $i_1,\dots,i_k$ be such that $1 \le i_1 < \dots < i_k \le d$. Then $X_{i_1,\dots,i_k}$ is a finite set.
\end{theorem}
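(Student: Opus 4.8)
The plan is to combine Lemma~\ref{lemma:imporlemma1}, which bounds one coordinate of every element of $X_{i_1,\dots,i_k}$ below a common threshold, with Lemma~\ref{lemma:imporlemma3}, which says every fiber of a non-trivial projection of $X_{i_1,\dots,i_k}$ is finite; together these carve $X_{i_1,\dots,i_k}$ into finitely many finite pieces. As in the preceding two lemmas, I would assume without loss of generality that $i_1=1,\,i_2=2,\,\dots,\,i_k=k$, so that $X_{1,\dots,k}\subseteq\nn^k$.

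First I would dispose of the base case $k=1$. Here Lemma~\ref{lemma:imporlemma1} yields an $n\in\nn$ such that every $(m_1)\in X_1$ satisfies $m_1<n$, hence $X_1\subseteq\{0,1,\dots,n-1\}$ is finite. This case has to be handled on its own because Lemma~\ref{lemma:imporlemma3} requires $1\le l<k$ and is therefore vacuous for $k=1$.

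Now suppose $k\ge 2$. By Lemma~\ref{lemma:imporlemma1} fix $n\in\nn$ such that every $(m_1,\dots,m_k)\in X_{1,\dots,k}$ has $m_a<n$ for some $a\in\{1,\dots,k\}$. Hence
\[
X_{1,\dots,k}=\bigcup_{1\le a\le k,\;0\le b\le n-1}R^a_b,\qquad R^a_b:=\{x\in X_{1,\dots,k}:x_a=b\}.
\]
Since this is a union of $kn$ sets, it suffices to prove each $R^a_b$ finite. The map dropping the $a$-th coordinate, $x\mapsto x_{-a}$, is a bijection from $R^a_b$ onto $\proj^{X_{1,\dots,k}}_{a}(b)$, because on $R^a_b$ the $a$-th coordinate is the fixed value $b$. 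Applying Lemma~\ref{lemma:imporlemma3} with the one-element subsequence $j_1=a$ (legitimate since $1\le 1<k$) shows $\proj^{X_{1,\dots,k}}_{a}(b)$ is finite, so $R^a_b$ is finite, and therefore $X_{1,\dots,k}$ is finite.

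The argument is short because the real work has already been done in Lemmas~\ref{lemma:imporlemma1}--\ref{lemma:imporlemma3}; the only points that need care are the case split on $k$ (the projection lemma being vacuous for $k=1$) and the identification of a fixed-coordinate slice of $X_{1,\dots,k}$ with a fiber of a projection. I do not expect a genuine obstacle at this stage.
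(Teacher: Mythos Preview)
Your proof is correct and follows essentially the same approach as the paper: separate the case $k=1$ (handled directly by Lemma~\ref{lemma:imporlemma1}) from $k\ge 2$, where Lemma~\ref{lemma:imporlemma1} covers $X_{1,\dots,k}$ by finitely many fixed-coordinate slices and Lemma~\ref{lemma:imporlemma3} shows each slice is finite. The only cosmetic difference is that the paper argues by contradiction (assuming $X_{1,\dots,k}$ infinite and deriving an infinite slice) whereas you argue directly.
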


\begin{proof}
	Once again, for ease of notation we assume that $i_1 = 1,\dots, i_k = k$. 
	
	Suppose $k = 1$, i.e., we only have one number $i_1 = 1$. By definition $X_1$ is a subset of $\nn$. Lemma \ref{lemma:imporlemma1} then immediately implies that $X_1$ is a finite set.
	
	Suppose $k > 1$ and suppose $X_{1,\dots,k}$ is infinite. Applying lemma \ref{lemma:imporlemma1} we have 
	\begin{equation} \label{eqn:claim4}
	\exists n \in \nn, \ \forall (m_1,\dots,m_k) \in X_{1,\dots,k}, \ \exists j \text{ such that } m_j < n
	\end{equation}
	We now partition $X_{1,\dots,k}$ as follows: For each $a \in \{1,\dots,k\}$ and $b \in \{0,\dots,n-1\}$ we set
	\begin{equation*}
	P^a_b := \{(m_1,\dots,m_k): (m_1,\dots,m_k) \in X_{1,\dots,k}, \ m_a = b\}
	\end{equation*}
	
	Equation (\ref{eqn:claim4}) guarantees us that \[X_{1,\dots,k} = \bigcup_{1 \le a \le k, 0 \le b \le n-1} P^a_b\] Since we assumed that $X_{1,\dots,k}$ is infinite it follows that at least one $P^a_b$ is infinite. Without loss of generality assume that $a = 1$.
	
	Notice that if $x,y \in P^1_b$ then $x_1 = y_1 = b$. Since $P^1_b$ is infinite, it is then easy to see that $\proj^{P^1_b}_1(b)$ is infinite. It is also easy to see that $\proj^{P^1_b}_1(b)$ is the same as $\proj^{X_{1,\dots,k}}_{1}(b)$. By lemma \ref{lemma:imporlemma3}, $\proj^{X_{1,\dots,k}}_{1}(b)$ is finite which leads to a contradiction.
\end{proof}

\subsubsection*{Third step}

We now combine the sets defined in the first step to get sets $X_1,\dots,X_d$. To do this, first notice that if we want to define a tuple of size ${d \choose k}$, we can choose to index the positions by 
sequences of the form $i_1,\dots,i_k$ where $1 \le i_1 < i_2 < \dots < i_k \le d$. We adopt this indexing method for the rest of the section.

We now define $X_k$ as an element in $\pow(\nn^k)^{d \choose k}$ where the element in the position indexed by $i_1,\dots,i_k$ is the set
$X_{i_1,\dots,i_k}$.

By theorem \ref{theorem:importheorem}, each $X_{i_1,i_2,\dots,i_k}$ is a finite subset of $\nn^k$ and so each $X^k$ is an element of $\pow(\nn^k)^{d \choose k}$. 

\subsubsection*{Fourth step}

Define $\refl(X) = (X_1,\dots,X_d)$. It is clear that $\refl(X)$ is a function from $\pow(\nn^d)$ to 
$A_d$. We now show that $\refl(X) \lemaj_{A_d} \refl(Y)$ then $X \smy Y$.

\begin{lemma}
	\begin{equation*}
	\refl(X) \lemaj_{A_d} \refl(Y) \implies X \smy Y
	\end{equation*}
\end{lemma}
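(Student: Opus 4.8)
The plan is to reduce the statement, via Proposition~\ref{prop:funfacts}, to the equivalent inequality $\comp(X) \maj \comp(Y)$, and then to prove that by a ``lifting'' argument built on the two design features of the sets $X_{i_1,\dots,i_k}$: that (after projecting onto the chosen coordinates) they cover $\comp(X)$, and that membership in $Y_{i_1,\dots,i_k}$ forces a cofinal fibre inside $\comp(Y)$.

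First I would invoke Proposition~\ref{prop:funfacts}: $X \smy Y$ holds iff $\nn^d \setminus \uparrow X \maj \nn^d \setminus \uparrow Y$, i.e.\ iff $\comp(X) \maj \comp(Y)$, so it suffices to show that every $z \in \comp(X)$ is dominated in $\le_{\nn^d}$ by some element of $\comp(Y)$. Next I would prove a covering lemma: for each $z \in \comp(X)$ there is an index sequence $\sigma = (i_1 < \dots < i_k)$ with $z_{i_1,\dots,i_k} \in X_{i_1,\dots,i_k}$. This falls out of the definition by inspecting the full sequence $(1,\dots,d)$: the clauses ``$z \in \comp(X)$'' and ``$\downarrow \proj^{\comp(X)}_{1,\dots,d}(z) = \nn^{0}$'' hold automatically (since $\proj^{\comp(X)}_{1,\dots,d}(z)$ is the singleton containing the empty vector), so either $z \in X_{1,\dots,d}$, or else the only remaining clause fails and some strict subsequence $\tau$ of $(1,\dots,d)$ already satisfies $z_\tau \in X_\tau$; in either case a suitable $\sigma$ is produced.

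Then I would unfold the hypothesis: since $A_d = \prod_{i=1}^{d} \pow(\nn^i)^{d \choose i}$ is a cartesian product of majoring orderings, $\refl(X) \lemaj_{A_d} \refl(Y)$ says precisely that $X_{i_1,\dots,i_k} \maj_{\pow(\nn^k)} Y_{i_1,\dots,i_k}$ for every index sequence. Fixing $z \in \comp(X)$ and a covering sequence $\sigma = (i_1 < \dots < i_k)$, I pick $w^* \in Y_{i_1,\dots,i_k}$ with $z_{i_1,\dots,i_k} \le_{\nn^{k}} w^*$. Membership $w^* \in Y_{i_1,\dots,i_k}$ gives, straight from the definition, $\downarrow \proj^{\comp(Y)}_{i_1,\dots,i_k}(w^*) = \nn^{d-k}$, so the complementary block $z_{-i_1,\dots,-i_k}$ lies below some $u \in \proj^{\comp(Y)}_{i_1,\dots,i_k}(w^*)$; by the definition of $\proj$, $u = w'_{-i_1,\dots,-i_k}$ for some $w' \in \comp(Y)$ with $w'_{i_1,\dots,i_k} = w^*$. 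Comparing the two coordinate blocks shows $z \le_{\nn^d} w'$, and since $w' \in \comp(Y)$ this establishes $\comp(X) \maj \comp(Y)$, hence $X \smy Y$.

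I do not expect a genuine obstacle: the covering lemma is the conceptual crux, but it is essentially a one-line consequence of how the $X_{i_1,\dots,i_k}$ were defined, and the real difficulty of the construction --- proving those sets finite --- has already been discharged in Theorem~\ref{theorem:importheorem}. The only thing needing attention is the coordinate bookkeeping: keeping $z_{i_1,\dots,i_k}$ and $z_{-i_1,\dots,-i_k}$ straight throughout, and the degenerate case $k = d$, where $\nn^{d-k} = \nn^0$ is a singleton and the complementary block is empty, so the lift proceeds along the $\sigma$-coordinates alone.
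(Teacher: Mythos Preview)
Your proposal is correct and matches the paper's own proof almost line for line: both reduce to $\comp(X)\maj\comp(Y)$ via Proposition~\ref{prop:funfacts}, locate an index sequence $\sigma$ with $z_\sigma\in X_\sigma$, use the hypothesis to find a dominating $w^*\in Y_\sigma$, and then exploit $\downarrow\proj^{\comp(Y)}_\sigma(w^*)=\nn^{d-k}$ to lift the remaining coordinates. The only cosmetic difference is how the covering sequence is found: you argue top-down from the full sequence $(1,\dots,d)$, while the paper picks the least $k$ for which the projection condition holds and observes that minimality forces the strict-subsequence clause; these are two phrasings of the same finite descent.
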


\begin{proof}
	Let $X,Y$ be such that $\refl(X) \lemaj_{A_d} \refl(Y)$. To prove that $X \smy Y$, by proposition \ref{prop:funfacts} it is enough to prove that 
	$\comp(X) \maj \comp(Y)$. This is what we will prove in the sequel.
	
	To prove that $\comp(X) \maj \comp(Y)$, we have to show that if $x \in \comp(X)$ then there exists $y \in \comp(Y)$ such that $x \le_{\nn^d} y$. In the sequel we fix a $x \in \comp(X)$.
	
	Let $k$ be the least integer such that there exists indices $i_1,\dots,i_k$ such that $\downarrow \proj^{\comp(X)}_{i_1,\dots,i_k}(x_{i_1,\dots,i_k}) = \nn^{d-k}$ and
	for every strict subsequence $j_1,\dots,j_l$ of $i_1,\dots,i_k$ we have \\$\downarrow \proj^{\comp(X)}_{j_1,\dots,j_l}(x_{j_1,\dots,j_l}) \neq \nn^{d-l}$. (Notice that such a $k$ always exists since $\downarrow \proj^{\comp(X)}_{1,\dots,d}(x_{1,\dots,d}) = \nn^0$). Without loss of generality assume that $i_1 = 1, i_2 = 2,\dots, i_k = k$. By definition, it is easily seen that $(x_1,\dots,x_k) \in X_{1,\dots,k}$.
	
	Let $n \in \nn$ be such that $\textbf{n}_{d-k} \ge (x_{k+1},\dots,x_d)$. We now have the following series of implications, each of which can be easily seen to be true:
	\begin{align*}
	(x_1,\dots,x_k) \in X_{1,\dots,k} \text{ and } \refl(X) \lemaj_{A_d} \refl(Y) \\ \implies \exists (y_1,\dots,y_k) \in Y_{1,\dots,k}  \text{ such that } \\ (x_1,\dots,x_k) \le (y_1,\dots,y_k)\\
	\implies \downarrow \proj^{\comp(Y)}_{1,\dots,k}(y_1,\dots,y_k) = \nn^{d-k}&\\
	\implies \exists (y_{k+1},\dots,y_d) \in \proj^{\comp(Y)}_{1,\dots,k}(y_1,&\dots,y_k) \\\text{ such that } (y_{k+1},\dots,y_d) \ge \textbf{n}_{d-k}\\
	\implies (y_1,\dots,y_k,y_{k+1},\dots,y_d) \in \comp(&Y)
	\end{align*}
	
	Since $(y_1,\dots,y_k) \ge (x_1,\dots,x_k)$ and $(y_{k+1},\dots,y_d) \ge \textbf{n}_{d-k} \ge (x_{k+1},\dots,x_d)$ it follows that if we set $y$ to be $(y_1,\dots,y_d)$ then $y \in \comp(Y)$ and $y \ge x$, thereby proving our claim.	
\end{proof}

\subsubsection*{Fifth step}

\begin{lemma}
	$$|\refl(X)|_{A_d} \le q(|X|_{\pow(\nn^d)})$$ where $q(x) = (x+1)^d$.
\end{lemma}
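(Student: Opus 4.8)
The plan is to reduce the statement to a single coordinate bound and a trivial cardinality count. Write $N := |X|_{\pow(\nn^d)}$. Since $A_d = \prod_{i=1}^{d} \pow(\nn^i)^{\binom{d}{i}}$ and the norm of a cartesian product is the maximum of the component norms (Definition~\ref{def:sumandproduct}),
\[
|\refl(X)|_{A_d} = \max_{1 \le k \le d}\ \max_{1 \le i_1 < \dots < i_k \le d}\ |X_{i_1,\dots,i_k}|_{\pow(\nn^k)},
\]
and by the definition of the majoring norm (Definition~\ref{def:majandmin}) we have $|X_{i_1,\dots,i_k}|_{\pow(\nn^k)} = \max\bigl(\card(X_{i_1,\dots,i_k}),\ \max_{z \in X_{i_1,\dots,i_k}} |z|_{\nn^k}\bigr)$. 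So it suffices to prove the claim that \emph{every coordinate of every element of every $X_{i_1,\dots,i_k}$ is at most $N$}: this gives $\max_{z}|z|_{\nn^k} \le N$ and $X_{i_1,\dots,i_k} \subseteq \{0,\dots,N\}^k$, hence $\card(X_{i_1,\dots,i_k}) \le (N+1)^k \le (N+1)^d$, so $|X_{i_1,\dots,i_k}|_{\pow(\nn^k)} \le (N+1)^d = q(N)$ and taking the maximum over all components finishes the lemma.

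To prove the claim, fix a sequence which (after relabelling) we may take to be $1 < 2 < \dots < k$, fix $(m_1,\dots,m_k) \in X_{1,\dots,k}$ with realizer $x \in \comp(X)$, and suppose towards a contradiction that $m_p > N$. Note first that $N$ dominates every coordinate of every element of $X$, since $|X|_{\pow(\nn^d)} \ge |x'|_{\nn^d}$ for all $x' \in X$. When $k = 1$, the argument is direct: the condition $\downarrow \proj^{\comp(X)}_{1}(m_1) = \nn^{d-1}$ yields $z \in \comp(X)$ with $z_1 = m_1 > N$ and $z_j \ge N$ for all $j \ge 2$; then every $x' \in X$ satisfies $x'_1 \le N < z_1$ and $x'_j \le N \le z_j$ for $j \ge 2$, so $x' \le_{\nn^d} z$, contradicting $z \notin \uparrow X$. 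When $k \ge 2$, I would instead show that the vector $m' := (m_1,\dots,m_{p-1},m_{p+1},\dots,m_k)$ obtained by deleting the $p$-th entry already lies in $X_{1,\dots,\widehat{p},\dots,k}$; since $1,\dots,\widehat{p},\dots,k$ is a strict subsequence of $1,\dots,k$, this contradicts the clause in the definition of $X_{1,\dots,k}$ forbidding any strict-subsequence projection from lying in the corresponding set. Membership of $m'$ needs three checks: (a) some element of $\comp(X)$ carries the coordinates $m'$ — the realizer $x$ does; (b) for every strict subsequence of $1,\dots,\widehat{p},\dots,k$, which is also a strict subsequence of $1,\dots,k$, the corresponding projection of $m'$ (equivalently of $(m_1,\dots,m_k)$, as $p$ is absent) is not in the corresponding set, which holds because $(m_1,\dots,m_k) \in X_{1,\dots,k}$; and (c) $\downarrow\proj^{\comp(X)}_{1,\dots,\widehat{p},\dots,k}(m') = \nn^{d-k+1}$. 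For (c), given a target $\mathbf{w}$ on the free coordinates $p, k+1,\dots,d$, use $\downarrow\proj^{\comp(X)}_{1,\dots,k}(m_1,\dots,m_k) = \nn^{d-k}$ to obtain $z \in \comp(X)$ agreeing with $(m_1,\dots,m_k)$ on coordinates $1,\dots,k$ and dominating $\mathbf{w}$ on coordinates $k+1,\dots,d$; since $z_p = m_p > N$ dominates the $p$-th coordinate of every element of $X$, each $x' \in X$ must exceed $z$ at some coordinate $\ne p$, so raising the $p$-th coordinate of $z$ up to $w_p$ stays inside $\comp(X)$ and yields the required witness.

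Finally the cardinality bound needs no further work: by the claim, $X_{i_1,\dots,i_k} \subseteq \{0,\dots,N\}^k$, so it has at most $(N+1)^k \le (N+1)^d$ elements and all its elements have $\nn^k$-norm at most $N$, whence its $\pow(\nn^k)$-norm is at most $(N+1)^d$; assembling this as above gives $|\refl(X)|_{A_d} \le (N+1)^d = q(|X|_{\pow(\nn^d)})$. I expect the only delicate part to be the $k \ge 2$ case — specifically steps (b) and (c): keeping the ``strict subsequence of a strict subsequence'' bookkeeping correct, and executing the coordinate-raising argument for (c) (the same manoeuvre used for $S_1^X$ and $S_3^X$ in the $d = 2$ sketch). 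Everything else is routine once the reduction to the coordinate bound is in place.
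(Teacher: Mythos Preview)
Your proof is correct and follows essentially the same approach as the paper: both reduce to the coordinate bound $m_p \le |X|_{\pow(\nn^d)}$ and both use the same ``raise the large coordinate'' manoeuvre to derive a contradiction from $m_p > N$. The only difference is organisational: for $k \ge 2$ the paper first derives $\downarrow\proj^{\comp(X)}_{1,\dots,k-1}(x_1,\dots,x_{k-1}) \neq \nn^{d-k+1}$ (implicitly from the strict-subsequence clause) and then exhibits a single witness in that set, whereas you argue the contrapositive by showing $\downarrow\proj^{\comp(X)}_{1,\dots,\widehat{p},\dots,k}(m') = \nn^{d-k+1}$ and concluding $m' \in X_{1,\dots,\widehat{p},\dots,k}$ --- your version is slightly more explicit about why all three membership conditions hold, and your $k=1$ case is a bit cleaner than the paper's since it bypasses the use of equation~(\ref{eqn:extensive}).
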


\begin{proof}
	Let $X \in \pow(\nn^d)$. To bound the norm of $\refl(X)$, it is enough to bound the norm of every set of the form $X_{i_1,i_2,\dots,i_k}$. For ease of notation, we will only show that the norm of $X_{1,\dots,k}$ can be bounded. (The proof of the general case is just a syntactic modification of the following proof).
	Notice that to bound the norm of $X_{1,\dots,k}$ it is enough to bound the norm of every element in $X_{1,\dots,k}$ and the cardinality of $X_{1,\dots,k}$.
	
	Let $(x_1,\dots,x_k) \in X_{1,\dots,k}$. To bound the norm of $(x_1,\dots,x_k)$ it is enough to bound each $x_i$. For this purpose, we claim that
	\begin{equation} \label{eqn:boundclaim}
	\forall i, \ x_i \le |X|_{\pow(\nn^d)}
	\end{equation}
	
	Suppose equation (\ref{eqn:boundclaim}) is false and there exists $i$ such that $x_i > |X|_{\pow(\nn^d)}$. Without loss of generality assume that $i = k$. We consider two cases:\\
	
	\emph{Case 1: } $k = 1$: By equation (\ref{eqn:extensive}) we have that $\downarrow \comp(X) \neq \nn^d$ and so there exists $n \in \nn$ such that 
	\begin{equation} \label{eqn:compclaim}
	\textbf{n}_d \notin \downarrow \comp(X)
	\end{equation}
	
	We now have the following implications:
	
	\begin{align*}
	x_1 \in X_1 &\implies \downarrow\proj^{\comp(X)}_{1}(x_1) = \nn^{d-1}\\
	&\implies \exists (x_2,\dots,x_d) \in \proj^{\comp(X)}_1(x_1)  \\&\text{ such that } (x_2,\dots,x_d) \ge \textbf{n}_{d-1}\\
	&\implies (x_1,x_2,\dots,x_d) \in \comp(X)\\
	&\implies (x_1,\dots,x_d) \in \nn^d \setminus \uparrow X\\
	&\implies \forall (y_1,\dots,y_d) \in X, (y_1,\dots,y_d) \nleq (x_1,\dots,x_d)	
	\end{align*}

	Now since $x_1 > |X|_{\pow(\nn^d)}$ it has to be the case that 
	for every $(y_1,\dots,y_d) \in X, \ y_1 < x_1 < x_1+n+1$. This combined with the last implication gives us that 
	$$\forall (y_1,\dots,y_d) \in X, (y_1,y_2,\dots,y_d) \nleq (x_1+n+1,x_2,\dots,x_d)$$ 
	and so $(x_1+n+1,x_2,\dots,x_d) \in \comp(X)$. Since $(x_1+n+1,x_2,\dots,x_d) \ge \textbf{n}_d$, equation (\ref{eqn:compclaim}) is contradicted.\\

	\emph{Case 2: } $k > 1$. The proof is very similar to the first case. We have the following:
	\begin{align*}
	(x_1,\dots,x_k) \in X_{1,\dots,k} &\implies \downarrow \proj^{\comp(X)}_{1,\dots,k-1}(x_1,\dots,x_{k-1}) \neq \nn^{d-k+1}\\
	&\implies \exists \textbf{n}_{d-k+1} \text{ such that }\\& \hspace{0.75cm} \textbf{n}_{d-k+1} \notin \downarrow \proj^{\comp(X)}_{1,\dots,k-1}(x_1,\dots,x_{k-1})	
	\end{align*}

	Hence there exists $n \in \nn$ such that
	\begin{equation} \label{eqn:projclaim}
	\textbf{n}_{d-k+1} \notin \downarrow \proj^{\comp(X)}_{1,\dots,k-1}(x_1,\dots,x_{k-1}) 
	\end{equation}
	
	We now have the following implications:
	\begin{align*}
	(x_1,\dots,x_k) \in X_{1,\dots,k} &\\ 
	&\hspace{-1cm}\implies \downarrow\proj^{\comp(X)}_{1,\dots,k}(x_1,\dots,x_k) = \nn^{d-k}\\
	&\hspace{-1cm}\implies \exists (x_{k+1},\dots,x_d) \in \proj^{\comp(X)}_{1,\dots,k}(x_1,\dots,x_k) \\& \text{ such that } (x_{k+1},\dots,x_d) \ge \textbf{n}_{d-k}\\
	&\hspace{-1cm} \implies (x_1,\dots,x_k,x_{k+1},\dots,x_d) \in \comp(X)\\
	&\hspace{-1cm} \implies (x_1,\dots,x_d) \in \nn^d \setminus \uparrow X\\
	&\hspace{-1cm} \implies \forall (y_1,\dots,y_d) \in X, \\& (y_1,\dots,y_d) \nleq (x_1,\dots,x_d)	
	\end{align*}

	Now since $x_k > |X|_{\pow(\nn^d)}$, it has to be the case that 
	for all $(y_1,\dots,y_d) \in X, \ y_k < x_k < x_k+n+1$. This combined with the previous implication gives us that 
	\begin{multline*}
		\forall (y_1,\dots,y_d) \in X, (y_1,y_2,\dots,y_d) \nleq (x_1,\dots,x_{k-1},\\x_k+n+1,x_{k+1},\dots,x_d)	
	\end{multline*}
	and so $(x_1,\dots,x_{k-1},x_k+n+1,x_{k+1},\dots,x_d) \in \comp(X)$ which in turn implies that $(x_k+n+1,x_{k+1},\dots,x_d) \in$\\ $\proj^{\comp(X)}_{1,\dots,k-1}(x_1,\dots,x_{k-1})$. Since $(x_k+n+1,x_{k+1},\dots,x_d) \ge \textbf{n}_{d-k+1}$, equation (\ref{eqn:projclaim}) is contradicted.

\end{proof}

\section{Results and proofs for fast-growing hierarchies} \label{section:appcomp}

\subsection{Useful results for fast-growing hierarchies}

We present a collection of useful results for fast-growing classes before we proceed to prove the main theorems for majoring and minoring ordering.

Let $S(x)$ be the successor function.
Recall that $\{S_{\alpha}\}, \{S^{\alpha}\}, \{F_{\alpha}\}$ denote the Hardy, Cichon and fast-growing hierarchies for the successor function respectively. Notice that
\begin{equation} \label{eqn:HardyCichon}
	\forall x \in \nn, \ \forall \alpha < \epsilon_0, \ S_{\alpha}(x) = S^{\alpha}(x) - x
\end{equation}

\begin{prop} \label{prop:altcharac} (see Sections 2.2 and 5.3.1 of \cite{BeyondElem}):
	The class $\fastf_{\alpha}$ is the class of functions obtained by closure under substitution and limited recursion of the constant, sum, projections and the function $F_{\alpha}$. Each class $\fastf_{\alpha}$ is closed under (finite) composition. 
	Also for every $0 < \alpha < \beta < \epsilon_0$, we have $\fastf_{\alpha} \subseteq \fastf_{\beta}$.
	Further $\fastf_2$ is the 
	set of all elementary functions and 
	$\bigcup_{\alpha < \omega} \fastf_{\alpha}$ is the set of all primitive recursive functions.
	
\end{prop}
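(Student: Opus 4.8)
\noindent\textit{Proof idea.}
The plan is to reduce all five assertions to a single workhorse lemma --- that $F_\alpha$ is \emph{honest}, i.e.\ computable by a deterministic Turing machine in time bounded by a fixed finite iterate $F_\alpha^{c(\alpha)}$ of itself --- together with the Grzegorczyk / L\"ob--Wainer style characterisation. Concretely, I would let $\mathcal{C}_\alpha$ be the least class of functions containing the constants, the binary sum, the projections and $F_\alpha$, and closed under substitution and limited recursion, and first establish $\mathcal{C}_\alpha = \fastf_\alpha$; the closure of $\fastf_\alpha$ under composition then follows at once, since substitution closure subsumes composition. The inclusion $\mathcal{C}_\alpha \subseteq \fastf_\alpha$ goes by induction on the build-up of $\mathcal{C}_\alpha$: the base functions run in linear time (hence within $F_\alpha$ once $\alpha \ge 1$), $F_\alpha$ itself lies in $\fastf_\alpha$ by the honesty lemma, a substitution is absorbed using $F_\alpha^{a} \circ F_\alpha^{b} = F_\alpha^{a+b}$ together with monotonicity and inflationarity of $F_\alpha$, and a limited recursion with bound $b \in \mathcal{C}_\alpha$ costs at most $(\text{length of the recursion}) \cdot F_\alpha^{c}(\cdots) \le F_\alpha^{c+1}(\cdots)$ because $F_\alpha$ swallows multiplication. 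For the converse, a machine running in time $F_\alpha^{c}(n)$ is simulated by a \emph{single} limited recursion over the step counter: the configuration at step $t+1$ is an elementary function of the configuration at step $t$, every configuration has size bounded by the clock $F_\alpha^{c}(n) \in \mathcal{C}_\alpha$, and elementary functions lie in $\mathcal{C}_\alpha$ for $\alpha \ge 2$ (the finitely many low levels being handled by hand).

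For the monotonicity statement $\fastf_\alpha \subseteq \fastf_\beta$ when $0 < \alpha < \beta$, I would again use the characterisation, so that it suffices to show $F_\alpha \in \fastf_\beta$. A pointwise-at-$x$ analysis of the fast-growing hierarchy --- the exact analogue of Proposition~\ref{prop:pointwise} for the Cichon hierarchy --- gives $F_\alpha(x) \le F_\beta(x)$ for all but finitely many $x$; since $\fastf_\beta$ contains every elementary function it is closed under modification at finitely many arguments, so from $F_\alpha \in \fastf_\alpha$ (honesty again) we get $F_\alpha \in \fastf_\beta$, and feeding this into the closure properties of $\mathcal{C}_\beta = \fastf_\beta$ yields $\mathcal{C}_\alpha \subseteq \mathcal{C}_\beta$.

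The last two assertions are the classical identifications. For $\fastf_2$: one computes $F_0(x) = x+1$, $F_1(x) = 2x+1$, and $F_2(x) = F_1^{x+1}(x) = 2^{x+1}(x+1) - 1$, so the time bounds $F_2^{c}$ range over towers of exponentials of bounded height, and by Kalm\'ar's theorem the functions computable within such bounds are precisely the elementary ones. For $\bigcup_{\alpha < \omega} \fastf_\alpha$: each $F_n$ with $n < \omega$ is primitive recursive (a finite nesting of primitive recursions), so by the characterisation every function of $\fastf_n$ is primitive recursive; conversely every primitive recursive $f$ is dominated by some $F_n$ and is definable from $F_n$ by limited recursion, hence $f \in \fastf_n$ --- this is Grzegorczyk's theorem that $\bigcup_n \mathcal{E}^n$ equals the primitive recursive functions, read off through the hierarchy.

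The main obstacle is the honesty lemma and the ordinal induction it rests on: proving that $F_\alpha$ can be evaluated in time $F_\alpha^{c(\alpha)}$ requires analysing the recursion tree generated by $F_{\alpha+1}(x) = F_\alpha^{x+1}(x)$ and $F_\lambda(x) = F_{\lambda_x}(x)$, simultaneously bounding the number of unfoldings and the sizes of all intermediate values by $F_\alpha$-iterates, with the limit stages controlled through the fundamental sequences and the pointwise ordering. Everything else --- the closure bookkeeping, the finite-patching argument, and the two classical identifications --- is routine once the honesty lemma is in hand, which is precisely why the paper is content to defer it to~\cite{BeyondElem}.
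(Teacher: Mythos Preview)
The paper gives no proof of this proposition: it is stated as a collection of known facts, with a bare citation to Sections~2.2 and~5.3.1 of~\cite{BeyondElem}. Your sketch is therefore not comparable to anything in the paper itself, but it is correct and is essentially the standard L\"ob--Wainer argument that the cited survey presents: the honesty of $F_\alpha$ is indeed the one nontrivial ingredient, and the remaining points (closure under composition via substitution, hierarchy monotonicity via eventual pointwise domination, the Kalm\'ar and Grzegorczyk identifications at levels~$2$ and~$<\omega$) are exactly how~\cite{BeyondElem} handles them. You already note in your final sentence that the paper defers the proof; that is the whole story here.
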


For the rest of this section, let $g$ be a fixed strictly increasing inflationary control function such that $g(x) \ge S(x)$. The following lemma is easy to see by induction on ordinals:
\begin{lemma} \label{lemma:indfastgrowing}
	For all $\alpha < \epsilon_0$ and for all $x \in \nn, \ f_{g,\alpha}(x) \ge F_{\alpha}(x), \ g_{\alpha}(x) \ge S_{\alpha}(x), \ g^{\alpha}(x) \ge S^{\alpha}(x)$.
\end{lemma}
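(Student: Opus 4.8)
The plan is to prove the lemma by transfinite induction on $\alpha$, and more specifically to establish the following general monotonicity principle from which the three stated inequalities are immediate special cases: if $h_1$ and $h_2$ are strictly increasing inflationary functions with $h_1(x) \le h_2(x)$ for every $x$, then $(h_1)^\alpha(x) \le (h_2)^\alpha(x)$, $(h_1)_\alpha(x) \le (h_2)_\alpha(x)$ and $f_{h_1,\alpha}(x) \le f_{h_2,\alpha}(x)$ for all $\alpha < \epsilon_0$ and all $x \in \nn$. Taking $h_1 = S$ and $h_2 = g$ then yields the lemma, since $g$ is strictly increasing inflationary and satisfies $g(x) \ge S(x)$ by hypothesis. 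The three claims do not interact, so each is handled by its own induction; the structure is the same in all cases, and the key observation is that the fundamental sequences $\lambda(x)$ and the predecessor function $P_x$ are fixed independently of the base function, so precisely the same ordinals occur in the recursions for $h_1$ and for $h_2$.

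For $\alpha = 0$ everything is immediate: $(h_i)^0(x) = x$ and $(h_i)_0(x) = 0$ are independent of $h_i$, while $f_{h_1,0}(x) = h_1(x) \le h_2(x) = f_{h_2,0}(x)$. For the inductive step I treat the Hardy and Cichon hierarchies uniformly, using that for $\alpha > 0$ one has $(h_i)^\alpha(x) = (h_i)^{P_x(\alpha)}(h_i(x))$ with $P_x(\alpha) < \alpha$; the induction hypothesis applied to the strictly smaller ordinal $P_x(\alpha)$ gives $(h_1)^{P_x(\alpha)}(h_1(x)) \le (h_2)^{P_x(\alpha)}(h_1(x))$, and monotonicity of $(h_2)^{P_x(\alpha)}$ in its argument together with $h_1(x) \le h_2(x)$ upgrades this to $(h_2)^{P_x(\alpha)}(h_1(x)) \le (h_2)^{P_x(\alpha)}(h_2(x)) = (h_2)^\alpha(x)$; the Cichon case is identical after adding $1$ on both sides. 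For the fast-growing hierarchy I split into cases: when $\alpha = \beta+1$ a side induction on $j$ shows $f^{j}_{h_1,\beta}(x) \le f^{j}_{h_2,\beta}(x)$ (the induction hypothesis gives $f_{h_1,\beta} \le f_{h_2,\beta}$ pointwise, and monotonicity of $f_{h_2,\beta}$ carries the inequality through the iteration), and evaluating at $j = x+1$ gives $f_{h_1,\beta+1}(x) \le f_{h_2,\beta+1}(x)$; when $\alpha = \lambda$ is a limit, $f_{h_i,\lambda}(x) = f_{h_i,\lambda(x)}(x)$ and the induction hypothesis applies directly since $\lambda(x) < \lambda$.

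The only ingredient beyond unwinding the definitions is the fact that each of $(h)^\alpha$, $(h)_\alpha$, $f_{h,\alpha}$ is non-decreasing in its numeric argument, which is used repeatedly to replace $h_1(x)$ by the larger value $h_2(x)$. For the Cichon hierarchy this is already available (Proposition \ref{prop:funfactshier}, Proposition \ref{prop:pointwise}); for the Hardy and fast-growing hierarchies it is standard and is proved by the same transfinite induction (with a side induction on the iteration count in the successor case), so it can simply be carried along in the statement being proved. I expect this monotonicity bookkeeping to be the only point requiring a little care; the rest is a routine, essentially syntactic, verification.
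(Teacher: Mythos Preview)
Your proposal is correct and follows the same approach the paper indicates: transfinite induction on $\alpha$, exploiting that the fundamental sequences and predecessor operator are independent of the base function, together with monotonicity of each hierarchy in its numeric argument. The paper merely asserts the lemma is ``easy to see by induction on ordinals'' without spelling out the steps; your slight generalization to arbitrary $h_1 \le h_2$ and the explicit handling of the monotonicity bookkeeping are welcome but do not constitute a different route.
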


The following facts are known about the fast growing hierarchy and the fast growing function classes (see section 2.3.3 of \cite{BeyondElem}, Lemmas C.12 and C.15 of \cite{ICALP} respectively):

\begin{prop} \label{prop:eventualdom}
	For any $0 < \alpha < \beta < \epsilon_0$, $F_{\beta} \in \fastf_{\beta}$ and $F_{\beta} \notin \fastf_{\alpha}$. Further if $h$ is a function such that \emph{eventually} $h(x) \ge F_{\beta}(x)$, then $h \notin \fastf_{\alpha}$ as well. (Consequently $F^c_{\beta} \notin \fastf_{\alpha}$ as well for any constant $c$).
\end{prop}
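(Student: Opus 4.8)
The plan is to reduce the proposition to two ingredients, both essentially classical for the fast-growing hierarchy: a \emph{honesty} (robustness) lemma bounding every function of $\fastf_\alpha$ by a fixed finite iterate of $F_\alpha$, and the observation that $F_\beta$ eventually strictly dominates every such iterate when $\alpha<\beta$. The statement $F_\beta\in\fastf_\beta$ is immediate from the alternative characterization recalled in Proposition~\ref{prop:altcharac}, since $F_\beta$ is literally one of the generators of $\fastf_\beta$ and the class is closed under (in particular) the trivial substitution.

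For the honesty lemma I would argue by structural induction on the construction of $f\in\fastf_\alpha$ from the generators (constants, sum, projections, $F_\alpha$) via substitution and limited recursion, as in Proposition~\ref{prop:altcharac}, proving: for every $f$ there is a constant $c=c(f)$ with $f(\vec x)\le F_\alpha^{c}(\max \vec x)$ for all sufficiently large $\max\vec x$. This uses only that $F_\alpha$ is strictly increasing and inflationary (analogues of the facts in Proposition~\ref{prop:funfactshier}): base cases are clear (for $\alpha\ge 1$ and $\max(x,y)$ large one has $x+y\le 2\max(x,y)\le F_\alpha(\max(x,y))$); substitution composes bounds via $F_\alpha^{c_0}\circ F_\alpha^{c_1}=F_\alpha^{c_0+c_1}$ after taking the maximum over argument positions; and limited recursion costs nothing, since by definition the recursively defined function is already pointwise below a function already controlled by the induction hypothesis. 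Next I would show that for $\alpha<\beta<\epsilon_0$ and every constant $c$ one has $F_\alpha^{c}(x)<F_\beta(x)$ for all large $x$: for $x\ge c$, iterating $F_\alpha$ strictly more often gives $F_\alpha^{c}(x)<F_\alpha^{x+1}(x)=F_{\alpha+1}(x)$ (strictness from inflationarity), and since $\alpha+1\le\beta$ the pointwise-ordering facts for the fast-growing hierarchy (the $F$-analogues of Propositions~\ref{prop:funfactspointwise} and~\ref{prop:pointwise}, obtained via the identity $S^{\alpha}(x)=S_\alpha(x)+x$ tying Hardy and Cichon functions together) yield $F_{\alpha+1}(x)\le F_\beta(x)$ eventually. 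Chaining these, if $F_\beta\in\fastf_\alpha$ then $F_\beta(x)\le F_\alpha^{c}(x)$ eventually for some $c$, contradicting $F_\alpha^{c}(x)<F_\beta(x)$; hence $F_\beta\notin\fastf_\alpha$.

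The remaining assertions then drop out. If $h(x)\ge F_\beta(x)$ for all large $x$ and $h\in\fastf_\alpha$, the honesty lemma gives $h(x)\le F_\alpha^{c}(x)$ for large $x$, whence $F_\beta(x)\le F_\alpha^{c}(x)$ eventually, a contradiction; so $h\notin\fastf_\alpha$. Taking $h=F_\beta^{c}$ and noting $F_\beta^{c}(x)\ge F_\beta(x)$ because $F_\beta$ is inflationary, we conclude $F_\beta^{c}\notin\fastf_\alpha$ for every constant $c$. The one genuinely delicate point is the honesty lemma itself, specifically the bookkeeping of iteration constants through nested substitutions and the verification that \emph{limited} recursion (rather than unrestricted primitive recursion) cannot escape the class of $F_\alpha$-iterate-bounded functions; everything else is monotonicity of $F_\alpha$ together with the ordinal-indexed growth comparisons already available in the paper's toolbox. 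Since this material is entirely standard, in the final write-up one may simply cite Section~2.3.3 of~\cite{BeyondElem} and Lemmas~C.12 and~C.15 of~\cite{ICALP} in place of reproducing the induction.
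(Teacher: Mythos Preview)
Your proposal is correct and essentially matches the paper's treatment: the paper does not actually prove this proposition but simply cites Section~2.3.3 of \cite{BeyondElem} and Lemmas~C.12 and~C.15 of \cite{ICALP}, which are precisely the references you identify in your final sentence. Your intermediate sketch (honesty lemma by structural induction on the $\fastf_\alpha$ generators, then domination of $F_\alpha^c$ by $F_{\alpha+1}\le F_\beta$) is the standard L\"ob--Wainer argument and is sound; it simply goes further than the paper, which defers entirely to the literature.
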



\begin{theorem} \label{theorem:impoupperbound}
	Let $g$ be eventually bounded by a function in $\fastf_{\gamma}$ where $\gamma > 0$. Then,
	\begin{itemize}
		\item If $\alpha < \omega$ then $f_{g,\alpha}$ is bounded by a function in $\fastf_{\gamma + \alpha}$ and 
		\item If $\gamma < \omega$ and $\alpha \ge \omega$ then $f_{g,\alpha}$ is bounded by a function in $\fastf_{\alpha}$.
	\end{itemize}
\end{theorem}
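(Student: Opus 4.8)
The plan is to prove both statements by (transfinite) induction on $\alpha$, but first I would get rid of the word ``eventually''. A routine induction on $\alpha$ shows that $f_{\cdot,\alpha}$ is monotone in its base function: if $g_1\le g_2$ pointwise and both are strictly increasing and inflationary, then $f_{g_1,\alpha}(x)\le f_{g_2,\alpha}(x)$ for all $x$; the same induction shows each $f_{g,\alpha}$ is itself strictly increasing and inflationary. Now take $\hat g\in\fastf_\gamma$ with $g(x)\le\hat g(x)$ for $x\ge x_0$, and replace $\hat g$ by a suitable dominating function in $\fastf_\gamma$ (a running maximum together with the identity), which is possible by the closure properties in Proposition~\ref{prop:altcharac}, so that $\hat g$ is strictly increasing, inflationary and $\ge S$. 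Since every intermediate value appearing when $f_{g,\alpha}(x)$ is unfolded is $\ge x$ (inflationarity), for $x\ge x_0$ all applications of $g$ in that unfolding happen at arguments $\ge x_0$, hence $f_{g,\alpha}(x)\le f_{\hat g,\alpha}(x)$ for $x\ge x_0$. Thus it suffices to prove both claims under the assumption that $g\in\fastf_\gamma$ is strictly increasing and inflationary, which I adopt henceforth.

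\textbf{Part 1 ($\alpha<\omega$).}
I would induct on the natural number $\alpha$. For $\alpha=0$ this is $f_{g,0}=g\in\fastf_\gamma$. For the step, $f_{g,\alpha+1}(x)=(f_{g,\alpha})^{x+1}(x)$, so it is enough to prove the sub-claim: \emph{for every $\delta\ge1$, if $\phi\in\fastf_\delta$ is strictly increasing and inflationary, then $x\mapsto\phi^{x+1}(x)$ lies in $\fastf_{\delta+1}$.} To see this, use the algebraic characterisation (Proposition~\ref{prop:altcharac}) to get a constant $c$ with $\phi(y)\le F_\delta^{c}(y)$ for all large $y$; then for large $x$, inflationarity gives $\phi^{x+1}(x)\le F_\delta^{c(x+1)}(x)$, and since $c(x+1)\le F_{\delta+1}(x)$ for large $x$ (as $\delta+1\ge2$) and $x\le F_{\delta+1}(x)$, the identity $F_{\delta+1}(y)=F_\delta^{y+1}(y)$ together with monotonicity of $F_\delta$ yields $\phi^{x+1}(x)\le F_\delta^{F_{\delta+1}(x)+1}(F_{\delta+1}(x))=F_{\delta+1}(F_{\delta+1}(x))=F_{\delta+1}^{2}(x)$. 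As $x\mapsto\phi^{x+1}(x)$ is obtained from $\phi\in\fastf_\delta\subseteq\fastf_{\delta+1}$ by a recursion bounded by $F_{\delta+1}^{2}\in\fastf_{\delta+1}$, it belongs to $\fastf_{\delta+1}$ by closure under limited recursion and substitution (Proposition~\ref{prop:altcharac}), after adjusting finitely many initial values. Applying the sub-claim with $\delta=\gamma+\alpha\ge1$ completes Part~1.

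\textbf{Part 2 ($\gamma<\omega$, $\alpha\ge\omega$).}
The key preliminary fact is a \emph{uniform} bound at the finite levels: there is a constant $c$, depending only on $g$, with $f_{g,k}(y)\le F_{\gamma+1+k}(y)$ for every $k<\omega$ and every $y\ge c$. This is proved by induction on $k$, the base case using $g\le F_{\gamma+1}$ eventually (since $g\in\fastf_\gamma$ while $F_{\gamma+1}\notin\fastf_\gamma$, by Proposition~\ref{prop:eventualdom}), and the step using $F_{\gamma+1+(k+1)}(y)=F_{\gamma+1+k}^{y+1}(y)$ and inflationarity, so that the threshold $c$ need not grow with $k$. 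From this the case $\alpha=\omega$ follows: the fundamental sequence of $\omega$ is $x\mapsto x+1$, so $f_{g,\omega}(x)=f_{g,x+1}(x)\le F_{\gamma+x+2}(x)\le F_\omega(x+\gamma+1)$ for $x\ge c$ (using $F_\omega(y)=F_{y+1}(y)$), and $x\mapsto F_\omega(x+\gamma+1)$ lies in $\fastf_\omega$ because $\fastf_\omega$ contains $F_\omega$ and the affine maps and is closed under composition (Proposition~\ref{prop:altcharac}). For $\alpha>\omega$ I would continue by transfinite induction: successor steps are handled exactly by the sub-claim of Part~1, which is valid for every $\delta\ge1$ including infinite $\delta$, and the limit step $f_{g,\lambda}(x)=f_{g,\lambda_x}(x)$ remains.

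\textbf{Main obstacle.}
The hard part is precisely that limit step. The index $\lambda_x$ varies with $x$ and has norm of order $x$, so one cannot feed it into an induction hypothesis phrased ``for arguments past the norm of the ordinal''; the cure is to carry the induction with a single uniform hypothesis over all ordinals below $\lambda$ (``$f_{g,\beta}(y)$ is bounded, for $y$ past one constant, by a fixed $\fastf_\beta$-function, such as $F_{\beta+\gamma+1}$ for finite $\beta$ and $F_\beta^{O(1)}$ for $\beta\ge\omega$'') and then to compare the resulting $F_{\lambda_x}$-bound with an $F_\lambda$-bound via the pointwise-at-$x$ ordering, using $F_\lambda(y)=F_{\lambda_x}(y)$ and the monotonicity of $F$ in its subscript along fundamental sequences (Propositions~\ref{prop:funfactspointwise} and \ref{prop:pointwise} and Lemma~\ref{lemma:icalp}). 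Alternatively, Part~2 can be obtained by invoking the standard robustness of the fast-growing hierarchy at infinite levels: by Part~1 each $f_{g,k}$ is primitive recursive, and a primitive-recursive change of base function does not escape $\fastf_\alpha$ for $\alpha\ge\omega$ (see \cite{BeyondElem}, \cite{Lob}). Everything outside this limit step is routine given the identity $F_{\delta+1}(y)=F_\delta^{y+1}(y)$, the algebraic characterisation of $\fastf_\alpha$, and the basic monotonicity facts already recorded.
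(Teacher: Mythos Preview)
The paper does not prove this theorem: it is stated as a known result, with the sentence preceding Proposition~\ref{prop:eventualdom} attributing it to Lemmas~C.12 and~C.15 of~\cite{ICALP}. There is therefore no in-paper argument to compare against; your proposal is a reconstruction of the proof in that cited reference.

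That reconstruction is sound. The preliminary reduction and Part~1 are correct as written. In Part~2 the uniform-threshold lemma for the finite levels and the case $\alpha=\omega$ are fine, and you correctly isolate the only non-routine point, the limit step for $\alpha>\omega$. Both cures you sketch are precisely how the cited lemmas proceed: either carry through the induction a uniform hypothesis of the form $f_{g,\beta}(y)\le F_\beta^{\,c}(y)$ with $c$ independent of $\beta$ and use the pointwise-at-$x$ ordering to absorb $\lambda_x$ into $\lambda$, or invoke directly the base-function robustness of the fast-growing hierarchy at infinite indices from~\cite{Lob,BeyondElem}. There is no missing idea here; if you pursue the first option, the one thing to watch is that the constant $c$ does not increase across limit stages, which is exactly what the pointwise ordering (Propositions~\ref{prop:funfactspointwise},~\ref{prop:pointwise} and Lemma~\ref{lemma:icalp}) guarantees.
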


Using these facts we now prove the required theorems.

\subsection{Proof of theorem \ref{theorem:complowboundmaj}} \label{proof:complowboundmaj}

Fix a $d > 1$.
Let $\alpha = \omega^{\omega^{d-1}}$ and let $n$ be a sufficiently large natural number. By lemma \ref{lemma:indfastgrowing} we have that $H_{\alpha}(n) \le g_{\alpha}(n)$. By equation (\ref{eqn:HardyCichon}), $H_{\alpha}(n) = H^{\alpha}(n) - n$ and by proposition \ref{prop:funfactshier} we have that $H^{\alpha}(n) = F_{\omega^{d-1}}(n)$. Therefore $F_{\omega^{d-1}}(n) - n \le g_{\alpha}(n)$. 
By theorem \ref{theorem:lowboundmaj} we have that $g_{\alpha}(n) \le L_{(\pow(\nn^d), \maj),\varphi \circ g}(\varphi(n))$ which proves the first part.

For the second part, suppose $L_{(\pow(\nn^d), \maj), \varphi \circ g} \in \fastf_{\alpha}$ for some $\alpha < \omega^{d-1}$. 
Since $d > 1$, 
without loss of generality, we can let $2 \le \alpha < \omega^{d-1}$. 
Proposition~\ref{prop:altcharac} then implies 
that \\
$L_{(\pow(\nn^d), \maj), \varphi \circ g} (\varphi(n)) + n \in \fastf_{\alpha}$.
Since $F_{\omega^{d-1}}(n) \le$ \\
$L_{(\pow(\nn^d), \maj), \varphi \circ g}(\varphi(n)) + n$, proposition \ref{prop:eventualdom} now implies a contradiction.

\subsection{Proof of theorem \ref{theorem:compuppboundmaj}} \label{proof:compuppboundmaj}

Let $h(x) = 4x \cdot g(x)$ and let $\alpha=\omega^{(\omega^{d-1}) \cdot k}$. Since $g$ is primitive recursive, so is $h$ and so by proposition \ref{prop:altcharac}, $h \in \fastf_{\gamma}$ for some $\gamma < \omega$. Notice that $o(\pow(\nn^d)^k)$ is $\alpha$ and $\alpha$ is $4dk$-lean.

Let $n$ be a sufficiently large number. Hence, by theorem \ref{theorem:uppboundmaj} we have that $L_{(\pow(\nn^d)^k,\maj),g}(n) \le h_{\alpha}(4dkn)$. Now proposition \ref{prop:funfactshier} implies that $h_{\alpha}(4dkn) \le h^{\alpha}(4dkn) = f_{h,(\omega^{d-1}) \cdot k}(4dkn)$. Hence $L_{(\pow(\nn^d)^k,\maj),g}(n) \le f_{h,(\omega^{d-1}) \cdot k}(4dkn)$. Now applying  theorem \ref{theorem:impoupperbound} (and proposition \ref{prop:altcharac}) gives us the required upper bound.

\subsection{Proof of theorem \ref{theorem:complowboundmin}}
\label{proof:complowboundmin}

By theorem \ref{theorem:lowboundmin} we have that
\begin{equation} \label{eqn:majtomin}
	L_{(\pow(\nn^d),\maj),\varphi \circ g}(\varphi(n)) \le L_{(\pow(\nn^d),\smy),(p \circ \varphi \circ g)}(p(\varphi(n)))	
\end{equation}

By theorem \ref{theorem:complowboundmaj} we have that
\begin{equation} \label{eqn:fasttomaj}
	F_{\omega^{d-1}}(n) - n \le L_{(\pow(\nn^d), \maj),\varphi \circ g}(\varphi(n))
\end{equation}

Combining these two equations, we get the first part of the theorem.
By the same argument used in the proof of theorem~\ref{theorem:complowboundmaj},
we can show that if $L_{(\pow(\nn^d), \smy),p \circ \varphi \circ g} \in \fastf_{\alpha}$ for some $\alpha < \omega^{d-1}$
then $F_{\omega^{d-1}} \in \fastf_{\alpha}$, thereby contradicting 
proposition \ref{prop:eventualdom}.

\subsection{Proof of theorem \ref{theorem:compuppboundmin}}
\label{proof:compuppboundmin}

Let $A = \pow(\nn^d)^k$ and let $\alpha = \omega^{\omega^{d-1} \cdot (2^d \cdot k)}$. Let $t(x) := 4x \cdot q(g(x))$. Notice that $t$ is primitive recursive and so $t \in \fastf_{\gamma}$ for some $\gamma < \omega$. Now for sufficiently large $n$, there exists a constant $c$ such that
\begin{align*}
L_{(A,\lesmy_{A}),g}(n) &\le t_{\alpha}(c \cdot g(n)^{2d}) & \text{ by theorem \ref{theorem:uppboundmin} }\\
&\le t^{\alpha}(c \cdot g(n)^{2d}) & \text{ by proposition \ref{prop:funfactshier}}\\
&=f_{t,\omega^{d-1} \cdot (2^d \cdot k)}(c \cdot g(n)^{2d}) & \text{ by proposition \ref{prop:funfactshier}}
\end{align*}

Combining theorem \ref{theorem:impoupperbound} and proposition \ref{prop:altcharac}, we get the required result.

\end{document}